\newtheorem{theorem}{Theorem}[section]
\newtheorem{definition}[theorem]{Definition}
\newtheorem{proposition}[theorem]{Proposition}
\newtheorem{lemma}[theorem]{Lemma}
\newtheorem{remark}{Remark}
\newtheorem{assumption}{assumption}
\newtheorem{corollary}{Corollary}[theorem]
\def\wt{\widetilde}
\def\wh{\widehat}
\def\wb{\overline}
\def\ul{\underline}
\newcommand{\1}{\mathbbm{1}} 
\def\E{\mathbb{E}} 
\newcommand{\norm}[1]{{\vert\kern-0.25ex\vert #1 \vert\kern-0.25ex\vert}}
\newcommand{\bignorm}[1]{{\big\vert\kern-0.25ex\big\vert #1 \big\vert\kern-0.25ex\big\vert}}
\newcommand{\opnorm}[1]{{\vert\kern-0.25ex\vert\kern-0.25ex\vert #1 \vert\kern-0.25ex\vert\kern-0.25ex\vert}}
\def\bbZ{\mathbb{Z}}
\newcommand{\calC}{\mathcal{C}} 
 \newcommand{\calF}{\mathcal{F}}
\title{Passive Market Impact: A Point Process Approach
}
\author
{Youssef Ouazzani Chahdi\footnote{MICS, CentraleSupélec, \texttt{youssef.ouazzani-chahdi@centralesupelec.fr}} \and Mathieu Rosenbaum\footnote{CMAP, Ecole Polytechnique, \texttt{mathieu.rosenbaum@polytechnique.edu}} \and Gr\'egoire Szymanski\footnote{DMATH, Université du Luxembourg, \texttt{gregoire.szymanski@uni.lu}}}
\begin{document}

\date{\today}

\maketitle

%
%
%

\begin{abstract}
While the market impact of aggressive orders has been extensively studied, the impact of passive orders—those executed through limit orders—remains less understood. The goal of this paper is to investigate passive market impact by developing a microstructure model connecting liquidity dynamics and price moves. A key innovation of our approach is to replace the traditional assumption of constant information content for each trade by a function that depends on the available volume in the limit order book. Within this framework, we explore scaling limits and analyze the market impact of passive metaorders. Additionally, we derive useful approximations for the shape of market impact curves, leading to closed-form formulas that can be easily applied in practice.
\end{abstract}

\textbf{Keywords:} Market impact, Metaorder, Market order flow, Hawkes processes, Long memory, High frequency data, Jump Markov process, Ergodic properties, Transaction costs analysis

\textbf{Mathematics Subject Classification (2020):} 60G55, 62P05, 91G80

\section{Introduction}

In financial markets, traders generally place their orders using market orders, also referred to as liquidity-taking trades, and limit orders, also known as liquidity-providing trades. As explained in Bouchaud et al. \cite{bouchaud2003fluctuations},  market orders are placed by liquidity takers who seek immediate execution, which comes at the expense of paying the bid-ask spread. These orders often exhibit long-range persistence due to strategies like order fragmentation, where a single large order, termed a metaorder, is broken into smaller, sequentially executed trades. Conversely, limit orders are used by liquidity providers, including market makers. Unlike market orders, limit orders do not guarantee immediate execution but instead rest in the order book until matched by a corresponding market order. This allows liquidity providers to avoid taking outright directional positions in the market. However, the traditional distinction between liquidity takers and liquidity providers is diminishing. Nowadays, market participants often engage in both roles, and algorithmic trading systems optimize in real time the choice between submitting limit orders and market orders.

Market impact refers to the fact that buy orders push on average the price up and sell orders push it down. Market impact stands out as a prominent transaction cost associated with the execution of metaorders, see Freyre-Sanders et al. \cite{freyre2004review}, Almgren et al. \cite{almgren2005direct}, Engle et al. \cite{robert2012measuring}, and Hey et al. \cite{hey2023cost}. The measurement and understanding of market impact have thus emerged as a central theme in quantitative finance, see Webster \cite{webster2023handbook} for a review. Qualitatively, market impact can be understood as a way to pass on information to the price: investors decide their strategy using a long term return on their investment anticipation and decide to rebalance their position accordingly using metaorders. In Gomes and Waelbroeck \cite{gomes2015market}, it is shown that the permanent impact of cash flow trades (that is trades that do not originate a particular directional view of the trader on the market, but are rather done for mechanical reasons such as hedging, risk management...) is negligible while the impact of informed trades remains large.

Market impact is inherently difficult to measure because of its noisy nature. Statistical studies usually focus more on the execution of metaorders, which induces a notable liquidity imbalance that results in price moves which can be statistically identified. However, during the execution of a given metaorder, many other orders are likely being traded simultaneously which creates a significant noise. Using a very careful statistical treatment, averaging over many metaorders eliminates part of this noise so that we can better identify universal properties of market impact, see for instance Almgren et al. \cite{almgren2005direct}, Bershova and Rakhlin \cite{bershova2013non}, Bacry et al. \cite{bacry2015market}, and Bucci et al. \cite{bucci2019crossover}. When trading a metaorder, these empirical studies show that the price mechanically follows the metaorder, exhibiting a concave shape peaking at the end of the metaorder, followed by a convex relaxation, see Bershova and Rakhlin \cite{bershova2013non}, Gatheral \cite{gatheral2010no}, and Moro et al. \cite{moro2009market}. During the execution of the metaorder, Almgren et al. \cite{almgren2005direct},  Hopman \cite{hopman2003essays}, Kyle and Obizhaeva \cite{kyle2023large}, Lillo et al. \cite{lillo2003master}, and Moro et al. \cite{moro2009market} identified a square root dependence in the volume. However, this square root dependence only holds for large volumes while the impact of small orders is proportional to the volume, see Benzaquen and Bouchaud \cite{benzaquen2018market} for instance where the authors proposed a more accurate approximation of the market impact of the form
\begin{equation}
\label{eq:market_impact_volume}
MI(Q_t) \approx c \sigma \bigg( \frac{Q_t}{V} \bigg)^{1/2} \mathcal{F}\bigg(\frac{Q_t}{V} \bigg),
\end{equation}
where $c$ is a constant of order 1, $\sigma$ is the daily volatility, $V$ the usual daily volume traded on the asset in normal conditions, $Q_t$ is the volume executed by a given metaorder at time $t$ and where $\mathcal{F}$ is monotonic and satisfies $\mathcal{F}(x) \approx \sqrt{x}$ when $x\to0$ and $\mathcal{F}(x) \to a$ when $x\to\infty$ for some $a > 0$.

Traditionally, most metaorders are executed using market orders, which has led to the development of quantitative models for market impact based on the market order flow, see for instance Jaisson \cite{jaisson2015market}. These models leverage the idea that the price response to a market order is mechanical: the execution of a market order creates an imbalance between the bid and the ask, ultimately resulting in a price change. Within this framework, it is natural to define the price as the expectation of the future order flow. Specifically, let $N^a_t$ (resp. $N^b_t$) denote the number of buy (resp. sell) market orders before time $t$. Then, the price is given by
\begin{equation}
\label{eq:price}
P_t 
= 
P_0 +
\lim\limits_{s\to\infty} \kappa  \mathbb{E}[N^{a}_s - N^{b}_s  |  \mathcal{F}_t ],
\end{equation}
where $ \mathcal{F}_t$ denotes the information available at time $t$. This expectation requires a suitable probabilistic model on $N^a$ and $N^b$, though it can still be understood as an empirical expectation from the perspective of the market participants. Here, $\kappa > 0$ is a constant that represents the price move created by a single order, effectively quantifying the amount of information conveyed in one given trade. In this simplified model, with all trades being implicitly considered equal, it is reasonable to assume that $\kappa$ is constant.

In recent years, Hawkes processes have proven to be a natural framework for modeling transaction arrival times, see Filimonov and Sornette \cite{filimonov2012quantifying,filimonov2015apparent}, Hardiman et al. \cite{hardiman2013critical}, and Jaisson and Rosenbaum \cite{jaisson2015limit,jaisson2016rough}. These processes were introduced in the context of price impact modeling by Jaisson \cite{jaisson2015market}, where $N^a$ and $N^b$ are assumed to be two independent Hawkes processes with self-exciting kernel $\varphi$. It is shown in the same work that \eqref{eq:price} can be computed explicitly, yielding
\begin{equation}
\label{eq:price:hawkes}
P_t 
= 
P_0 +
\lim\limits_{s\to\infty} \kappa  \mathbb{E}[N^{a}_s - N^{b}_s  |  \mathcal{F}_t ]
=
P_0 + \kappa 
\int_0^t 
\xi(t-s)  d (N^{a}_s - N^{b}_s)
\end{equation}
where $\xi(t) = 1 + ( 1 + \int_0^\infty \psi(s)  ds ) \int_t^\infty \varphi(s)  ds$ and $\psi = \sum_{k\geq 1} \varphi^{*k}$ where $\varphi^{*k}$ stands for the $k$ fold convolution of $\varphi$. In particular, note that $\xi(t) \to 1$ as $t \to \infty$ and therefore the permanent impact of a single order is given by $\kappa$. Note also that Equation \eqref{eq:price:hawkes} is a special case of propagator models, generalising the discrete-time propagator proposed in Bouchaud et al. \cite{bouchaud2003fluctuations}. This model was extensively studied in Jusselin and Rosenbaum \cite{jusselin2020noarbitrage}, and Durin et al. \cite{durin2023two} where it is shown that it is consistent with the characteristic shape of market impact given in \eqref{eq:market_impact_volume}. 

This yields the following question: What happens if an investor executes a given metaorder through limit orders instead of market orders? Using limit orders yields better execution prices, at the cost of a non-execution risk. Various contributions quantify the trade-off between price improvement and non-execution risk when replacing market orders with limit orders, including Gu\'eant et al. \cite{gueant2012optimal}, Guilbaud and Pham \cite{guilbaud2013optimal}, and Bayraktar and Ludkovski \cite{bayraktar2014liquidation}, while Cartea and Jaimungal \cite{cartea2013modelling}, Cont and Kukanov \cite{cont2017optimal}, and Guo et al. \cite{guo2017optimal} develop detailed models for the optimal placement of limit orders within the order book.

Several empirical studies have confirmed that the market impact of limit orders is comparable to that of market orders , see Eisler et al. \cite{eisler2012price}, and Said et al. \cite{said2017market}. However, intriguingly, propagator models such as \eqref{eq:price:hawkes} seem to create a price impact of the opposite sign for limit orders. Take for instance a buy metaorder that is passed through limit orders. Then, it would be available on the bid side, and therefore, when it is executed, the price would go down on average. This is a misconception because in our Hawkes model, market orders arrival times are independent of the current limit order book state, and therefore market orders would be unchanged by the presence of these limit orders. This is of course an unrealistic assumption and many works confirmed that the limit order book state influenced future price changes. Concretely, several effects interplaying in the price formation process create a price impact on the correct side for limit orders. First, the presence of an additional order on the limit order book creates an imbalance that means that it is harder for the price to move in that direction. This fact is confirmed for instance by empirical studies that show that the sign of the next price movement can be predicted by the current imbalance of the order book, see Burghardt et al. \cite{burghardt2006measuring}, Gould and Bonart \cite{gould2016queue}, Lehalle and Mounjid \cite{lehalle2017limit}, and Stoikov \cite{stoikov2018micro}. Moreover, the intensity at which market orders arrive depends on the volume available at the best bid and best ask prices, see Huang et al. \cite{huang2015simulating}. This effect is particularly true when a pile is almost empty. In that case, it attracts market orders with a very high probability, provoking a price change. 

In this paper, we propose a novel approach to model the market impact of limit orders. The idea consists in weighting the market order flow to take into account the current limit order book state. Specifically, we denote by $q^a_t$ and $q^b_t$ the ask and bid limit order book states at time $t$. In \eqref{eq:price}, the price jumps of a constant $\kappa$ if a market order is placed at time $t$. Here, we replace this constant $\kappa$ by $\kappa(q^a_t)$ and $\kappa(q^b_t)$ depending on the side of the market order. In Section \ref{sec:model}, we introduce a dynamic for $q^a$ and $q^b$ inspired by the Queue Reactive model in Huang et al. \cite{huang2015simulating} and we set
\begin{equation*}
P_t = P_0 + \lim_{T\to\infty}
\E\bigg[\int_0^T \kappa(q^a_s)  dN^a_s - \int_0^T \kappa(q^b_s)  dN^b_s \bigg| \calF_t \bigg].
\end{equation*}
Although this extension is conceptually straightforward, it introduces several technical challenges, as the volumes available in the ask and bid queues depend on the arrival of market orders. Nevertheless, we demonstrate that the price process is well-defined, see Theorem \ref{thm:finiteness:price:lim} for more details. In this model, $\kappa$ can be interpreted as a price resistance coefficient. This suggests that $\kappa$ should be a decreasing function, reflecting that it is harder for the price to move when available volume is higher. We use this approach to study market impact in this setup, as described in Section \ref{sec:mi}. Although a closed-form solution is not obtainable in this model due to its complexity, we derive approximations for the market impact in Section \ref{sec:applications} when the market is near its stationary state. In this scenario, we show that the average instantaneous impact of a unit-volume limit order is given by
\begin{equation}
\label{eq:immediate_impact}
    \pm \frac{c_\kappa}{c_\lambda} \times \frac{\mu}{1 - \norm{\varphi}_{L^1}}
\end{equation}
where $c_\kappa = \kappa'(q)$ represents the sensibility of the price stickiness with respect to volume changes, $c_\lambda$ quantifies the mean-reversion of the order book and $\frac{\mu}{1 - \norm{\varphi}_{L^1}}$ is the average intensity of a stationary version of the Hawkes process $N^a$.

Building on Jaisson and Rosenbaum \cite{jaisson2016rough}, Jusselin and Rosenbaum \cite{jusselin2020noarbitrage}, and Durin et al. \cite{durin2023two}, we derive in Section \ref{sec:scaling} the scaling limits of this model. This approach enables us to examine the long-term effects of executing a large metaorder in this market. Under additional assumptions, we show that the market impact of the execution of a large metaorder executed through limit orders passed with intensity $f$  is given by
\begin{equation}
\label{eq:scaling_impact}
    \mathrm{MI}_t = c_\kappa \int_0^t \int_0^{s} e^{c_\lambda (s-u)} f(u)  du  Y_s  ds
+ 
c_\kappa \int_t^\infty \int_0^{t} e^{c_\lambda (s-u)} f(u)  du  \E[Y_s|\calF_t]  ds,
\end{equation}
where $c_\kappa$ and $c_\lambda$ are the same as in \eqref{eq:immediate_impact}, $Y_s$ is the market instantaneous variance, and $\E[Y_s|\calF_t]$ denotes the forward variance curve. This formulation, where all variables can be estimated, allows brokers to assess a trading strategy \textit{a posteriori}.

The rest of this paper is structured as follows. In Section \ref{sec:model}, we introduce a new price model incorporating information from the limit order book, with an analysis of market impact in this context presented in Section \ref{sec:mi}. Section \ref{sec:scaling} then explores the scaling limits of the price and market impact models. Section \ref{sec:shape} presents useful approximations, including \eqref{eq:immediate_impact} and \eqref{eq:scaling_impact}, as well as an application for brokers. Finally, Section \ref{sec:estimation} proposes a statistical analysis of $\kappa$.

\section{A new micro-structure price model}
\label{sec:all_models}

\subsection{Price model}
\label{sec:model}

In this section, we define a novel price model based on the Hawkes propagator model introduced in \cite{jaisson2015market} taking into account the current limit order book state. For simplicity sake, we will assume here that all orders have the same volume, and we normalize this volume to one.

We first assume that the buy and sell market order flows follow Hawkes processes as in \cite{jaisson2015market,jusselin2020noarbitrage,durin2023two}. The arrival times of buy and sell market orders are the jump times of two independent Hawkes processes, denoted by $N^a$ (ask, i.e. buy market orders) and $N^b$ (bid, i.e. sell market orders). Both processes have the same baseline intensity $\mu \geq 0$ and self-exciting (non-negative) kernel $\varphi$ so that their intensities are given by
\begin{equation*}
\lambda^{a}_t = \mu + \int_0^{t-} \varphi(t-s)  dN^{a}_s
\;\;\text{ and }\;\;
\lambda^{b}_t = \mu + \int_0^{t-} \varphi(t-s)  dN^{b}_s.
\end{equation*}

We then consider a limit order book. We model the aggregated volumes available on the ask and bid side by $q^a$ and $q^b$. We model their dynamic in a similar fashion as in the Queue Reactive model \cite{huang2015simulating}. More precisely, we write
\begin{equation*}
q^a_t = L^{a}_t - C^{a}_t - N^{a}_t \quad \text{ and } \quad q^b_t = L^{b}_t - C^{b}_t - N^{b}_t
\end{equation*}
where $L^a_t$ and $L^b_t$ are the number of limit orders on the ask and the bid side before time $t$ and $C^a_t$ and $C^b_t$ are the number of cancel orders on the ask and the bid side before time $t$. Furthermore, we assume that $L^{a}$ and $L^{b}$ are two time-inhomogeneous Poisson processes whose intensities are given by 
\begin{equation*}
\lambda_t^{L,a} = \lambda^L(q_{t-}^{a})
\quad \text{ and } \quad 
\lambda_t^{L,b} = \lambda^L(q_{t-}^{b})
\end{equation*}
for some given function $\lambda^L$; and that $C^{a}$ and $C^{b}$ are two time-inhomogeneous Poisson processes whose intensities are given by 
\begin{equation*}
\lambda_t^{C,a} = \lambda^C(q_{t-}^{a})
\quad \text{ and } \quad 
\lambda_t^{C,b} = \lambda^C(q_{t-}^{b})
\end{equation*}
for some given function $\lambda^C$.

In what precedes, we implicitly assume that the Poisson point measures driving $N^a$, $N^b$, $L^a$, $L^b$, $C^a$ and $C^b$ are independent. In other words, we assume that there exist some independent Poisson point measures $\pi^{N,a}$, $\pi^{N,b}$, $\pi^{L,a}$, $\pi^{L,b}$, $\pi^{C,a}$ and $\pi^{C,b}$ such that for $x \in \{a, b\}$, we have
\begin{equation}
\label{eq:def:pointmeasure}
\begin{cases}
N^{x}_t = \int_0^t \int_0^\infty \1_{\{z \leq \lambda^x_s\}}  \pi^{N,x}(ds,dz),
\\
L^{x}_t = \int_0^t \int_0^\infty \1_{\{z \leq \lambda^{L,x}_s\}}  \pi^{L,x}(ds,dz),
\\
C^{x}_t = \int_0^t \int_0^\infty \1_{\{z \leq \lambda^{C,x}_s\}}  \pi^{C,x}(ds,dz).
\end{cases}
\end{equation}
By representing the Poisson process via a Poisson random measure, we may work pathwise. Indeed, for each $\omega$, the realization $\pi(\omega)$ is fixed, and the indicator $1_{\{z \leq \lambda_s\}}$ simply selects the atoms of $\pi(\omega)$ whose mark fall below the intensity $\lambda_s(\omega)$. In this way, once $\pi(\omega)$ is given, the jump times of the Poisson process are identified uniquely and explicitly.

This assumption is equivalent to assuming that jump times of $N^a$, $N^b$, $L^a$, $L^b$, $C^a$ and $C^b$ are almost surely mutually disjoint. We do not prove this statement here, and we do not prove that such a model exists, but we refer to Theorem 2 in \cite{durin2023two} for a proof in a very similar model, and to Jacod \cite{jacod1979calcul} for a proof of the independence between the random measures. Furthermore, we assume that these processes are defined on a rich enough probability space $(\Omega, \calF, (\calF_t)_t, \mathbb{P})$ such that the random measures $(\pi^{N,a}, \pi^{N,b}, \pi^{L,a}, \pi^{L,b}, \pi^{C,a}, \pi^{C,b})$ are adapted to $(\calF_t)_t$, in the sense that their restriction to $[0,t] \times [0,\infty)$ is $\calF_t$-measurable and their restriction to $(t, \infty) \times [0,\infty)$ is independent of $\calF_t$. Therefore, $\calF_t$ can be seen as the information known at time $t$.

Note also that this model allows for negative queue values. For instance, this can happen whenever $q_t^x > 0$, and the Hawkes process $N^x$ jumps $q_t^x + 1$ times before the limit process $L^x$ and the cancel process jumps. Since the intensity of the Hawkes process is always strictly positive, this can happen with strictly positive probability. However, this is not an issue. From a theoretical perspective, the behavior of the market is similar when $q_t^x$ is positive or negative. Moreover, when calibrated to the market $\lambda^L$ and $\lambda^C$ are such that the probability of a negative queue remains small. 

Inspired by previous papers \cite{jaisson2015market,jusselin2020noarbitrage,durin2023two}, we then define the price as the limit of the anticipation of the upcoming market orders with a weight that depends on the limit order book state at the time of execution of the market order. This weight encodes the fact that it is harder for the price to move in one direction when the available volume in this side of the limit order book is higher. Specifically, we introduce for all $t \geq 0$
\begin{equation}
\label{eq:price:lim}
P_t = P_0 + \lim_{T\to\infty}
\E\bigg[\int_0^T \kappa(q^a_s)  dN^a_s - \int_0^T \kappa(q^b_s)  dN^b_s \bigg |  \calF_t \bigg].
\end{equation}
for some function $\kappa$. In this expression, if $t$ is a jump time of $N^a$, then $\kappa(q^a_t)$ can be interpreted as the information content of the trade occurring at time $t$. It can also be seen as a measure of how hard it is to pass through the current limit order book and provoke a price change. 
Therefore we consider $\kappa_1$ to be decreasing, as it is more difficult to create a price change when the pile is bigger. 
\begin{remark}
Another suitable choice would be for $\kappa$ to depend on both queue levels and is given by
\begin{equation*}
\kappa(q^a_t, q^b_t) = \kappa_2 \bigg( \frac{q^a_t - q^b_t}{q^a_t + q^b_t} \bigg)
\end{equation*}
for a jump time of $N^a$ and $\kappa(q^b_s, q^a_s)$ for a jump time of $N^b$. Taking $\kappa_2$ decreasing would be supported by the fact that the sign of the next price movement can be predicted by the current imbalance of the order book \cite{burghardt2006measuring,gould2016queue,lehalle2017limit,stoikov2018micro}.     
However, this approach is more challenging theoretically and the interpretation for $\kappa$ is slightly less trivial. Therefore, in the following, we consider the price process \eqref{eq:price:lim}.
\end{remark}

Up to this point, it is unclear whether this limit is well defined. It is the case, but proving it requires some additional assumption on the functions $\lambda^L$ and $\lambda^C$. When $\kappa$ is constant, it was shown in \cite{jaisson2015market} that this limit is well-defined and is given explicitly by
\begin{equation*}
P_t 
=
P_0 + \kappa 
\int_0^t 
\xi(t-s)  d (N^{a}_s - N^{b}_s)
\end{equation*}
where $\xi(t) = 1 + ( 1 + \int_0^\infty \psi(s)  ds) \int_t^\infty \varphi(s)  ds$ and $\psi = \sum_{k\geq 1} \varphi^{*k}$ where $\varphi^{*k}$ stands for the $k$ fold convolution of $\varphi$. The proof involves technical computations around Hawkes processes but the finiteness of the limit is essentially down to the fact that the only activity of the Hawkes processes after time $t$ that influences this limit are the jumps which can be seen as offsprings of jumps happening before time $t$ in a cluster representation of Hawkes processes. The other jumps vanish because $N^{a}$ and $N^{b}$ have the same distribution. In our case the situation is more intricate because of the function $\kappa$: the jumps of the Hawkes processes $N^a$ and $N^b$ after time $t$ do still matter because they do not have the same weight. Consequently, in order to get the finiteness of the limit \eqref{eq:price:lim}, we need conditions that ensure that $(q^a, N^a)$ and $(q^b, N^b)$ have a stationary state and that the distribution of the queues converges fast enough to that stationary state after time $t$. Precise assumptions are specified in the following theorem.

\begin{theorem}
\label{thm:finiteness:price:lim}
For all integer $k > 0$, we define $\mathfrak{m}_k$ as
\begin{equation*}
\mathfrak{m}_k = \inf_{q \in \bbZ} \{ \lambda^L(q) - \lambda^L(q+k) + \lambda^C(q+k) -\lambda^C(q) \}
\end{equation*}
and assume that $\ul{\mathfrak{m}} = \inf_{k > 0} \mathfrak{m}_k > 0$. Suppose in addition that $\lambda^L$ is decreasing, $\lambda^C$ is increasing,  $\kappa$ is non-negative and bounded and that $\norm{\varphi}_{L^1} < 1$.
Then for every $t > 0$, the limit \eqref{eq:price:lim} is well defined and finite almost surely.
\end{theorem}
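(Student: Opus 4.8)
The plan is to separate the genuinely infinite-horizon part of \eqref{eq:price:lim} from the observed part and to prove that the former converges absolutely. I would write the conditional expectation in \eqref{eq:price:lim} as the sum of the observed term $\int_0^t\kappa(q^a_s)\,dN^a_s-\int_0^t\kappa(q^b_s)\,dN^b_s$, which is an a.s.\ finite integral over a bounded interval, and the tail $\E[\int_t^T(\kappa(q^a_s)\,dN^a_s-\kappa(q^b_s)\,dN^b_s)\mid\calF_t]$. Replacing the counting measures by their compensators (legitimate for each finite $T$ since $\kappa$ is bounded and a subcritical Hawkes process satisfies $\E[\int_t^T\lambda^x_s\,ds\mid\calF_t]<\infty$), it suffices to prove
\begin{equation*}
\int_t^\infty\big|\,\E[\kappa(q^a_s)\lambda^a_s-\kappa(q^b_s)\lambda^b_s\mid\calF_t]\,\big|\,ds<\infty .
\end{equation*}
The key tool is a coupling of the two conditional futures, since the assumption $\ul{\mathfrak{m}}>0$ together with the monotonicity of $\lambda^L,\lambda^C$ is precisely a contraction/ergodicity hypothesis.

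Conditionally on $\calF_t$, the processes $(q^a,N^a)$ and $(q^b,N^b)$ for $s\ge t$ are two evolutions of identical dynamics started from different states. I would realise them on one space by (i) driving $(L^a,C^a)$ and $(L^b,C^b)$ through a synchronous, monotone coupling of their Poisson measures, and (ii) coupling the two Hawkes processes via their branching representation so that all post-$t$ immigrants and the offspring of those immigrants coincide, the two market-order flows differing after $t$ only through the residual descendants $R^a,R^b$ of the jumps in $[0,t]$. Because $\norm{\varphi}_{L^1}<1$ the clusters are subcritical, so $K:=\E[R^a_\infty+R^b_\infty\mid\calF_t]<\infty$. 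Under this coupling, whenever $q^a_s=q^b_s$ the equal queue values make $\lambda^L,\lambda^C$ agree, so the monotone coupling of $(L,C)$ and the shared market orders keep the queues equal until the next residual market order.

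Next I would exploit $\ul{\mathfrak m}$. When $q^a_s>q^b_s$, the monotone coupling lets the gap decrease by one through $(L,C)$ at rate at least $\lambda^L(q^b_s)-\lambda^L(q^a_s)+\lambda^C(q^a_s)-\lambda^C(q^b_s)\ge\mathfrak m_{q^a_s-q^b_s}\ge\ul{\mathfrak m}$, while $(L,C)$ never widen it; the only widenings come from the residual market orders. I would therefore dominate $|q^a_s-q^b_s|$ by a process that jumps up by one at each of the finitely many residual times and decreases at the points of an independent rate-$\ul{\mathfrak m}$ Poisson process extracted from $\pi^{L},\pi^{C}$, floored at zero. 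An occupation-time estimate then yields
\begin{equation*}
\E\!\int_t^\infty\1_{q^a_s\ne q^b_s}\,ds\;\le\;K/\ul{\mathfrak m}\;<\;\infty ,
\end{equation*}
and, crucially, this relaxation clock is independent of the Hawkes intensities.

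Finally I would split the integrand. On $\{q^a_s=q^b_s\}$ it equals $\kappa(q^a_s)(\lambda^a_s-\lambda^b_s)$, bounded by $\norm{\kappa}_\infty|\lambda^a_s-\lambda^b_s|$; since $\lambda^a_s-\lambda^b_s=\int_0^{s-}\varphi(s-u)\,d(N^a-N^b)_u$ and the shared post-$t$ part cancels, $\int_t^\infty\E|\lambda^a_s-\lambda^b_s|\,ds\le\norm{\varphi}_{L^1}\bigl(N^a_t+N^b_t+K\bigr)<\infty$. On $\{q^a_s\ne q^b_s\}$ it is bounded by $\norm{\kappa}_\infty(\lambda^a_s+\lambda^b_s)\1_{q^a_s\ne q^b_s}$, and here I would condition on the Hawkes data and use the independence of the contraction clock to reduce the time integral to a constant multiple of $\ul{\mathfrak m}^{-1}$ times a $(\lambda^a+\lambda^b)$-weighted expected number of residual kicks, which is finite because the residual kicks are finitely many in expectation and $\E[\lambda^x_s\mid\calF_t]$ is bounded uniformly in $s$. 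Adding the two contributions gives absolute convergence, so the limit in \eqref{eq:price:lim} exists and is finite. The main obstacle is this last term: the coupling-failure indicator $\1_{q^a_s\ne q^b_s}$ decays only integrably, not exponentially, and it is correlated with the intensities appearing in the product $\kappa(q)\lambda$; the point of extracting an \emph{independent} rate-$\ul{\mathfrak m}$ relaxation clock is exactly to decouple these and obtain the bound using only first moments of $\lambda^x$ and the subcriticality $\norm{\varphi}_{L^1}<1$, without any extra moment assumption on $\varphi$.
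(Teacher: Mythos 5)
Your proposal follows essentially the same route as the paper's proof: both decompose the post-$t$ Hawkes flows into a shared independent copy plus the residual descendants of pre-$t$ events (Lemma \ref{lemma:lift:hawkes} and the coupling \eqref{eq:rewriting:coupling}), use the monotone coupling of the queue-driving Poisson measures so that the ask--bid gap can only widen at residual market orders, and exploit $\ul{\mathfrak{m}}>0$ to extract independent exponential relaxation clocks (Lemmas \ref{lem:control:diff:q} and \ref{lem:expvar}) that decouple the gap's occupation time from the Hawkes intensities, with $\norm{\varphi}_{L^1}<1$ supplying the moment bounds on the residual processes. The remaining differences are presentational (you work with the compensated form $\E[\kappa(q)\lambda]$ and an occupation-time estimate, whereas the paper first disposes of the residual integrals by monotone convergence and then bounds the common-part integral through the epoch decomposition over $i,k$), so the argument is the paper's in all essentials.
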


Theorem \ref{thm:finiteness:price:lim} is proved in Appendix \ref{proof:thm:finiteness:price:lim} and relies essentially on the fact that after time $t$, each Hawkes process $N^a$ and $N^b$ can be written as the sum of a Hawkes process with a time-varying baseline which is $\calF_t$-measurable and a Hawkes process which is $\calF_t$-independent and has the same dynamic as $N^a$ and $N^b$. The assumption $\norm{\varphi}_{L^1} < 1$ is classical and ensures that the Hawkes process is stable and is essential so that the Hawkes process with time-varying baseline has finite expectation. Similarly, the assumptions $\lambda^L$ decreasing and $\lambda^C$ is increasing ensure that the distribution of the queues admits a stationary distribution, while the assumption $\inf_{k > 0} \mathfrak{m}_k > 0$ ensures that the distribution of the queues converges exponentially fast to their stationary state after time $t$. The assumption $\kappa$ non-negative is natural and ensures the price movements are consistent with the order signs. The boundedness of $\kappa$ is required to ensure that the contribution of a single market order is bounded. 

Although Theorem \ref{thm:finiteness:price:lim} proves the finiteness of the limit \eqref{eq:price:lim}, it is of limited use in practice because it does not provide a closed-form formula to compute this limit. Using that $M^{a/b} = N^{a/b} - \Lambda^{a/b}$ is a martingale, where $\Lambda^{a/b}_t = \int_0^t \lambda^{a/b}_s  ds$, we see that \eqref{eq:price:lim} can be rewritten as
\begin{align*}
P_t = P_0 
&+ \int_0^t \kappa(q^a_s)  dN^a_s - \int_0^t \kappa(q^b_s)  dN^b_s
\\
&+
\lim_{T\to\infty} 
\E\bigg[\int_t^T \kappa(q^a_s) \lambda^a_s  ds - \int_t^T \kappa(q^b_s) \lambda^b_s  ds \bigg |  \calF_t \bigg]
\end{align*}
and therefore 
\begin{equation*}
P_t = P_0 + \int_0^t \kappa(q^a_s)  dN^a_s - \int_0^t \kappa(q^b_s)  dN^b_s
+
\int_t^\infty
\E[ \kappa(q^a_s) \lambda^a_s - \kappa(q^b_s) \lambda^b_s  |  \calF_t ]
  ds.
\end{equation*}
The computation of $\E[ \kappa(q^{a/b}_s) \lambda^{a/b}_s  |  \calF_t ]$ then becomes intricate because of the correlation between $\lambda^{a/b}$ and $q^{a/b}$ on one hand, and the non-linearity of the function $\kappa$ on the other hand.


\subsection{Market impact}
\label{sec:mi}

We now consider the insertion of a metaorder within this model using only limit orders. Without loss of generality, we consider a sell metaorder so that the limit orders are posted on the ask side and will eventually be hit by market orders.

We model the arrival times of this metaorder as the jump times of a point process $N^o$ independent of $N^a$, $N^b$, $L^a$, $L^b$, $C^a$ and $C^b$. 
In the spirit of \cite{jaisson2015market,jusselin2020noarbitrage}, we assume that $N^o$ is a non-homogeneous Poisson process with intensity $\nu$. We also assume that it is adapted to $(\calF_t)_t$, in the sense that $N^o_t$ is $\calF_t$ measurable for all $t$, and $(N^o_s - N^o_t)_{s \geq t}$ is independent of $\calF_t$. We write $\wb q^a$ the resulting process counting the orders available on the ask side. In a highly endogenous market, market orders only represent a smaller fraction of the total amount of orders. Thus, even if we consider a metaorder executing a significant volume, it would still only represent a smaller fraction of the limit orders posted on a market. Moreover, the dynamic of the limit order book is strongly influenced by the behaviour of the liquidity provider, who themselves adapt their trading strategies depending on the current limit order book state. In that case, we have
\begin{equation*}
\wb q^a_t = q_0^a + \wb L^{a}_t - \wb C^{a}_t - N^{a}_t + N^{o}_t
\end{equation*}
where $\wb L^{a}$ and $\wb C^{a}$ are Poisson inhomogeneous processes with stochastic intensities
\begin{equation*}
\wb \lambda_t^{L,a} = \lambda^L(\wb q_{t-}^{a})
\quad \text{ and } \quad 
\wb \lambda_t^{C,a} = \lambda^C(\wb q_{t-}^{a}).
\end{equation*}
Note that although the market digests past orders originating the metaorder, it cannot predict the arrival of future orders. Therefore, when computing the price at time $t$ as a conditional expectation of a functional of future order flows, we should truncate the metaorder at time $t$, i.e. we replace $(N^o_s)_s$ by $(N^o_{s\wedge t})_s$. To that extent, we write
\begin{equation*}
N_{s}^{o,t} = N^{o}_{s \wedge t}
\quad \quad \text{ and } \quad \quad
\wb q^{a,t}_{s} = \wb L^{a,t}_{s} - \wb C^{a,t}_{s} - N^{a}_t + N_{s}^{o,t}
\end{equation*}
where $\wb L^{a,t}$ and $\wb C^{a,t}$ are Poisson inhomogeneous process with stochastic intensities
\begin{equation*}
\wb \lambda_{s}^{L,a,t} = \lambda^L(\wb q_{s-}^{a,t})
\quad \text{ and } \quad 
\wb \lambda_{s}^{C,a,t} = \lambda^C(\wb q_{s-}^{a,t}).
\end{equation*}
Using these definitions, the price in presence of the metaorder is given by
\begin{equation}
\label{eq:price:lim:metaorder}
\wb P_t = P_0 + \lim_{T\to\infty}
\E\bigg[\int_0^T \kappa(\wb q^{a,t}_{s})  dN^a_s - \int_0^T \kappa(q^b_s)  dN^b_s \bigg |  \calF_t \bigg].
\end{equation}
This limit is well defined and can be established using similar arguments to those employed in the proof of Theorem \ref{thm:finiteness:price:lim}. For the sake of conciseness, the proof is omitted.
Combining Equations \eqref{eq:price:lim} and \eqref{eq:price:lim:metaorder}, we can define the pathwise market impact. Although most studies focus on the the average market impact $\mathbb{E}[\mathrm{MI}_t]$, previous work has shown that price impact can also be defined without averaging, see Bellani et al. \cite{bellani2023price}. In our case, it is possible to study the pathwise market impact because we can compute the price both with and without the metaorder. To do that, we need to assume that the processes $q^a$ and $\wb q^a$ are coupled, in the sense that they originate the same Poisson point measure. More precisely, we have
\begin{equation*}
L^{a}_s = \int_0^s \int_0^\infty \1_{\{z \leq \lambda^{L,a}_u\}}  \pi^{L,a}(du,dz)
\quad \text{ and } \quad
C^{a}_s = \int_0^s \int_0^\infty \1_{\{z \leq \lambda^{C,a}_u\}}  \pi^{C,a}(du,dz)
\end{equation*}
by \eqref{eq:def:pointmeasure} and then we take
\begin{equation*}
  \wb L^{a,t}_s = \int_0^s \int_0^\infty \1_{\{z \leq \wb \lambda^{L,a}_u\}}  \pi^{L,a}(du,dz)
  \quad \text{ and } \quad
  \wb C^{a,t}_s = \int_0^s \int_0^\infty \1_{\{z \leq \wb \lambda^{C,a}_u\}}  \pi^{C,a}(du,dz).
\end{equation*}
 Using these definitions, we are ready to define to pathwise market impact.
\begin{definition}
The pathwise \textbf{market impact} $\mathrm{MI}_t$ of the metaorder at time $t$ is given by
\begin{equation*}
\mathrm{MI}_t = \wb P_t - P_t.
\end{equation*}
\end{definition}

Unlike the price of Section \ref{sec:model}, this expression can be simplified 
\begin{align*}
\mathrm{MI}_t 
&= 
\lim_{T\to\infty}
\E\bigg[\int_0^T \kappa(\wb q^{a,t}_{s})  dN^a_s - \int_0^T \kappa(q^b_s)  dN^b_s \bigg |  \calF_t \bigg]
\\
&\quad\quad\quad\quad-
\lim_{T\to\infty}
\E\bigg[\int_0^T \kappa(q^a_{s})  dN^a_s - \int_0^T \kappa(q^b_s)  dN^b_s \bigg |  \calF_t \bigg]
\\
&= 
\lim_{T\to\infty}
\E\bigg[\int_0^T \big( \kappa(\wb q^{a,t}_{s}) - \kappa(q^a_s) \big)  dN^a_s \bigg |  \calF_t \bigg].
\end{align*}
This limit is well defined by Theorem \ref{thm:finiteness:price:lim}. Moreover, following the same lines as Lemma \ref{lem:control:diff:q} in Appendix \ref{proof:thm:finiteness:price:lim}, we can show that for all $s \geq 0$, $\wb q^{a,t}_s \geq q^a_s$. Therefore, when $\kappa$ is decreasing, we have $\kappa(\wb q^a_s) \leq \kappa(q^a_s) $. This ensures that
\begin{equation*}
    \int_0^T \big( \kappa(\wb q^{a,t}_s) - \kappa(q^a_s) \big)  dN^a_s \to \int_0^\infty \big( \kappa(\wb q^{a,t}_s) - \kappa(q^a_s) \big)  dN^a_s
\end{equation*}
almost surely as $T \to \infty$ and thus, using the monotone convergence theorem, we obtain
\begin{equation}
\label{eq:expression:mi}
\mathrm{MI}_t 
=
\E\bigg[\int_0^\infty \big( \kappa(\wb q^{a,t}_s) - \kappa(q^a_s) \big)  dN^a_s \bigg |  \calF_t \bigg].
\end{equation}

\label{sec:scaling}
\subsection{Scaling limit of the limit order book model}

Following the same ideas as in \cite{jusselin2020noarbitrage}, we would like to build a scaling limit of the market impact to get a better understanding of how it behaves on large time scales. 

We start by computing the scaling limit of the limit order book in the model described in Section \ref{sec:model}. Let $T > 0$ be the final time horizon, representing the end of the metaorder. We consider the same model as in Section \ref{sec:model}, with the additional exponent $T$. Specifically, we consider two independent Hawkes processes $N^{a,T}$ and $N^{b,T}$ with a common baseline intensity $\mu_T \geq 0$ and self-exciting kernel $\varphi^T$ so that their intensities are given by
\begin{equation*}
\lambda^{a,T}_t = \mu_T + \int_0^{t-} \varphi^T(t-s)  dN^{a, T}_s
\;\;\text{ and }\;\;
\lambda^{b,T}_t = \mu_T + \int_0^{t-} \varphi^T(t-s)  dN^{b, T}_s.
\end{equation*}
In many empirical studies (e.g., \cite{bacry2015hawkes}), the Hawkes kernel is observed to decay approximately as a power law with an exponent just above unity. As a concrete example, one may take
$\varphi^T(t) = (t + \tau_0)^{-1.1}$,
where $\tau_0 > 0$ is a short-time regularization constant. This choice transparently exhibits the slowly decaying, long-memory behavior seen in high-frequency order-flow data.

We then consider the limit order book and suppose that the volumes $q^{b,T}$ and $q^{a,T}$ satisfy
\begin{equation*}
q^{a,T}_t = L^{a,T}_t - C^{a,T}_t - N^{a,T}_t \quad \text{ and } \quad q^{b,T}_t = L^{b,T}_t - C^{b,T}_t - N^{b,T}_t.
\end{equation*}
Furthermore, we assume that $L^{a}$ and $L^{b}$ are two time-inhomogeneous Poisson processes whose intensities are given by 
\begin{equation*}
\lambda_t^{L,a,T} = \lambda^{L,T}(q_{t-}^{a,T})
\quad \text{ and } \quad 
\lambda_t^{L,b,T} = \lambda^{L,T}(q_{t-}^{b,T})
\end{equation*}
for some function $\lambda^{L,T}$; and that $C^{a,T}$ and $C^{b,T}$ are two time-inhomogeneous Poisson processes whose intensities are given by 
\begin{equation*}
\lambda_t^{C,a,T} = \lambda^{C,T}(q_{t-}^{a,T})
\quad \text{ and } \quad 
\lambda_t^{C,b,T} = \lambda^{C,T}(q_{t-}^{b,T})
\end{equation*}
for some function $\lambda^{C,T}$. Concrete examples of such intensities have been empirically calibrated in \cite{huang2015simulating}.

The scaling limit of the Hawkes processes driving the market orders has already been extensively studied, see for instance \cite{jaisson2015limit,jaisson2016rough,jusselin2020noarbitrage,durin2023two} and we recall here some results from \cite{jaisson2016rough}.

\begin{assumption}
\label{assumption:jaisson}
There exists a function $\varphi$ such that $\varphi^T = a_T \varphi$ for some sequence $a_T \to 1$ and $\norm{\varphi}_{L^1} = 1$. Moreover, there exists $0 < \alpha < 1$ such that the limits
\begin{equation*}
K = \lim\limits_{t \to \infty} t^\alpha \int_t^\infty \varphi(s) ds, 
\;\;
\lambda = \lim\limits_{T \to \infty} (1-\alpha) K^{-1} T^\alpha (1-a_T)
, 
\;\;
\text{ and }
\mu^* = \lim\limits_{T \to \infty} T^{1-\alpha} \mu_T 
\end{equation*}
are finite.
\end{assumption}

Under this assumption, \cite{jaisson2015market} proves that the long-term average intensity of the Hawkes processes $N^{a, T}$ and $N^{b, T}$ is $\beta_T = (1-a_T)^{-1}\mu_T$ and therefore the average number of trades from $N^{a,T}$ and $N^{b,T}$ on $[0,T]$ scales as $T \beta_T$. Thus it is natural to rescale each Hawkes process by $T\beta_T$. We define for $x \in \{a,b\}$ the sequences
\begin{equation*}
\wh {N}^{x, T}_t = \frac{1}{T\beta_T} N^{x, T}_{tT}\text{, }
\quad \quad
\wh {\Lambda}^{x, T}_t = \frac{1}{T\beta_T} \int_0^{tT} \lambda^{x, T}_s ds
\end{equation*}
and
\begin{equation*}
    \quad \wh M^{x, T}_t = \sqrt{T\beta_T} (\wh {N}^{x, T}_t - \wh {\Lambda}^{x, T}_t) = \frac{1}{\sqrt{T\beta_T}} M^{x, T}_{tT}.
\end{equation*}
\begin{proposition}
\label{prop:convergence:jaisson}
Suppose that Assumption \ref{assumption:jaisson} holds. Then, for $x \in \{a, b\}$, we have 
\begin{equation*}
\wh N^{x,T} \to X^{x}
\;\;\text{and}\;\;
\wh M^{x,T} \to Z^{x}
\end{equation*}
in distribution, for the Skorokhod topology on compact subsets of $[0, \infty)$, 
where $X^{x}$ is an increasing process with derivative $Y^{x}$ satisfying
\begin{equation}
\label{eq:def:Y}
Y^{x}_t 
=
F^{\alpha, \lambda}(t)+\frac{1}{\sqrt{\mu^* \lambda}} \int_0^t f^{\alpha, \lambda}(t-s) \sqrt{Y_s^{x}} d B_s
\end{equation} 
for some Brownian motion $B$ and $Z^x$ is a continous martingale with quadratic variation $X^x$. Here $f^{\alpha, \lambda}$ is defined by
\begin{equation*}
f^{\alpha,\lambda}(x) = \lambda x^{\alpha-1} E_{\alpha, \alpha}(-\lambda x^\alpha)
\end{equation*}
where $E_{\alpha, \beta}$ is the $(\alpha, \beta)$-Mittag-Leffler function in Haubold et al. \cite{haubold2011mittag}, that is
\begin{equation*}
E_{\alpha, \beta}(x) = \sum_{k=0}^\infty \frac{x^k}{\Gamma(\alpha k + \beta)}
\end{equation*}
and $F^{\alpha, \lambda}$ is defined by
\begin{equation*}
F^{\alpha,\lambda}(x) = \int_0^x f^{\alpha,\lambda}(y)  dy.
\end{equation*}

\end{proposition}

We are now ready to study the limit order book $(q^{a,T}, q^{b,T})$. It is natural to look for limits where the quantities available on each queue scale like the average of trades. Therefore, we define 
\begin{equation*}
\wh {q}^{a, T}_t = \frac{1}{T\beta_T} q^{a, T}_{tT}
\quad \text{ and } \quad
\wh {q}^{b, T}_t = \frac{1}{T\beta_T} q^{b, T}_{tT}.
\end{equation*}
where the scaling $(T\beta_T)^{-1}$ ensures these limits are non-degenerate. Similarly, we define
\begin{equation*}
\wh {L}^{a, T}_t = \frac{1}{T\beta_T} L^{a, T}_{tT}
\quad \text{ and } \quad
\wh {L}^{b, T}_t = \frac{1}{T\beta_T} L^{b, T}_{tT}
\end{equation*}
and 
\begin{equation*}
\wh {C}^{a, T}_t = \frac{1}{T\beta_T} C^{a, T}_{tT}
\quad \text{ and } \quad
\wh {C}^{b, T}_t = \frac{1}{T\beta_T} C^{b, T}_{tT}.
\end{equation*}
We also normalise the intensities by a factor $\beta_T$ to ensure consistencies in the notations when considering the compensator of $(\wh {L}^{a, T}, \wh {L}^{b, T}, \wh {C}^{a, T}, \wh {C}^{b, T})$ and we write
\begin{align*}
&\wh {\lambda}^{a, L, T}_t = \frac{1}{\beta_T} \lambda^{a, L, T}_{tT}
,\quad
\wh {\lambda}^{b,L, T}_t = \frac{1}{\beta_T} \lambda^{b,L, T}_{tT}
,
\\
&\wh {\lambda}^{a,C, T}_t = \frac{1}{\beta_T} \lambda^{a, C,T}_{tT}
,\;\text{ and }\;
\wh {\lambda}^{b, C,T}_t = \frac{1}{\beta_T} \lambda^{b,C, T}_{tT}.
\end{align*}
Using their definition, we see that for $x \in \{a, b\}$ and $y \in \{L,C\}$, we have
\begin{equation*}
\wh {\lambda}^{y, x, T}_t = \frac{1}{\beta_T} \lambda^{y, x, T}_{tT} = \frac{1}{\beta_T} \lambda^{y, T}({q}^{y, x, T}_{tT})
= \frac{1}{\beta_T} \lambda^{y, T}(T\beta_T \wh {q}^{y, x, T}_{t}).
\end{equation*}
Therefore, we assume the following.
\begin{assumption}
\label{assumption:scaling:lambda}
There exist $\lambda^{L}$ and $\lambda^{C}$ such that for all $q$, we have
\begin{equation*}
\lambda^{L,T}(q) = \beta_T \lambda^{L}\big(q / (T\beta_T)\big)
\quad \text{ and } \quad
\lambda^{C,T}(q) = \beta_T \lambda^{C}\big(q / (T\beta_T)\big).
\end{equation*}
Moreover, $\lambda^L$ is decreasing, $\lambda^C$ is increasing and $\lambda^{L} - \lambda^C$ is Lipschitz continuous.
\end{assumption}

In addition to Assumption \ref{assumption:scaling:lambda}, we also need to ensure the initial conditions converge to guarantee convergence of the full process $\wh q^{T,x}$. More precisely, we assume in the following the convergence of the rescaled initial state of the orderbook.
\begin{assumption}
\label{assumption:scaling:initial}
There exist $q^{a}_0$ and $q^{b}_0$ two $\calF_0$-measurable random variables such that 
\begin{equation*}
\wh q^{T,a}_0 \to q^{a}_0 \quad \quad \text{ and } \quad \quad \wh q^{T,b}_0 \to q^{b}_0
\end{equation*}
in distribution.
\end{assumption}

Note that under Assumption \ref{assumption:scaling:lambda}, we have 
\begin{equation*}
\wh \lambda^{y,x,T}_t = \lambda^{y}(\wh q^{y,x,T}_t).
\end{equation*}
This implies the following proposition, proved in Appendix \ref{sec:proof:prop:convergence:LOB}.
\begin{proposition}
\label{prop:convergence:LOB}
Let $I$ be a closed interval of $[0, \infty)$ and suppose that Assumptions \ref{assumption:jaisson}, \ref{assumption:scaling:lambda} and \ref{assumption:scaling:initial} hold. Then, for $x\in\{a,b\}$, the process $(\wh q^{T, x}, \wh L^{T, x}, \wh C^{T, x})$ converges in distribution towards the continuous process $(q^{x}, L^{x}, C^{x})$ for the Skorokhod topology on $I$. Moreover, we have $q^{x} = q^{x}_0 + L^{x} - C^{x} - X^{x}$ where $X^{x}$ is defined in Proposition \ref{prop:convergence:jaisson} and where
\begin{equation*}
\begin{cases}
L^{x} = \int_0^t \lambda^{L}(q_s^x) ds,
\\
C^{x} = \int_0^t \lambda^{C}(q_s^x)ds.
\end{cases}
\end{equation*}
\end{proposition}

Note that although Proposition \ref{prop:convergence:LOB} is proved for a compact interval $I$, the processes $q^{x}, L^{x}, C^{x}$ and $X^{x}$ can be defined on the whole half-line $[0, \infty)$ with the same definitions. Moreover, the process $q^x$ is differentiable with derivative
\begin{equation}
\label{eq:ODE:q}
	q^{x\prime}_t = \lambda^{L}(q^{x}_t) - \lambda^{C}(q^{x}_t) - Y^{x}_t.
\end{equation}

\begin{corollary}
\label{cor:linearsolving}
The behaviour of the limiting order book is deterministic conditional on the market volatility. Moreover, when $\lambda^L(x) - \lambda^C(x) = c_\lambda x + d_\lambda$ for all $x$, the ordinary differential equation \eqref{eq:ODE:q} can be solved explicitly and we have
\begin{equation*}
q^{a/b}_t = 
q^{a/b}_0 e^{c_\lambda t} + \int_0^t e^{c_\lambda (t-s)} (d_\lambda-Y^{a/b}_s)  ds.
\end{equation*}  
\end{corollary}

\subsection{Scaling limit of the market impact model}

We are now ready to study the scaling limit of the market impact. We follow the approach initiated in \cite{jusselin2020noarbitrage,durin2023two}. However, we do not have a closed-form expression for the price and for the market impact, making the computations more intricate. In particular, dealing with the conditional expectation in the definition of the market impact requires delicate arguments. Therefore, instead of studying the pathwise market impact, we only study the average market impact.

We consider here as well a family of metaorders $N^{o,T}$ passed through limit orders on the ask side. We assume that $N^{o,T}$ is a non-homogeneous Poisson process with intensity $\nu^T$ and we write $\wb q^{a,T}$ the resulting process counting the orders available on the ask side. Following \cite{jusselin2020noarbitrage,durin2023two}, we assume that the size of a metaorder is measured relatively to the total market orders volume, which is of order $T\beta_T$ on $[0,T]$.
\begin{assumption}
\label{assumption:f}
There exists a function $f:[0,\infty) \mapsto [0,\infty)$ satisfying $f(t) = 0$ for $t > 1$ such that for any $t \geq 0$, we have
\begin{equation*}
\nu^T(t) = \beta_T f(t/T).
\end{equation*}
\end{assumption}

Following the same proof as Proposition \ref{prop:convergence:LOB}, we can show that the rescaled aggregated queue in presence of the metaorder also converges in distribution for the Skorokhod topology. More precisely, we define
\begin{equation*}
\check{q}^{a,T}_s = (T\beta_T)^{-1} \wb q^{a,T}_{tT}.
\end{equation*}
The convergence of $\check q^{a,T}$ can be shown using the same proof as Proposition \ref{prop:convergence:LOB} and is expressed in the following result.
\begin{proposition}
\label{prop:convergence:LOBMO}
Let $I$ be a closed interval of $[0, \infty)$ and suppose that Assumptions \ref{assumption:jaisson} and \ref{assumption:scaling:lambda} hold. Then, $\check q^{a,T}$ converges in distribution for the Skorokhod topology on $I$ towards the continuous process $\wb q^{a}$. Moreover, we have $\wb q^{a} = q^{a}_0 + \wb L^{a} - \wb C^{a} - \wb X^{a} + F$ where $X^{a}$ is defined in Proposition \ref{prop:convergence:jaisson}, $F(t) = \int_0^t f(s) ds$ and
\begin{equation*}
\begin{cases}
\wb L^{a} = \int_0^t \lambda^{L}(\wb q_s^a) ds,
\\
\wb C^{a} = \int_0^t \lambda^{C}(\wb q_s^a)ds.
\end{cases}
\end{equation*}
\end{proposition}

Similarly to Proposition \ref{prop:convergence:LOB}, we the processes $\wb q^{a}, \wb L^{a}$ and $ \wb C^{a}$ can be defined on the whole half-line $[0, \infty)$ with the same expressions. Moreover, $\wb q^{a}$ is differentiable with derivative 
\begin{equation}
\label{eq:ODE:qmi}
	\wb q^{a\prime}_t = \lambda^{L}(\wb q^{a}_t) - \lambda^{C}(\wb q^{a}_t) - Y^{a}_t + f(t).
\end{equation}
\begin{corollary}
\label{cor:linearsolvingwithmetaorder}
The behaviour of the limiting order book $\wb q^{a}$ in the presence of the metaorder is deterministic conditional on the market volatility. When $\lambda^L - \lambda^C$ is an affine function, say $\lambda^L(x) - \lambda^C(x) = c_\lambda x + d_\lambda$ for all $x$, the ordinary differential equation \eqref{eq:ODE:qmi} can be solved explicitly and we have
\begin{equation*}
\wb q^{a}_t = 
q^{a}_0 e^{c_\lambda t}
+
\int_0^t e^{c_\lambda (t-s)} \big(d_\lambda - Y^{a}_u + f(u)\big)  du.
\end{equation*}  
\end{corollary}
Moreover, since $f$ is positive, we see that $\wb q^{a}_t \geq q^{a}_t$ for all $t$. Furthermore, since $\lambda^{L} - \lambda^{C}$ is decreasing, we also have
\begin{equation*}
|\wb q^{a\prime}_t - q^{a\prime}_t| \leq \int_0^t f(s)  ds.
\end{equation*}
Note also from \eqref{eq:ODE:q} and \eqref{eq:ODE:qmi} that $q^{a}$ and $\wb q^{a}$ both solve an ordinary differential equation of the form $y' = u(y) + v$ where $u = \lambda^{L} - \lambda^{C}$ and $v$ is given by $v = - Y^{a}$ for $ q^{a}$ and $v = - Y^{a} + f$ for $\wb q^{a}$. Thus the behaviour of $|\wb q^{a\prime}_t - q^{a\prime}_t|$ is directly linked to $\lambda^{L} - \lambda^{C}$. We assume the following.
\begin{assumption}
\label{assumption:mean_rev}
There exist $c > 0$ such that for all $q$ and all $x$, we have
\begin{equation*}
\lambda^{L}(q) - \lambda^{L}(q + x) + \lambda^{C}(q + x) - \lambda^{C}(q) \geq c x.
\end{equation*}
\end{assumption}

We write $\mathrm{MI}^T$ the market impact in this model, which is associated to a function $\kappa^T$ replacing the function $\kappa$ used in Section \ref{fig:limit:shape}. Again, it is natural that the market impact scales at the same rate as the order flow and therefore, we define the rescaled market impact $\wh{\mathrm{MI}}_t^T$ by
\begin{equation*}
\wh{\mathrm{MI}}_t^T = \frac{1}{T\beta_T} \mathrm{MI}^T_{tT}. 
\end{equation*}
Using \eqref{eq:expression:mi}, it can be rewritten as
\begin{equation*}
\wh{\mathrm{MI}}_t^T = \frac{1}{T\beta_T} \E\bigg[\int_0^\infty \big( \kappa^T(\wb q^{a,T,t}_s) - \kappa^T(q^{a,T}_s) \big)  d N^{a,T}_s \bigg |  \calF_{tT}^T \bigg]
\end{equation*}
where $\wb q^{a,T,t}$ is defined from $\wb q^{a,T}$ by truncating the limit orders to times before $t$ only (see Section \ref{sec:mi} for details). Writing $\check q^{a,T,t}_s = (T\beta_T)^{-1} \wb q^{a,T,t}_{tT}$ and $\wh \calF_{t}^T = \calF_{tT}^T$, we have
\begin{equation*}
\wh{\mathrm{MI}}_t^T = \E\bigg[\int_0^\infty \big( \kappa^T(T\beta_T \check q^{a,T,t}_s) - \kappa^T(T\beta_T \wh q^{a,T}_s) \big)  d \wh N^{a,T}_s \bigg |  \wh \calF_{t}^T \bigg].
\end{equation*}
Therefore, it is natural to assume the following 
\begin{assumption}
\label{assumption:kappa}
There exists a Lipschitz continuous function $\kappa$ such that for all $T$ and all $x$, we have $ \kappa^T(x) = \kappa(x / (T\beta_T))$.
\end{assumption}
This assumption ensures that we have
\begin{equation*}
\wh{\mathrm{MI}}_t^T = \E\bigg[\int_0^\infty \big( \kappa(\check q^{a,T,t}_s) - \kappa(\wh q^{a,T}_s) \big)  d \wh N^{a,T}_s \bigg |  \wh \calF_{t}^T \bigg].
\end{equation*}
Following previous results about the convergence of the queuing processes, it seems natural that $\wh{\mathrm{MI}}^T$ converges towards $MI$, defined for all $t \geq 0$ by
\begin{equation}
\label{eq:scaling_limit_market_impact}
\mathrm{MI}_t = 
 \E\bigg[\int_0^\infty \big( \kappa(\wb q^{a,t}_s) - \kappa(q^{a}_s) \big) Y^{a}_s  ds \bigg |  \calF_{t} \bigg]
\end{equation}
where $Y^a_s$ refers to the volatility of the market price defined in Proposition \ref{prop:convergence:jaisson}, $f$ is the rescaled metaorder intensity defined in Assumption \ref{assumption:f} and $\wb q^{a,t}$ is defined by $\wb q^{a,t}_0 = q^{a}_0$ and for all $s \geq 0$
\begin{equation*}
\wb q^{a,t\prime}_s = \lambda^{L}(\wb q^{a,t}_s) - \lambda^{C}(\wb q^{a,t}_s) - Y^{a}_s + f(s)\1_{s\leq t}.
\end{equation*}
However, the limit $\wh{\mathrm{MI}}_t^T \to \mathrm{MI}_t$ is hard to prove because in previous results, the convergence of each process is studied on compact subsets of $[0, \infty)$, and a convergence $\wh{\mathrm{MI}}_t^T \to \mathrm{MI}_t$ would require convergence on the whole set $[0, \infty)$. Nevertheless, we can still prove the weaker convergence $\E[\wh{\mathrm{MI}}_t^T] \to \E[\mathrm{MI}_t]$. The idea of the proof lies in splitting the integral $\int_0^\infty$ into $\int_0^A + \int_A^\infty$ for $A$ large enough so that the contribution of $\int_A^\infty$ remains small, while previous results ensure the convergence of the integral $\int_0^A$. This result is stated formally in the following theorem and is proved in Appendix \ref{sec:proof:thm:scalingmi}.

\begin{theorem}
\label{thm:scalingmi}
Suppose that Assumptions \ref{assumption:jaisson}, \ref{assumption:scaling:lambda}, \ref{assumption:scaling:initial}, \ref{assumption:f}, \ref{assumption:mean_rev} and \ref{assumption:kappa} hold. Then for all $t \geq 0$, $\E[\wh{\mathrm{MI}}^{T}_t]$ converges towards $\E[\mathrm{MI}_t]$, where $\mathrm{MI}_t$ is given in Equation \eqref{eq:scaling_limit_market_impact}.

\end{theorem}

\section{Applications and approximations}
\label{sec:applications}

\subsection{Asymptotic shape of the market impact}
\label{sec:shape}

Using Theorem \ref{thm:scalingmi}, we can guess the asymptotic shape of the market impact in terms of the participation rate. Denote $D = \lambda^L - \lambda^C$ and suppose that this function is invertible. Suppose also that $Y^x$ is in a stationary state, and for simplicity, that it is constant and equals $m$. Then, using \eqref{eq:ODE:q}, the stationary state of $q^x$ is given by $q^* = D^{-1}(m)$.

Suppose now that $f(t) = \gamma$ is constant on $[0,1]$ so that $\gamma$ is a proxy for the participation rate. Then, the stationary state for the queue in presence of this metaorder is given by $\wb q^* = D^{-1}(m - \gamma)$. Combining these and Theorem \ref{thm:scalingmi}, we see that the market impact is proportional to 
\begin{equation*}
\kappa(\wb q^*) - \kappa(q^*) = \kappa\big(D^{-1}(m - \gamma)\big) - \kappa\big(D^{-1}(m)\big).
\end{equation*}
Few remarks about this result. First, when $\kappa$ is constant, the market impact vanishes, which is consistent with previous studies \cite{jusselin2020noarbitrage,durin2023two}. When both $\kappa$ and $D$ are linear, so is $D^{-1}$ and therefore 
\begin{equation*}
\kappa(\wb q^*) - \kappa(q^*) = c_1 \gamma + c_2
\end{equation*}
for some constants $c_1, c_2$. This case extends to the case where $\kappa$ and $D^{-1}$ are differentiable. In that case, we have 
\begin{equation*}
\kappa(\wb q^*) - \kappa(q^*) \approx - (\kappa \circ D^{-1})'(m) \gamma
\end{equation*}
when $\gamma \to 0$. The behaviour when $\gamma \to \infty$ relies on some additional assumptions on $\kappa$ and $D$. More precisely, we assume that $D(q) \sim -cq$ when $q \to \infty$, which corresponds to the fact that there is a linear restoring force bringing the queue back to its equilibrium value. Moreover, the function $\kappa$ measures the resistance to price change. Following the ideas of dynamic theory of market liquidity in Toth et al. \cite{toth2011anomalous}, Mastromatteo et al. \cite{mastromatteo2014agent}, Donier et al. \cite{donier2015fully}, and Benzaquen and Bouchaud \cite{benzaquen2018market}, the available liquidity profile exhibits a ’V’-shaped pattern, diminishing in the vicinity of the current price while linearly increasing as one moves away from it. In this theory, it is assumed that the instantaneous density of resting orders is smallest right at the best bid and ask, and then grows approximately linearly as one moves away in price space on either side. Therefore, the total volume required to move the price of $x$ ticks should increase as $x^2$, once $x$ is large enough. In our case, it is natural to assume that $\kappa$ is linked to the inverse of this function: moving the liquidity by $\alpha$ should create a price change of $\sqrt{\alpha}$, and therefore we need to take $\kappa(q - \alpha) \approx \kappa(q) - \sqrt{\alpha}$. Assuming that this hold, and using that $D(q) \sim -cq$ when $q \to \infty$, we obtain
\begin{equation*}
\kappa(\wb q^*) - \kappa(q^*) \approx - \sqrt{c\gamma}
\end{equation*}
when $\gamma \to \infty$.

To illustrate this fact, we consider the case where $\kappa(q) = c_1 \sqrt{\log(e^{-c_2 q} + 1)}$ for some constants $c_1$ and $c_2$. This choice ensures that $\kappa(q)$ is well defined for all values of $q$ and that $\kappa(q) \sim c_3 q^{1/2}$ when $q \to \infty$. We fix $c_1 = 0.01$ and $c_2 = 1000$ for this example. We also take $\lambda^L- \lambda^C$ affine and more specifically $\lambda^L(q) - \lambda^C(q) = 0.025 - q$.
Simulating then $10000$ times the scaling limit of the market impact and averaging over all trajectories, we retrieve a power law behaviour close to a square-root behaviour when plotting $\E[\mathrm{MI}_t]$, see Figure \ref{fig:MI_fit_sqrt}.

\begin{figure}[htbp] 
  \centering
  \includegraphics[width=0.8\textwidth]{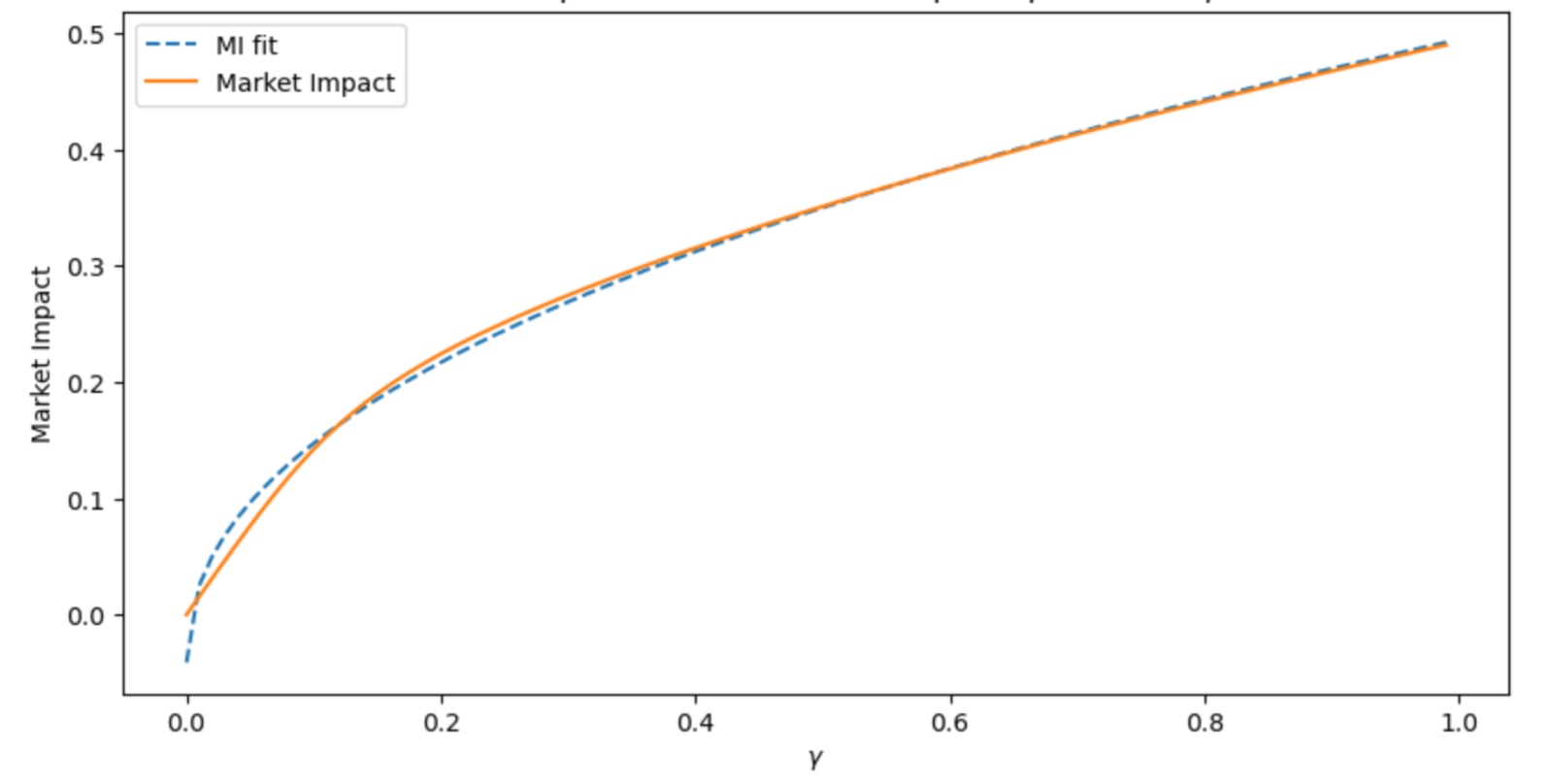}
  \caption{Power law fit of the simulated market impact with exponent 0.54.}
  \label{fig:MI_fit_sqrt}
\end{figure}

\subsection{First order approximation of the market impact}
\label{sec:first_order}

Although the results of Section \ref{sec:all_models} give a precise exact description of the market impact of a limit order, the results are quantitatively challenging to analyse, as the conditional expectation appearing in Equation \eqref{eq:expression:mi} cannot be computed explicitly.
This difficulty arises from the correlation between $\kappa(\wb q^{a,t}_s) - \kappa(q^a_s)$ and $N^a_s$. However, due to the mean-reverting properties of the jump processes $L^{a}$, $C^{a}$, $\wb L^{a}$ and $\wb C^{a}$, which are at the heart of the proof of Theorem \ref{thm:finiteness:price:lim}, both $q^a$ and $\wb q^a$ are likely to remain within a compact set with high probability. To simplify, consider a linear approximation of $\kappa$ and $\lambda^L - \lambda^C$ on this compact set in the form
\begin{equation*}
\kappa(x) \approx c_\kappa x + d_\kappa
\quad \text{ and } \quad
\lambda^L(x) - \lambda^C(x) \approx c_\lambda x + d_\lambda.
\end{equation*}
By substituting these linear approximations into \eqref{eq:expression:mi}, we obtain an explicit approximation for the market impact that is easier to study. Recall first that the dynamics of $q^a$ and $\wb q^{a,t}$ are given by
\begin{equation*}
q^a_s = q_0^a + L^{a}_s - C^{a}_s - N^{a}_s,
\quad \text{ and } \quad
\wb q^{a,t}_s = q_0^a + \wb L^{a,t}_s - \wb C^{a, t}_s - N^{a}_s + N^{o}_{s\wedge t}.
\end{equation*}
Moreover, we have
\begin{equation*}
\kappa(\wb q^{a,t}_s) - \kappa(q^a_s)
\approx
c_\kappa(\wb q^{a,t}_s - q^a_s) = c_\kappa (\wb L^{a,t}_s - \wb C^{a,t}_s - L^a_s + C^a_s + N^{o}_{s\wedge t}).
\end{equation*}
Following the same lines as in the proof of Lemma \ref{lem:expvar}, we see that for $s \geq t$, we can replace $\wb q^{a,t}_s - q^a_s$ by 
\begin{equation*}
\wb q^{a,t}_s - q^a_s 
= 
\wb q^{a,t}_t - q^a_t
-
\int_t^s \int_0^\infty 
\1_{\{
	z \leq 
	\lambda^{L}(q^a_{u-}) - \lambda^{L}(\wb q^{a,t}_{u-}) + \lambda^{C}(\wb q^{a,t}_{u-}) - \lambda^{C}(q^a_{u-} ) 
\}}
\wt \pi(du, dz)
\end{equation*}
for some random Poisson point measure $\wt \pi$ which is independent of $(\pi^{N,a}, \pi^{N,b})$. By construction, we can always take $\wt \pi$ adapted to $(\calF_t)_t$, in the sense that its restriction to $[0,t] \times [0,\infty)$ is $\calF_t$-measurable and its restriction to $(t, \infty) \times [0,\infty)$ is independent of $\calF_t$. Using the approximation $\lambda^L(x) - \lambda^C(x) \approx c_\lambda x + d_\lambda$, we observe that $\wb q^{a,t} - q^a \approx U$ where for $s\leq t$, we set $U_s = \wb q^{a,t}_s - q^a_s$ and for $s \geq t$, we define
\begin{equation*}
U_s = U_t 
-
\int_t^s \int_0^\infty 
\1_{\{
	z \leq - c_\lambda U_{u-} 
\}}
\wt \pi(du,dz).
\end{equation*}
This implies in particular that
\begin{equation*}
\E[U_s|\calF_t] = U_t 
+ c_\lambda
\int_t^s 
 \E[U_{u-} | \calF_t]
du
\quad
\text{ and hence }\quad
\E[U_s|\calF_t] = U_t 
\exp\big(c_\lambda(s-t)\big).
\end{equation*}
Thus, we obtain
\begin{align*}
\mathrm{MI}_t 
&\approx \E \bigg[ \int_0^\infty c_\kappa U_s  dN^a_s  \bigg|  \calF_t \bigg]
\\
&\approx 
\int_0^t c_\kappa U_s  dN^a_s
+
\int_0^\infty
c_\kappa 
\E[  U_s \lambda^a_s  |  \calF_t ]
  ds.
\end{align*}
For $s \geq t$, we know that that $U_s$ and $N^a$ are independent conditional on $\calF_t$ and thus we get
\begin{equation*}
\mathrm{MI}_t \approx \int_0^t c_\kappa U_s  dN^a_s + \int_t^\infty c_\kappa U_t 
\exp\big(c_\lambda(s-t)\big) \E[\lambda^a_s |  \calF_t ]  ds.
\end{equation*}
Additionally, using Lemma \ref{lemma:expressions_M}, we have 
\begin{equation*}
\E[\lambda^a_s |  \calF_t ] = 
\mu + \mu \int_0^s \psi(u)  du + \int_0^{t} \psi(s-u)  dM^{a}_u
\end{equation*}
for all $s > t$, where $M^{a} = N^{a}-\Lambda^{a}$ and $\Lambda^{a}_t = \int_0^t \lambda^{a}_u  du$.

In particular, this formula allows us to compute the impact of a single limit order at its posting time. Consider for instance $N^o$ made of a single jump at time $t_0$ and suppose that $t_0$ is large enough so that $\int_0^{s+t_0} \psi(u)  du \approx \norm{\psi}_{L^1}$ for all $s\geq 0$ This ensures that the intensity of all the processes involved in the micro-structure model is close to its stationary state. Then, for $t \leq t_0$, it is clear that $\mathrm{MI}_t = 0$. For $t = t_0$, we have $U_{t_0} = 1$ and thus
\begin{equation*}
\mathrm{MI}_{t_0} \approx \int_{t_0}^\infty c_\kappa
\exp\big(c_\lambda(s-t_0)\big) \bigg \{ \mu + \int_0^s \psi(u)  du  \mu + \int_0^{t_0} \psi(s-u)  dM^{a}_u \bigg\}  ds.
\end{equation*}
Suppose that $c_\lambda < 0$, which is a natural assumption since this ensures the aggregated queue is an ergodic process. Using that $\int_0^{s+t_0} \psi(u)  du \approx \norm{\psi}_{L^1} = (1-\norm{\varphi}_{L^1})^{-1} \norm{\varphi}_{L^1}$ for all $s\geq 0$, 
we can simplify this expression and we obtain a formula for the instantaneous market impact of a limit order as follows
\begin{equation*}
\mathrm{MI}_{t_0} 
\approx 
	- \frac{c_\kappa}{c_\lambda}
  \frac{\mu}{1-\norm{\varphi}_{L^1}}
+
	\int_0^\infty c_\kappa \exp( c_\lambda s) 
	\int_0^{t_0} \psi(s+t_0-u)  dM^{a}_u 
	 ds.
\end{equation*}
Then, on average, the instantaneous market impact of a limit order with unitary volume is approximately
\begin{equation*}
  \pm \frac{c_\kappa}{c_\lambda}
  \frac{\mu}{1-\norm{\varphi}_{L^1}}.
\end{equation*}
The terms in this expression can be interpreted as follows. The market impact is proportional to $c_\kappa / c_\lambda$, which represents the ratio of the sensitivity of price stickiness to volume, $c_\kappa = \kappa'(q)$, and the sensitivity of $\lambda^L - \lambda^C$ to volume, reflecting the mixing properties of the aggregated queues. The second term, $(1-\norm{\varphi}_{L^1})^{-1} \mu$, is classical and denotes the average long-term intensity of the market order flow.

The same ideas can also be used to derive an approximation of the scaling limit of the market impact when $\lambda^L - \lambda^C$ and $\kappa$ are both assumed linear. Specifically, we still assume $
\kappa(x) = c_\kappa x + d_\kappa$ and $\lambda^L(x) - \lambda^C(x) = c_\lambda x + d_\lambda$ and we fix $t \geq 0$. Corollaries \ref{cor:linearsolving} and \ref{cor:linearsolvingwithmetaorder} ensure we can solve exactly the ordinary differential equation defining the scaling limits $\wb q^{a,t}$ and $q^{a}$ of the queues. In this case, we have
\begin{equation*}
\begin{split}
q^{a}_s &= 
q^{a}_0 e^{c_\lambda s}
+
\int_0^s e^{c_\lambda (s-u)} (d_\lambda - Y^{a}_u)  du,
\\
\wb q^{a, t}_s &= 
q^{a}_0 e^{c_\lambda s}
+
\int_0^s e^{c_\lambda (s-u)} \big(d_\lambda - Y^{a}_u + f(u)\1_{\{u\leq t\}}\big)  du.
\end{split}
\end{equation*}

Therefore, the difference
\begin{equation*}
\kappa(\wb q^{a, t}_s) - \kappa(q^{a}_s)
=
c_\kappa
\int_0^{s \wedge t} e^{c_\lambda (s-u)} f(u)  du
\end{equation*}
is deterministic.
Then the scaling limit of the passive market impact of a metaorder is approximately given by
\begin{align}
\label{eq:scaling:MI}
\mathrm{MI}_t 
&\approx 
c_\kappa \int_0^t \int_0^{s} e^{c_\lambda (s-u)} f(u)  du  Y_s^{a}  ds
+ 
c_\kappa \int_t^\infty \int_0^{t} e^{c_\lambda (s-u)} f(u)  du  \E[Y_s^{a}|\calF_t]  ds.
\end{align}
This formula applies throughout the execution of a metaorder and can naturally be extended beyond the end of the metaorder by setting $f(t) = 0$ for $t$ after the metaorder ends. From a practical perspective, this expression is intuitive. The constant $c_\kappa$ reflects how the presence of additional liquidity in the limit order book influences the price. The trading flow of the strategy, represented by $f(u)$, affects how the limit order book digests this additional liquidity via $(e^{c_\lambda (s-u)} f(u))_{s \geq u}$. The terms $Y_s^{a}$ and $\E[Y_s^{a}|\calF_t]$ refer to the market price’s variance and forward variance curve, respectively, for $s \geq t$. Notably, all parameters in \eqref{eq:scaling:MI} are observable on the market.

\subsection{Evaluation of a metaorder's impact}

Using the first-order approximation of the market impact of a limit order and the formula for market impact for market orders obtained in \cite{jusselin2020noarbitrage}, a broker can estimate the market impact of his trading strategy \textit{a posteriori}. More precisely, we consider a strategy that has used an intensity $f$ of limit orders and $g$ of market orders. Suppose in addition that 
\begin{equation*}
\kappa(x) = c_\kappa x + d_\kappa
\end{equation*}
and
\begin{equation*}
\lambda^L(x) - \lambda^C(x) = c_\lambda x + d_\lambda.
\end{equation*}
We denote by $\kappa^*$ the average value of $\kappa(q_t)$ when $q_t$ is in its stationary state.

Using the formulas of \cite{jusselin2020noarbitrage} for the scaling limit of the market impact of market orders, we see that at each time, the market impact of the whole strategy is given by
\begin{equation*}
\mathrm{MI}_t = \mathrm{MI}_t^{l} + \mathrm{MI}_t^{m}
\end{equation*}
where 
\begin{equation*}
\begin{split}
\mathrm{MI}_t^{l} &= c_\kappa \int_0^t \int_0^{s} e^{c_\lambda (s-u)} f(u)  du  \sigma_s  ds
+ 
c_\kappa \int_t^\infty \int_0^{t} e^{c_\lambda (s-u)} f(u)  du  \xi_t(s)  ds
\\
\mathrm{MI}_t^{m} &= \kappa^* \int_0^t \big(1 + \lambda^{-1} (t-s)^{-\alpha}\big) g(s)  ds
\end{split}
\end{equation*}
where $\xi_t(s)$ is the forward volatility curve, \textit{i.e.} formally given by $\xi_t(s) = \E[\sigma_s|\calF_t]$.

More rigorously, when market orders and limit orders are executed simultaneously, the dynamics of the queues and their scaling limits are influenced by the presence of market orders. However, we disregard this effect, as the typical volume of market orders in an order book is negligible when compared to the volumes of limit orders and cancellations. Consequently, the inclusion of additional market orders would not significantly alter the queue dynamics, and we therefore maintain the approximation derived in Section \ref{sec:first_order}. Thus, it is important to note that, in this case, the impact of market and limit orders can be assessed independently using the previously established formulas.



Figure \ref{fig:limit:shape} presents $3$ simulations of the pathwise market impact $(\mathrm{MI}^{l}_t)_{t \geq 0}$ of limit orders. Two distinct behavioral phases can be observed:
\begin{itemize}
  \item For $t \leq 1$: The process $(\mathrm{MI}^{l}_t)_{t \leq 1}$ exhibits monotonic behavior. This reflects the intuitive result that as more limit orders are placed on the ask side (or bid side), it becomes increasingly difficult for the price to go up (or down). This exerts a counterforce on price movements, pushing the price in the opposite direction to the side with higher order volume.
  \item For $t > 1$: The behavior of $\mathrm{MI}^{l}_t$ becomes less predictable. Specifically, we express the market impact after time $1$ as 
  \begin{equation*}
    \mathrm{MI}_t^{l} = c_\kappa \int_0^t \int_0^{1} e^{c_\lambda (s-u)} f(u)  du  \sigma_s  ds
+ 
c_\kappa \int_t^\infty \int_0^{1} e^{c_\lambda (s-u)} f(u)  du  \xi_t(s)  ds.
  \end{equation*}
  This means that for all $\delta > 0$, we have
  \begin{equation*}
    \mathrm{MI}^l_{t + \delta} - \mathrm{MI}^l_{t} = \int_0^1 e^{-c_\lambda u} f(u)du \times \int_t^{t +\delta} e^{c_\lambda s}\big(\sigma_s - \xi_t(s)\big) ds
  \end{equation*}
  which means that the behavior of $\mathrm{MI}^l_t$ for $t > 1$ depends on the sign of the process $(\sigma_s - \xi_t(s))_{s\geq t > 1}$. 
\end{itemize}

Note also that the expectation of $(\sigma_s - \xi_t(s))_{s\geq t > 1}$ is always $0$. This ensures that there is no additional market impact on average after time $1$. Moreover, we have
\begin{equation*}
  \E[\mathrm{MI}_t] = 
 \E\bigg[\int_0^\infty \big( \kappa(\wb q^{a,t}_s) - \kappa(q^{a}_s) \big) Y^{a}_s  ds \bigg]
\end{equation*}
which shows again that $\E[\mathrm{MI}_t]$ is monotonic for $t \leq 1$ and constant for $t > 1$. 
This result can be interpreted as follows: in our model, placing limit orders on one side of the order book generates a resistance force against price movements. The more volume in the queue, the more difficult it is for the price to move. By the conclusion of the limit orders strategy, the queues converge towards their stationary state, with the price having already been impacted by our limit orders. The absence of any relaxation or reduction in market impact beyond $t = 1$ is explained by the fact that after the end of the execution of the metaorder, the queues quickly come back to their stationary state. In the scaling limit, this effect is even immediate.

\begin{figure}[htbp] 
  \centering
  \includegraphics[width=0.8\textwidth]{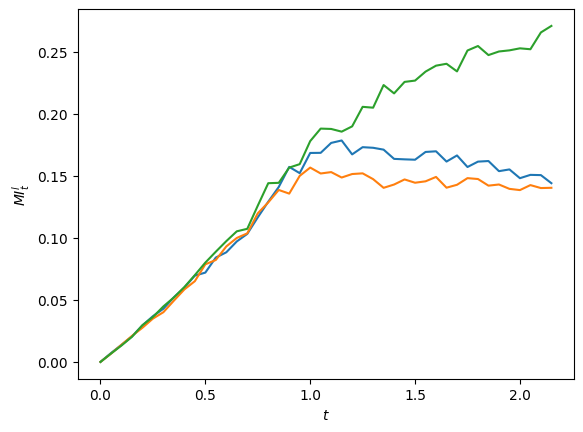}
  \caption{Pathwise market impact simulations for limit orders when $\alpha = 0.6$, $\lambda = 1$ and $\gamma = 0.5$.}
  \label{fig:limit:shape}
\end{figure}



\section{Calibration of $\kappa$}
\label{sec:estimation}

\subsection{$\kappa$ seen as a constant}
\label{sec:estimation:constant}

In \cite{jaisson2015market}, the market price, denoted by $P_t$, satisfies
\begin{equation*}
P_t 
= 
P_0 + \kappa 
\int_0^t 
\xi(t-s)  d (N^{a}_s - N^{b}_s)
\end{equation*}
where $\xi(t) = 1 + ( 1 + \int_0^\infty \psi(s)  ds ) \int_t^\infty \varphi(s)  ds$ and $\psi = \sum_{k\geq 1} \varphi^{*k}$ where $\varphi^{*k}$ stands for the $k$ fold convolution of $\varphi$. In this model, we see that a market order arriving at time $t_0$ pushes instantly the price by $\pm \kappa  \xi(0)$, then its impact at $t > t_0$ is $\pm \kappa  \xi(t - t_0)$. 
In particular, note that $\xi(t) \to 1$ as $t \to \infty$ and therefore the permanent impact of a single order is given by $\kappa$. In that view, $\kappa$ can be seen as a measure of the information content of a given trade. 

In this model, the quadratic variation of the price is given by $\kappa^2  \xi(0)^2 (N^{a}_t + N^{b}_t)$. This is due to the fact that when a market order arrives, the price jumps by $\pm \kappa  \xi(0)$, giving a contribution $\kappa^2  \xi(0)^2$ to the quadratic variation. In between these jumps, the price change only comes from the decay of the function $\xi$, which is differentiable, and thus does not contribute to the quadratic variation. Over a long time period, this quantity is related to the realised variance of stochastic volatility models. From a modeling perspective, the price is often assumed to follow a semi-martingale process
\begin{equation*}
  dX_t = b_t  dt + \sigma_t  dB_t
\end{equation*}
where $b_t$ is the drift, $\sigma_t$ is the volatility and $B$ is a Brownian motion. In that case, the realised variance $\int_0^t \sigma_s^2  ds$ can be estimated using discrete price increments. Let $\delta_n > 0$ be the sampling frequency of the price. Following for instance  A\"it-Sahalia and Jacod \cite{ait2014high}, we have
\begin{equation*}
  \int_0^t \sigma_s^2  ds 
  \approx
  \sum_{k=1}^{\lfloor t / \delta_n \rfloor} (X_{k\delta_n} - X_{(k-1)\delta_n})^2.
\end{equation*}

This relation can be used to estimate $\kappa$. By choosing an appropriate time horizon $T$ and a sampling frequency $\delta_n$, we have
\begin{equation*}
  \kappa^2  \xi(0)^2 (N^{a}_T + N^{b}_T) \approx \sum_{k=1}^{\lfloor T / \delta_n \rfloor} (X_{k\delta_n} - X_{(k-1)\delta_n})^2
\end{equation*}
and therefore we estimate $\kappa$ by
\begin{equation*}
  \wh{\kappa} 
  =
  \bigg(
  \frac{1}{\xi(0)^2 (N^{a}_T + N^{b}_T)}
  \sum_{k=1}^{\lfloor T / \delta_n \rfloor} (X_{k\delta_n} - X_{(k-1)\delta_n})^2
  \bigg)^{1/2}
\end{equation*}
Note that $\xi(0) = (1 - a_T)^{-1}$ and can thus be estimated by fitting a Hawkes kernel on real market data.

\begin{remark}
    It is nowadays accepted that modern financial markets are highly endogenous. In the market price model of \cite{jaisson2015market}, this is illustrated by the fact that most orders in the Hawkes order flow are due to its self-exciting nature rather than the Poissonian baseline. In other words, the endogeneity ratio
\begin{equation*}
  \frac{
  \sum_{k=1}^\infty \norm{\varphi}^k_{L^1}
  }{
  1 + \sum_{k=1}^\infty \norm{\varphi}^k_{L^1}
  }
  =
  \norm{\varphi}_{L^1}
\end{equation*}
is close to $1$.

However, this model only accounts for active trading using market orders. As shown in this paper, this is not the only way to interact with the price, as passive trading also plays a role. To statistically account for this effect, one could introduce an additional Brownian motion $\wt B$ independent of $N^{a}$ and $N^{b}$ and set
\begin{equation*}
P_t 
=
P_0 + \kappa 
\int_0^t 
\xi(t-s)  d (N^{a}_s - N^{b}_s) + \wt\sigma \wt B_t
\end{equation*}
for some constant $\wt \sigma$. In this case, the realised volatility on $[0,T]$ would be $\kappa^2 (N^{a}_T + N^{b}_T) + \wt\sigma^2 T$. Following the same approach as before, we have
\begin{equation*}
  \kappa^2  \xi(0)^2 (N^{a}_T + N^{b}_T) + \wt\sigma^2 T \approx \sum_{k=1}^{\lfloor T / \delta_n \rfloor} (X_{k\delta_n} - X_{(k-1)\delta_n})^2.
\end{equation*}
The quantities $(N^{a}_T + N^{b}_T)$ and $T$ being observed from market data, we can do a linear regression with intercept to estimate $\kappa^2  \xi(0)^2$ and $\wt \sigma^2$.
\end{remark}

\subsection{Dependence in the limit order book state}

In addition to previous effects, it is now well established that the future price move depends on the current limit order book state \cite{burghardt2006measuring,gould2016queue,lehalle2017limit,stoikov2018micro}. In Section \ref{sec:model}, we model this dependence by allowing $\kappa$ to depend on the aggregated volume available on the limit order book. We can adapt the methodology of Section \ref{sec:estimation:constant} to give some statistical evidence that an increase of volatility decreases the value of $\kappa$. In the following, we consider a simplified version of the model of Section \ref{sec:model} where the price is given by
\begin{equation}
\label{eq:simplified}
  P_t = P_0 + \int_0^t \kappa(q^a_s) \xi(t-s) dN^a_s - \int_0^t \kappa(q^b_s) \xi(t-s) dN^b_s.
\end{equation}
Equation \eqref{eq:simplified} can be seen as a simplified version of the model of Section \ref{sec:model} where two main effects are removed. First, in \eqref{eq:simplified}, market orders are the only contributors to the quadratic variation of the price. The effects of limit and cancel orders are removed and only impact the price through the cumulative volumes $q^a$ and $q^b$. Secondly, the conditional expectation defining the price in \eqref{eq:price:lim} is removed and replaced by a formula analogous to the constant $\kappa$ case studied in \cite{jaisson2015market}. In this model, the quadratic variation of the price is given by 
\begin{equation*}
  \xi(0)^2 \bigg( \sum_{dN^a_t = 1} \kappa^2(q_t^a) + \sum_{dN^b_t = 1} \kappa^2(q_t^b) \bigg).
\end{equation*}

Although a non-parametric estimation of $\kappa$ would be preferable, it is out of the scope of this paper. Instead, we assume that $\kappa^2$ has the very simplified form
\begin{equation*}
  \kappa^2(q) = a + b q.
\end{equation*}
Following the same approach as before, we have
\begin{equation*}
  a \xi(0)^2 (N^a_t + N^a_t) + b \xi(0)^2 \bigg( \sum_{dN^a_t = 1} q_t^a + \sum_{dN^b_t = 1} q_t^b \bigg)
  \approx \sum_{k=1}^{\lfloor T / \delta_n \rfloor} (X_{k\delta_n} - X_{(k-1)\delta_n})^2.
\end{equation*} 

We can then estimate $a$ and $b$ by a linear regression. 



\section*{Acknowledgments}
Mathieu Rosenbaum and Gr\'egoire Szymanski gratefully acknowledge the financial support of the \'Ecole Polytechnique chairs {\it Deep Finance and Statistics} and {\it Machine Learning and Systematic Methods}.

\section*{Competing Interests}
The authors declare no competing interests.


\bibliographystyle{spmpsci}      
\bibliography{full_library}

\begin{thebibliography}{10}
\providecommand{\url}[1]{{#1}}
\providecommand{\urlprefix}{URL }
\expandafter\ifx\csname urlstyle\endcsname\relax
  \providecommand{\doi}[1]{DOI~\discretionary{}{}{}#1}\else
  \providecommand{\doi}{DOI~\discretionary{}{}{}\begingroup
  \urlstyle{rm}\Url}\fi

\bibitem{ait2014high}
A{\"\i}t-Sahalia, Y., Jacod, J.: High-frequency financial econometrics.
\newblock Princeton University Press (2014)

\bibitem{almgren2005direct}
Almgren, R., Thum, C., Hauptmann, E., Li, H.: {D}irect estimation of equity
  market impact.
\newblock Risk \textbf{18}(7), 58--62 (2005)

\bibitem{bacry2015market}
Bacry, E., Iuga, A., Lasnier, M., Lehalle, C.A.: {M}arket impacts and the life
  cycle of investors orders.
\newblock Market Microstructure and Liquidity \textbf{1}(02), 1550009 (2015)

\bibitem{bacry2015hawkes}
Bacry, E., Mastromatteo, I., Muzy, J.F.: Hawkes processes in finance.
\newblock Market Microstructure and Liquidity \textbf{1}(01), 1550005 (2015)

\bibitem{bayraktar2014liquidation}
Bayraktar, E., Ludkovski, M.: Liquidation in limit order books with controlled
  intensity.
\newblock Mathematical Finance \textbf{24}(4), 627--650 (2014)

\bibitem{bellani2023price}
Bellani, C., Brigo, D., Pakkanen, M., S{\'a}nchez-Betancourt, L.: Price impact
  without averaging.
\newblock Applied Mathematical Finance \textbf{30}(4), 175--206 (2023)

\bibitem{benzaquen2018market}
Benzaquen, M., Bouchaud, J.P.: Market impact with multi-timescale liquidity.
\newblock Quantitative Finance \textbf{18}(11), 1781--1790 (2018)

\bibitem{bershova2013non}
Bershova, N., Rakhlin, D.: {T}he non-linear market impact of large trades:
  {E}vidence from buy-side order flow.
\newblock Quantitative finance \textbf{13}(11), 1759--1778 (2013)

\bibitem{bouchaud2003fluctuations}
Bouchaud, J.P., Gefen, Y., Potters, M., Wyart, M.: {F}luctuations and response
  in financial markets: the subtle nature ofrandom\textquotesingle price
  changes.
\newblock Quantitative finance \textbf{4}(2), 176 (2003)

\bibitem{bucci2019crossover}
Bucci, F., Benzaquen, M., Lillo, F., Bouchaud, J.P.: Crossover from linear to
  square-root market impact.
\newblock Physical review letters \textbf{122}(10), 108302 (2019)

\bibitem{burghardt2006measuring}
Burghardt, G., Hanweck, J., Lei, L.: Measuring market impact and liquidity.
\newblock The Journal of Trading \textbf{1}(4), 70--84 (2006)

\bibitem{cartea2013modelling}
Cartea, A., Jaimungal, S.: Modelling asset prices for algorithmic and
  high-frequency trading.
\newblock Applied Mathematical Finance \textbf{20}(6), 512--547 (2013)

\bibitem{cont2017optimal}
Cont, R., Kukanov, A.: Optimal order placement in limit order markets.
\newblock Quantitative Finance \textbf{17}(1), 21--39 (2017)

\bibitem{cuchiero2019markovian}
Cuchiero, C., Teichmann, J.: Markovian lifts of positive semidefinite affine
  {V}olterra-type processes.
\newblock Decisions in Economics and Finance \textbf{42}(2), 407--448 (2019)

\bibitem{cuchiero2020generalized}
Cuchiero, C., Teichmann, J.: {Generalized Feller processes and Markovian lifts
  of stochastic Volterra processes: the affine case}.
\newblock Journal of evolution equations \textbf{20}(4), 1301--1348 (2020)

\bibitem{donier2015fully}
Donier, J., Bonart, J., Mastromatteo, I., Bouchaud, J.P.: A fully consistent,
  minimal model for non-linear market impact.
\newblock Quantitative finance \textbf{15}(7), 1109--1121 (2015)

\bibitem{durin2023two}
Durin, B., Rosenbaum, M., Szymanski, G.: The two square root laws of market
  impact and the role of sophisticated market participants.
\newblock arXiv preprint arXiv:2311.18283  (2023)

\bibitem{eisler2012price}
Eisler, Z., Bouchaud, J.P., Kockelkoren, J.: The price impact of order book
  events: market orders, limit orders and cancellations.
\newblock Quantitative Finance \textbf{12}(9), 1395--1419 (2012)

\bibitem{eleuch2019characteristic}
El~Euch, O., Rosenbaum, M.: {T}he characteristic function of rough {H}eston
  models.
\newblock Mathematical Finance \textbf{29}(1), 3--38 (2019)

\bibitem{robert2012measuring}
Engle, R., Ferstenberg, R., Russell, J.: {M}easuring and modeling execution
  cost and risk.
\newblock The Journal of Portfolio Management \textbf{38}(2), 14--28 (2012)

\bibitem{moro2009market}
{Esteban Moro and Javier Vicente and Luis Moyano and Austin Gerig and J. Doyne
  Farme and Gabriella Vaglica and Fabrizio Lillo and Rosario Mantegna}: {Market
  impact and trading profile of hidden orders in stock markets}.
\newblock {Physical Review E} \textbf{80}(6), 066102 (2009)

\bibitem{filimonov2012quantifying}
Filimonov, V., Sornette, D.: {Quantifying reflexivity in financial markets:
  Toward a prediction of flash crashes}.
\newblock {Physical Review E} \textbf{{85}}({5}), 056108 (2012)

\bibitem{filimonov2015apparent}
Filimonov, V., Sornette, D.: {Apparent criticality and calibration issues in
  the Hawkes self-excited point process model: application to high-frequency
  financial data}.
\newblock {Quantitative Finance} \textbf{15}(8), 1293--1314 (2015)

\bibitem{freyre2004review}
Freyre-Sanders, A., Guobuzaite, R., Byrne, K.: {A} review of trading cost
  models: {R}educing transaction costs.
\newblock The Journal of Investing \textbf{13}(3), 93--115 (2004)

\bibitem{gatheral2010no}
Gatheral, J.: {N}o-dynamic-arbitrage and market impact.
\newblock Quantitative finance \textbf{10}(7), 749--759 (2010)

\bibitem{gomes2015market}
Gomes, C., Waelbroeck, H.: {I}s market impact a measure of the information
  value of trades? {M}arket response to liquidity vs. informed metaorders.
\newblock Quantitative Finance \textbf{15}(5), 773--793 (2015)

\bibitem{gould2016queue}
Gould, M.D., Bonart, J.: Queue imbalance as a one-tick-ahead price predictor in
  a limit order book.
\newblock Market Microstructure and Liquidity \textbf{2}(02), 1650006 (2016)

\bibitem{gueant2012optimal}
Gu{\'e}ant, O., Lehalle, C.A., Fernandez-Tapia, J.: Optimal portfolio
  liquidation with limit orders.
\newblock SIAM Journal on Financial Mathematics \textbf{3}(1), 740--764 (2012)

\bibitem{guilbaud2013optimal}
Guilbaud, F., Pham, H.: Optimal high-frequency trading with limit and market
  orders.
\newblock Quantitative Finance \textbf{13}(1), 79--94 (2013)

\bibitem{guo2017optimal}
Guo, X., De~Larrard, A., Ruan, Z.: Optimal placement in a limit order book: an
  analytical approach.
\newblock Mathematics and Financial Economics \textbf{11}, 189--213 (2017)

\bibitem{hardiman2013critical}
Hardiman, S.J., Bercot, N., Bouchaud, J.P.: {Critical reflexivity in financial
  markets: a Hawkes process analysis}.
\newblock The European Physical Journal B \textbf{86}, 1--9 (2013)

\bibitem{haubold2011mittag}
Haubold, H., Mathai, A., Saxena, R.: {M}ittag-{L}effler {F}unctions and {T}heir
  {A}pplications.
\newblock Journal of Applied Mathematics \textbf{2011}, 298628 (2011).
\newblock \doi{10.1155/2011/298628}.
\newblock \urlprefix\url{https://doi.org/10.1155/2011/298628}

\bibitem{hey2023cost}
Hey, N., Bouchaud, J.P., Mastromatteo, I., Muhle-Karbe, J., Webster, K.: {The
  Cost of Misspecifying Price Impact}.
\newblock arXiv preprint arXiv:2306.00599  (2023)

\bibitem{hopman2003essays}
Hopman, C.: {E}ssays on the relation between stock price movements and orders.
\newblock Ph.D. thesis, Massachusetts Institute of Technology (2003)

\bibitem{huang2015simulating}
Huang, W., Lehalle, C.A., Rosenbaum, M.: {Simulating and analyzing order book
  data: The queue-reactive model}.
\newblock {Journal of the American Statistical Association} \textbf{110}(509),
  107--122 (2015)

\bibitem{ikeda1989stochastic}
Ikeda, N., Watanabe, S.: Stochastic {D}ifferential {E}quations and {D}iffusion
  {P}rocesses.
\newblock North-Holland/Kodansha, Amsterdam/Tokyo (1989)

\bibitem{jacod1979calcul}
Jacod, J.: {C}alcul stochastique et probl{\`e}mes de martingales, vol. 714.
\newblock Springer (1979)

\bibitem{jacod1987limit}
Jacod, J., Shiryaev, A.N.: {Limit Theorems for Stochastic Processes}.
\newblock Springer, Berlin, Heidelberg (1987)

\bibitem{jaisson2015market}
Jaisson, T.: {M}arket impact as anticipation of the order flow imbalance.
\newblock Quantitative Finance \textbf{15}(7), 1123--1135 (2015)

\bibitem{jaisson2015limit}
Jaisson, T., Rosenbaum, M.: {L}imit theorems for nearly unstable {H}awkes
  processes.
\newblock The Annals of Applied Probability \textbf{25}(2), 600--631 (2015)

\bibitem{jaisson2016rough}
Jaisson, T., Rosenbaum, M.: Rough fractional diffusions as scaling limits of
  nearly unstable heavy tailed {H}awkes processes.
\newblock The Annals of Applied Probability \textbf{26}(5), 2860--2882 (2016)

\bibitem{jusselin2020noarbitrage}
Jusselin, P., Rosenbaum, M.: No-arbitrage implies power-law market impact and
  rough volatility.
\newblock Mathematical Finance \textbf{30}(4), 1309--1336 (2020)

\bibitem{kyle2023large}
Kyle, A.S., Obizhaeva, A.A.: Large bets and stock market crashes.
\newblock Review of Finance  (2023)

\bibitem{lehalle2017limit}
Lehalle, C.A., Mounjid, O.: Limit order strategic placement with adverse
  selection risk and the role of latency.
\newblock Market Microstructure and Liquidity \textbf{3}(01), 1750009 (2017)

\bibitem{lillo2003master}
Lillo, F., Farmer, J.D., Mantegna, R.N.: {M}aster curve for price-impact
  function.
\newblock Nature \textbf{421}(6919), 129--130 (2003)

\bibitem{mastromatteo2014agent}
Mastromatteo, I., Toth, B., Bouchaud, J.P.: {A}gent-based models for latent
  liquidity and concave price impact.
\newblock Physical Review E \textbf{89}(4), 042805 (2014)

\bibitem{said2017market}
Said, E., Ayed, A.B.H., Husson, A., Abergel, F.: Market impact: A systematic
  study of limit orders.
\newblock Market microstructure and liquidity \textbf{3}(03n04), 1850008 (2017)

\bibitem{stoikov2018micro}
Stoikov, S.: The micro-price: a high-frequency estimator of future prices.
\newblock Quantitative Finance \textbf{18}(12), 1959--1966 (2018)

\bibitem{toth2011anomalous}
T{\'o}th, B., Lemperiere, Y., Deremble, C., De~Lataillade, J., Kockelkoren, J.,
  Bouchaud, J.P.: {A}nomalous price impact and the critical nature of liquidity
  in financial markets.
\newblock Physical Review X \textbf{1}(2), 021006 (2011)

\bibitem{webster2023handbook}
Webster, K.: Handbook of Price Impact Modeling.
\newblock CRC Press (2023)

\end{thebibliography}

\newpage
\appendix

\section{Summary about Hawkes processes}

In this section, we summarize some results about Hawkes processes with time-varying baseline. The proofs are omitted for conciseness. They can however be easily adapted from the constant baseline case, see for instance El Euch and Rosenbaum \cite{eleuch2019characteristic}.

\begin{definition}
A Hawkes process with baseline (or background rate) $\mu : [0,\infty) \to [0,\infty)$ and self-exciting kernel $\varphi : [0,\infty) \to \mathbb{R}$ is a process $N$ adapted to some filtration $(\mathcal{F}_t)_t$ such that the compensator $\Lambda$ of $N$ has the form $\Lambda_t = \int_0^t \lambda_s ds$ where
\begin{equation*}
\lambda_t = \mu_t + \int_0^{t-} \varphi (t-s) dN_s.
\end{equation*}
\end{definition}

\begin{lemma}
\label{lemma:expressions_M}
Define $M = N-\Lambda$ and $\psi=\sum_{k \geq 1} \varphi^{* k}$ where $\varphi^{* k}$ stands for the $k$ fold convolution of $\varphi$. Then for any $0 \leq t \leq T$, we have
\begin{align*}
\lambda_t &= \mu_t + \int_0^t \psi(t-s) \mu_s  ds + \int_0^{t-} \psi(t-s) dM_s,
\\
\int_0^t \lambda_s ds 
&=
\int_0^t \mu_s  ds
+\int_0^t \psi(t-s) \int_0^s \mu_u  du ds
+\int_0^t \psi(t-s) M_s  ds.
\end{align*}
\end{lemma}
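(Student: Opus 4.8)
The plan is to convert the defining relation into a Volterra equation for $\lambda$ with deterministic convolution coefficient and to solve it using the resolvent kernel $\psi=\sum_{k\geq1}\varphi^{*k}$. Substituting $dN_s=dM_s+\lambda_s\,ds$ into the definition of $\lambda_t$ gives
\[
\lambda_t=g_t+(\varphi*\lambda)_t,\qquad g_t:=\mu_t+\int_0^{t-}\varphi(t-s)\,dM_s,
\]
where $*$ denotes convolution on $[0,\infty)$. Since $\psi$ satisfies the resolvent identities $\varphi*\psi=\psi*\varphi=\psi-\varphi$, one checks directly that $\lambda=g+\psi*g$ solves this equation, so the whole task reduces to expanding $\psi*g$.

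For the first identity I would split $g$ into its baseline and martingale parts. The deterministic part contributes $(\psi*\mu)_t=\int_0^t\psi(t-s)\mu_s\,ds$. For the stochastic part, a stochastic Fubini step followed by the substitution $a=u-s$ in the inner integral yields
\[
\int_0^t\psi(t-u)\Big(\int_0^{u-}\varphi(u-s)\,dM_s\Big)\,du=\int_0^{t-}(\psi*\varphi)(t-s)\,dM_s=\int_0^{t-}(\psi-\varphi)(t-s)\,dM_s.
\]
Adding $g_t$ and using $\lambda=g+\psi*g$, the two $\varphi$-weighted stochastic integrals cancel, leaving exactly $\lambda_t=\mu_t+\int_0^t\psi(t-s)\mu_s\,ds+\int_0^{t-}\psi(t-s)\,dM_s$, which is the first claim.

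The second identity then follows by integrating the first over $[0,t]$ and rearranging the two resulting iterated integrals. For the baseline term I would apply Fubini to obtain $\int_0^t\mu_u\,\Psi(t-u)\,du$, where $\Psi(v)=\int_0^v\psi(w)\,dw$, and then integrate by parts (using $\Psi(0)=0$ and the primitive $\int_0^s\mu_u\,du$) to recast it as $\int_0^t\psi(t-s)\big(\int_0^s\mu_u\,du\big)\,ds$. For the martingale term, stochastic Fubini gives $\int_0^{t-}\Psi(t-s)\,dM_s$; since $M=N-\Lambda$ has finite variation with $M_0=0$, a pathwise Stieltjes integration by parts, again using $\Psi(0)=0$, turns this into $\int_0^t\psi(t-s)M_s\,ds$, the claimed form.

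The remaining points are standard but deserve care: convergence of the series defining $\psi$ and the interchange of summation with the convolutions and stochastic integrals (ensured under $\norm{\varphi}_{L^1}<1$ together with local boundedness of $\mu$), and the justification of the stochastic Fubini theorem. The main obstacle is bookkeeping rather than conceptual — tracking the $t-$ upper limits and verifying that the integration by parts for the stochastic term carries no extra correction term, which holds precisely because $M$ is of finite variation here, so that the Stieltjes formula applies with no covariation contribution.
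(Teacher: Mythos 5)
Your argument is correct and is exactly the standard resolvent/Volterra argument: substitute $dN=dM+\lambda\,ds$, solve $\lambda=g+\varphi*\lambda$ via $\psi$, and then Fubini/integration by parts for the two displayed identities. The paper itself omits the proof of this lemma (deferring to the constant-baseline case in the cited literature), and your write-up matches that standard route; the only point worth making explicit is the uniqueness of the solution to the Volterra equation (or, equivalently, the vanishing of the remainder in the Picard iteration under $\norm{\varphi}_{L^1}<1$), which you only implicitly invoke when identifying the Hawkes intensity with $g+\psi*g$.
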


\begin{lemma}
\label{lemma:m1}
For any $0 \leq t \leq T$, we have
\begin{align*}
\mathbb{E}[\lambda_t] &= \mu_t + \int_0^t \psi(t-s) \mu_s  ds,
\\
\mathbb{E}[N_t] &= \int_0^t \mu_s  ds
+\int_0^t \psi(t-s) \int_0^s \mu_u  du ds.
\end{align*}
\end{lemma}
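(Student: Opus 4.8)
The plan is to take expectations directly in the two identities provided by Lemma \ref{lemma:expressions_M}, exploiting that $M = N - \Lambda$ is a martingale with $\mathbb{E}[M_t] = 0$ while all the remaining terms are deterministic.

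First, for the intensity formula I would start from
\begin{equation*}
\lambda_t = \mu_t + \int_0^t \psi(t-s) \mu_s \, ds + \int_0^{t-} \psi(t-s) \, dM_s
\end{equation*}
and take expectations on both sides. The first two terms are deterministic and hence unchanged. The third term is a stochastic integral of the deterministic, hence predictable, integrand $s \mapsto \psi(t-s)$ against the martingale $M$, so its expectation vanishes. This gives $\mathbb{E}[\lambda_t] = \mu_t + \int_0^t \psi(t-s)\mu_s\,ds$ at once.

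For the counting process I would use $N_t = \Lambda_t + M_t = \int_0^t \lambda_s\,ds + M_t$, so that $\mathbb{E}[N_t] = \mathbb{E}\big[\int_0^t \lambda_s\,ds\big]$ because $\mathbb{E}[M_t]=0$. I would then take expectations in the second identity of Lemma \ref{lemma:expressions_M}, namely
\begin{equation*}
\int_0^t \lambda_s\,ds = \int_0^t \mu_s\,ds + \int_0^t \psi(t-s)\int_0^s \mu_u\,du\,ds + \int_0^t \psi(t-s)\, M_s\,ds.
\end{equation*}
The first two terms are again deterministic, and for the last term I would invoke Fubini's theorem to move the expectation inside the time integral, yielding $\int_0^t \psi(t-s)\,\mathbb{E}[M_s]\,ds = 0$. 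This delivers the claimed expression for $\mathbb{E}[N_t]$.

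The only genuinely technical point, the main obstacle such as it is, lies in justifying these two interchanges of expectation and integration. For the stochastic integral this amounts to checking that $\int_0^{t-}\psi(t-s)\,dM_s$ is a true martingale increment rather than merely a local martingale, which follows from the square-integrability of $N$ on the compact interval $[0,T]$ under the stability assumption $\norm{\varphi}_{L^1} < 1$, together with the boundedness of $\psi$ on $[0,t]$. For the Fubini step one needs $\int_0^t \psi(t-s)\,\mathbb{E}|M_s|\,ds < \infty$, which again follows from the finiteness of the first two moments of the Hawkes process on $[0,T]$. Once these integrability bounds are in place, both formulas reduce to the vanishing of $\mathbb{E}[M_s]$.
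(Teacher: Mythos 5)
Your proof is correct and is exactly the argument the paper has in mind: the paper omits the proof of this lemma (stating only that it is adapted from the constant-baseline case), and the natural route is precisely to take expectations in the two identities of Lemma \ref{lemma:expressions_M} and kill the martingale terms, with the integrability checks you describe. The only minor imprecision is the claim that $\psi$ is bounded on $[0,t]$ — it need only be locally integrable (or square-integrable for the stochastic-integral term, as implicitly used in Lemma \ref{lemma:m2}) — but this does not affect the argument.
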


\begin{lemma}
\label{lemma:m2}
For any $0 \leq t \leq T$, we have
\begin{align*}
\mathbb{E}[\lambda_t^2] &= \mathbb{E}[\lambda_t]^2 + \int_0^t \psi(t-s)^2 \mathbb{E}[\lambda_s] ds,
\\
\mathbb{E}[\Lambda_t^2] &= \mathbb{E}[\Lambda_t]^2 + \int_0^t \int_0^t \psi(t-u) \psi(t-v) \mathbb{E}[N_{u\wedge v}] dudv.
\end{align*}
\end{lemma}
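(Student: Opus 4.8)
The plan is to derive both identities from the martingale representations recorded in Lemma \ref{lemma:expressions_M}, which, when combined with the first-moment formulas of Lemma \ref{lemma:m1}, express each of $\lambda_t$ and $\Lambda_t$ as its mean plus a single mean-zero term driven by the martingale $M = N - \Lambda$. Throughout I will use that, since $N$ is a counting process with intensity $\lambda$, the compensated process $M$ is a square-integrable martingale whose predictable quadratic variation is $\langle M \rangle_s = \Lambda_s$, that is $d\langle M \rangle_s = \lambda_s \, ds$.

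For the first identity, subtracting the formula for $\mathbb{E}[\lambda_t]$ in Lemma \ref{lemma:m1} from the expression for $\lambda_t$ in Lemma \ref{lemma:expressions_M} gives $\lambda_t - \mathbb{E}[\lambda_t] = \int_0^{t-} \psi(t-s)\, dM_s$. The integrand $\psi(t-\cdot)$ is deterministic, hence predictable, so the It\^o isometry yields $\mathrm{Var}(\lambda_t) = \mathbb{E}\big[\big(\int_0^{t-}\psi(t-s)\,dM_s\big)^2\big] = \mathbb{E}\big[\int_0^t \psi(t-s)^2 \lambda_s\, ds\big]$. Exchanging the expectation with the deterministic time integral by Fubini produces $\int_0^t \psi(t-s)^2\, \mathbb{E}[\lambda_s]\, ds$, and rearranging $\mathrm{Var}(\lambda_t) = \mathbb{E}[\lambda_t^2] - \mathbb{E}[\lambda_t]^2$ gives the first claim.

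For the second identity, the same comparison between Lemma \ref{lemma:expressions_M} and Lemma \ref{lemma:m1} gives $\Lambda_t - \mathbb{E}[\Lambda_t] = \int_0^t \psi(t-s) M_s\, ds$. Squaring, taking expectations and applying Fubini yields $\mathrm{Var}(\Lambda_t) = \int_0^t \int_0^t \psi(t-u)\psi(t-v)\, \mathbb{E}[M_u M_v]\, du\, dv$. The key computation is the covariance of the martingale: for $u \le v$ the martingale property gives $\mathbb{E}[M_u M_v] = \mathbb{E}[M_u \, \mathbb{E}[M_v \mid \mathcal{F}_u]] = \mathbb{E}[M_u^2] = \mathbb{E}[\langle M\rangle_u] = \mathbb{E}[\Lambda_u] = \mathbb{E}[N_u]$, so in general $\mathbb{E}[M_u M_v] = \mathbb{E}[N_{u\wedge v}]$. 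Substituting this identity into the double integral gives exactly the stated expression for $\mathbb{E}[\Lambda_t^2]$.

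The main obstacle is not conceptual but one of justification: one must check that $M$ is genuinely square-integrable so that the It\^o isometry is legitimate and the double integral converges, and that the integrability is strong enough to permit the Fubini exchanges. This reduces to the finiteness of $\mathbb{E}[\lambda_s]$ and $\mathbb{E}[N_s]$ on $[0,T]$, which is furnished by Lemma \ref{lemma:m1} together with the standing stability assumption $\norm{\varphi}_{L^1} < 1$ (equivalently $\psi \in L^1$). Once these moment bounds are secured, every interchange is valid and the two identities follow by the direct computations above.
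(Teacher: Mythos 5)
Your proof is correct and follows the standard route: the paper itself omits the proof of this lemma (deferring to the constant-baseline literature), and the argument it has in mind is precisely yours — subtract the first-moment formulas of Lemma \ref{lemma:m1} from the representations of Lemma \ref{lemma:expressions_M} to isolate the mean-zero martingale terms, then apply the It\^o isometry with $\langle M\rangle = \Lambda$ for the intensity and the covariance identity $\E[M_u M_v] = \E[N_{u\wedge v}]$ for the compensator. Your closing remarks on square-integrability and the Fubini exchanges are the right things to check and are settled by the moment bounds you cite.
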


\begin{corollary}
\label{cor:bound:Ninfty}
Suppose that $\mu \in L^1$ and that $\norm{\varphi}_{L^1} < 1$. Then $\E[N_\infty]$ and $\E[N_\infty^2]$ are finite.
\end{corollary}

\begin{proof}
We start we $\E[N_\infty]$. By monotone convergence theorem, it is enough to prove that $\E[N_t]$ is bounded by a constant uniformly in $t$. Using Lemma \ref{lemma:m1}, we have
\begin{equation*}
\mathbb{E}[N(t)] 
= \int_0^t \mu_s  ds +\int_0^t \psi(t-s) \int_0^s \mu_u  du ds
\leq \norm{\mu}_{L^1} + \norm{\psi}_{L^1} \norm{\mu}_{L^1}
\end{equation*}
which is bounded because $\norm{\psi}_{L^1} = \norm{\varphi}_{L^1} / (1-\norm{\varphi}_{L^1})^{-1} < \infty$ since $\norm{\varphi}_{L^1} < 1$.

We now study $\E[N_\infty^2]$. Similarly, it is enough to prove that $\E[N_t^2]$ is uniformly bounded. Using that $M$ is a martingale and $\langle M \rangle = N$, we have
\begin{equation*}
\mathbb{E}[N_t^2] 
\leq 2 (\mathbb{E}[M_t^2] + \mathbb{E}[\Lambda_t^2]) \leq 2 \bigg(\mathbb{E}[N_t] + \mathbb{E}[\Lambda_t]^2 + \int_0^t \int_0^t \psi(t-u) \psi(t-v) \mathbb{E}[N_{u\wedge v}]  dudv \bigg).
\end{equation*}
Then, using that for all $u,v \leq t$, $\mathbb{E}[N_{u\wedge v}] \leq \mathbb{E}[N_t] = \mathbb{E}[\Lambda_t]$ is bounded, we have
\begin{equation*}
\mathbb{E}[N_t^2] 
\leq 2 (\mathbb{E}[M_t^2] + \mathbb{E}[\Lambda_t^2])
\leq C \bigg(1 + \int_0^t \int_0^t \psi(t-u) \psi(t-v)  dudv \bigg)
\\
= C \bigg(1 + \bigg(\int_0^t \psi(u)  du \bigg)^2 \bigg),
\end{equation*}
and we conclude using that $\norm{\psi}_{L^1} < \infty$.
\end{proof}

\section{Proof of Theorem \ref{thm:finiteness:price:lim}}
\label{proof:thm:finiteness:price:lim}

\subsection{Outline of the proof}

The main idea of the proof is to use the fact that although a Hawkes process is not Markovian, since it depends on the entire history of arrivals, we can lift it and see it as a Markov process in a bigger state space, see Cuchiero and Teichmann \cite{cuchiero2019markovian,cuchiero2020generalized}. More precisely, we have the following result.
\begin{lemma}
\label{lemma:lift:hawkes}
Let $N$ be a Hawkes process with baseline $\mu$ and self-exciting kernel $\varphi$ and we write $\calF_t$ the natural filtration of the Hawkes process. For $t \geq 0$, the distribution of $(N_{s+t})_{s\geq 0}$ given $\calF_t$ is the same as $N_t + \wt N + \wh N$, where $\wt N$ and $\wh N$ have the following properties
\begin{enumerate}
\item $\wt N$ and $\wh N$ are two independent processes,
\item $\wt N$ is independent of $\calF_t$ and is a Hawkes process with the same distribution as $N$,
\item Conditionally on $\calF_t$, $\wh N$ is a Hawkes process with self-exciting kernel $\varphi$ and time-varying baseline 
\begin{equation*}
\wh \mu_s = \int_0^{t} \varphi(t+s-u) d N_u.
\end{equation*}
\end{enumerate}
\end{lemma}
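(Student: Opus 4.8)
I would prove Lemma~\ref{lemma:lift:hawkes} in two stages: first establish a \emph{restart} property showing that, conditionally on $\calF_t$, the shifted increment process is itself a Hawkes process with a time-varying baseline encoding the memory of the jumps before $t$; then split that baseline into two pieces using the immigrant--offspring (cluster) representation. The statement is distributional, so it suffices to match conditional laws.

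For the restart property, set $N'_s = N_{t+s} - N_t$ for $s \geq 0$ and $\calG_s = \calF_{t+s}$, and split the intensity of $N$ at time $t+s$ across the cut-off $t$:
\begin{equation*}
\lambda_{t+s}
= \mu + \int_0^{(t+s)-} \varphi(t+s-u)\, dN_u
= \Big( \mu + \int_0^{t} \varphi(t+s-u)\, dN_u \Big) + \int_t^{(t+s)-} \varphi(t+s-u)\, dN_u.
\end{equation*}
The first bracket is exactly $\mu + \wh\mu_s$ and is $\calF_t$-measurable, while the change of variables $r = u-t$ turns the remaining term into $\int_0^{s-} \varphi(s-r)\, dN'_r$. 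Hence the $(\calG_s)$-intensity of $N'$ equals $\mu + \wh\mu_s + \int_0^{s-}\varphi(s-r)\, dN'_r$, so that, conditionally on $\calF_t$, $N'$ is a Hawkes process with kernel $\varphi$ and the (now deterministic) time-varying baseline $\mu + \wh\mu$.

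For the second stage I would invoke the superposition property of Hawkes processes: a Hawkes process with kernel $\varphi$ and baseline $\mu_1 + \mu_2$ has the same law as the sum of two independent Hawkes processes with kernel $\varphi$ and baselines $\mu_1$ and $\mu_2$. This is most transparent from the cluster representation, where immigrants arrive as a (possibly inhomogeneous) Poisson process of rate $\mu_1 + \mu_2$ and each spawns an independent offspring cascade; one constructs instead two independent immigrant streams of rates $\mu_1$ and $\mu_2$ with independent cascades. Applying this with $\mu_1 = \mu$ and $\mu_2 = \wh\mu$, I take the rate-$\mu$ immigrant stream and its cascades independent of $\calF_t$, and the rate-$\wh\mu$ stream driven by the $\calF_t$-measurable function $\wh\mu$. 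The descendants of the rate-$\mu$ immigrants define $\wt N$, a Hawkes process with baseline $\mu$ (hence the same law as $N$) built from randomness independent of $\calF_t$, and the descendants of the rate-$\wh\mu$ immigrants define $\wh N$, a Hawkes process with baseline $\wh\mu$; the two are independent by construction. Since $\wt N + \wh N$ is then a Hawkes process with baseline $\mu + \wh\mu$, its law matches the conditional law of $N'$ obtained in the first stage, which gives properties (1)--(3).

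The main obstacle is the rigorous justification of the restart property: computing the $(\calG_s)$-intensity of $N'$ is not by itself enough, and one must appeal to the characterization of a simple point process through its predictable stochastic intensity (the martingale problem for the compensator) to conclude that this intensity genuinely pins down the conditional law of $N'$ given $\calF_t$. Once this is in place, the superposition step is routine, the only care needed being the measurability bookkeeping that keeps the $\wt N$-randomness independent of $\calF_t$; as noted in the excerpt, the whole argument runs parallel to the constant-baseline case and can be adapted from \cite{eleuch2019characteristic}.
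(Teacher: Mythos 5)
Your proof is correct, but it takes a genuinely different route from the paper's. You argue purely at the level of distributions: first a restart property (conditionally on $\calF_t$, the increment process $N'_s = N_{t+s}-N_t$ is a Hawkes process with kernel $\varphi$ and baseline $\mu + \wh\mu_s$, justified via the martingale characterization of the compensator), then the superposition property of Hawkes processes obtained from the immigrant--offspring cluster representation, splitting the baseline $\mu + \wh\mu$ into two independent immigrant streams. The paper instead works pathwise with the driving Poisson point measure $\pi$ from the representation $N_t = \int_0^t\int_0^\infty \1_{z\leq\lambda_{s-}}\,\pi(ds,dz)$: it defines $\wh N$ self-consistently as the points of the shifted measure lying below the band $z\leq\wh\lambda_{u-}$ (where $\wh\lambda_s = \wh\mu_s + \int_0^s\varphi(s-u)\,d\wh N_u \leq \lambda_{t+s}$), sets $\wt N = N' - \wh N$, and checks by direct computation that $\wt\lambda = \lambda_{t+\cdot}-\wh\lambda = \mu + \int_0^\cdot\varphi(\cdot-u)\,d\wt N_u$; independence then follows from the restriction of a Poisson measure to disjoint regions. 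Both arguments are valid for the lemma as stated, which is distributional. What the paper's thinning construction buys in addition is an explicit realization of $\wt N$ and $\wh N$ on the \emph{same} probability space as $N$, which is the form actually exploited later in the coupling arguments of the proof of Theorem \ref{thm:finiteness:price:lim}; your cluster-based superposition would need a small extra step (or an appeal to a coupling version of the cluster decomposition) to deliver that. Conversely, your two-stage argument isolates the restart property cleanly and avoids the fixed-point verification that $\wh\lambda_s\leq\lambda_{t+s}$.
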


We apply this result to $N^a$ and $N^b$. We introduce 
\begin{equation*}
\wh \mu^a_s = \int_0^{t} \varphi(t+s-u)  dN^a_u
\quad \text{ and } \quad 
\wh \mu^b_s = \int_0^{t} \varphi(t+s-u)  dN^b_u.
\end{equation*}
and we consider four processes $(\wh N^{a}, \wh N^{b}, \wt N^{a}, \wt N^{b})$ with the following properties:
\begin{enumerate}
\item The processes $(\wh N^{a}, \wh N^{b}, \wt N^{a}, \wt N^{b})$ are independent,
\item $\wt N^{a}$ and $\wt N^{b}$ are independent of $\calF_t$ and are two Hawkes processes with the same distribution as $N^{a}$ and $N^{b}$,
\item Conditionally on $\calF_t$, $\wh N^{a}$ and $\wh N^{b}$ are two Hawkes processes with self-exciting kernel $\varphi$ and time-varying baseline $\wh \mu^a_s$ and $\wh \mu^b_s$.
\end{enumerate}
Therefore, conditionally on $\calF_t$, $(N^a_{s+t}, N^b_{s+t})$ has the same distribution as $(N^a_t + \wt N^a_{s} + \wh N^a_{s}, N^b_t + \wt N^b_{s} + \wh N^b_{s})$. Moreover, note that given $(N^a, N^b)$, the dynamic of $(q^a, q^b)$ is Markovian. Therefore, conditionally on $\calF_t$, the dynamic of $(N^a_{s+t}, N^b_{s+t}, q^a_{s+t}, q^b_{s+t})$ is the same as 
$(N^a_t + \wt N^a_{s} + \wh N^a_{s}, N^b_t + \wt N^b_{s} + \wh N^b_{s}, \wt q^a_s, \wt q^b_s)$ where 
\begin{equation*}
\wt q^a_s = q^a_t + \wt L^{a}_s - \wt C^{a}_s - \big(\wt N^a_{s} + \wh N^a_{s} \big)
\end{equation*}
and 
\begin{equation*}
\wt q^b_s = q^b_t + \wt L^{b}_s - \wt C^{b}_s - \big(\wt N^b_{s} + \wh N^b_{s} \big)
\end{equation*}
and where $(\wt L^{a}, \wt L^{b}, \wt C^{a}, \wt C^{b})$ are time inhomogeneous Poisson processes given by
\begin{equation*}
\begin{split}
\wt L^{a}_s &= \int_0^s \int_0^\infty \1_{\{\wt \lambda^{L, a}_u \leq z\}} \wt \pi^{L, a}(du, dz),
 \quad \quad
\wt L^{b}_s = \int_0^s \int_0^\infty \1_{\{\wt \lambda^{L, b}_u \leq z\}} \wt \pi^{L, b}(du, dz), \\
\wt C^{a}_s &= \int_0^s \int_0^\infty \1_{\{\wt \lambda^{C, a}_u \leq z\}} \wt \pi^{C, a}(du, dz),
 \quad \quad
\wt C^{b}_s = \int_0^s \int_0^\infty \1_{\{\wt \lambda^{C, b}_u \leq z\}} \wt \pi^{C, b}(du, dz)
\end{split}
\end{equation*}
for some independent Poisson point measures $(\wt \pi^{L, a}, \wt \pi^{L, b}, \wt \pi^{C, a}, \wt \pi^{C, b})$ which are also independent from $(\wh N^{a}, \wh N^{b}, \wt N^{a}, \wt N^{b})$ and of $\calF_t$; and with intensities 
\begin{equation*}
\wt \lambda^{L, a}_s = \lambda^{L}(\wt q^a_{s-}), \quad \quad
\wt \lambda^{L, b}_s = \lambda^{L}(\wt q^b_{s-}), \quad \quad
\wt \lambda^{C, a}_s = \lambda^{C}(\wt q^a_{s-}),
\quad \text{ and } \quad
\wt \lambda^{C, b}_s = \lambda^{C}(\wt q^b_{s-}).
\end{equation*}
Combining all these definitions, we have for all $T > t$
\begin{align*}
\E\bigg[\int_0^T \kappa(q^a_s)  dN^a_s \bigg|  \calF_t \bigg]
&=
\int_0^t \kappa(q^a_s)  dN^a_s
+
\E\bigg[\int_t^T \kappa(q^a_s)  dN^a_s \bigg|  \calF_t \bigg]
\\
&=
\int_0^t \kappa(q^a_s)  dN^a_s
+
\E\bigg[\int_0^{T-t} \kappa(\wt q^a_s)  d\wt N^a_s + \int_0^{T-t} \kappa(\wt q^a_s)  d\wh N^a_s \bigg|  \calF_t \bigg]
\end{align*}
and, similarly, we have
\begin{equation*}
\E\bigg[\int_0^T \kappa(q^b_s)  dN^b_s \bigg|  \calF_t \bigg]
=
\int_0^t \kappa(q^b_s)  dN^b_s
+
\E\bigg[\int_0^{T-t} \kappa(\wt q^b_s)  d\wt N^b_s + \int_0^{T-t} \kappa(\wt q^b_s)  d\wh N^b_s \bigg|  \calF_t \bigg].
\end{equation*}
Replacing $T$ by $T+t$, we see that the proof of Theorem \ref{thm:finiteness:price:lim} results in proving the convergence when $T \to \infty$ of
\begin{equation}
\label{eq:limit:simplify}
\E\bigg[\int_0^{T} \kappa(\wt q^a_s)  d\wt N^a_s + \int_0^{T} \kappa(\wt q^a_s)  d\wh N^a_s \bigg|  \calF_t \bigg]
- 
\E\bigg[\int_0^{T} \kappa(\wt q^b_s)  d\wt N^b_s + \int_0^{T} \kappa(\wt q^b_s)  d\wh N^b_s \bigg|  \calF_t \bigg].
\end{equation}

The idea now is to replace the random variables in the second expectation with identically distributed ones that are coupled with those in the first expectation, specifically, they share the same Poisson point process driving their jump times. Note first that $\wt N^b$ and $\wt N^a$ are two independent Hawkes processes with the common baseline and self-exciting kernel and that they are both independent from $\calF_t$. Similarly, $(\wt \pi^{L, b}, \wt \pi^{C, b})$ and $(\wt \pi^{L, a}, \wt \pi^{C, a})$ are independent, independent from $\calF_t$ and have the same distribution. Therefore, we have
\begin{equation}
\label{eq:rewriting:coupling}
\E\bigg[\int_0^{T} \kappa(\wt q^b_s)  d\wt N^b_s+ \int_0^{T} \kappa(\wt q^b_s)  d\wh N^b_s \bigg|  \calF_t \bigg]
=
\E\bigg[\int_0^{T} \kappa(\wt q^{b, \prime}_s)  d\wt N^a_s+ \int_0^{T} \kappa(\wt q^{b, \prime}_s)  d \wh N^b_s \bigg|  \calF_t \bigg]
\end{equation}
where 
\begin{equation*}
\wt q^{b, \prime}_s = q^b_t + \wt L^{b, \prime}_s - \wt C^{b, \prime}_s - \big(\wt N^a_{s} + \wh N^b_{s} \big)
\end{equation*}
and where we write
\begin{equation*}
\wt L^{b, \prime}_s = \int_0^s \int_0^\infty \1_{\{\wt \lambda^{L, b, \prime}_u \leq z\}} \wt \pi^{L, a}(du, dz) 
\quad \text{ and } \quad
\wt C^{b, \prime}_s = \int_0^s \int_0^\infty \1_{\{\wt \lambda^{C, b, \prime}_u \leq z\}} \wt \pi^{C, a}(du, dz)
\end{equation*}
with
\begin{equation*}
\wt \lambda^{L, b, \prime}_s = \lambda^{L}(\wt q^{b\prime}_{s-}) 
\quad \text{ and } \quad
\wt \lambda^{C, b, \prime}_s = \lambda^{C}(\wt q^{b\prime}_{s-}).
\end{equation*}

Note in addition that
\begin{equation*}
\int_0^t \wh \mu^a_s  ds \leq \norm{\varphi}_{L^1} N^a_t
\quad \text{ and } \quad
\int_0^t \wh \mu^b_s  ds \leq \norm{\varphi}_{L^1} N^b_t
\end{equation*}
which is almost surely finite. Combining \eqref{eq:limit:simplify} and \eqref{eq:rewriting:coupling}, we see that the proof of Theorem \ref{thm:finiteness:price:lim} is completed once we prove the convergence when $T\to \infty$ of the quantity
\begin{equation*}
\E\bigg[\int_0^{T} \kappa(\wt q^a_s)  d (\wt N^a_s + \wh N^a_s) \bigg|  \calF_t \bigg]
- 
\E\bigg[\int_0^{T} \kappa(\wt q^{b, \prime}_s)  d(\wt N^a_s + \wh N^b_s ) \bigg|  \calF_t \bigg].
\end{equation*}
We conclude using the description of the law of $(\wt N^a_s,  \wh N^a_s, \wh N^b_s)$ conditionally to $\calF_t$ and the following Proposition, proved in Section \ref{sec:proof:lemma:complete:finiteness:price:lim}. 

\begin{proposition}
\label{lemma:complete:finiteness:price:lim}
Let $f$ and $g$ be two $L^1$ functions. We consider three independent Hawkes processes $N$, $N^f$ and $N^g$ with baseline $\mu$ (constant), $f$ and $g$. We consider also $q^f$ an $q^g$ such that 
\begin{equation*}
\begin{cases}
q^f_t 
=
q^f_0 
+
\int_0^t \int_0^\infty
\1_{\{z \leq \lambda^{L}(q^f_{s-})\}}

\pi^{L}(ds,  dz) 
-
\int_0^t \int_0^\infty
\1_{\{z \leq \lambda^{C}(q^f_{s-})\}}

\pi^{C}(ds,  dz) 
-
N_t
-
N^f_t
\\
q^g_t 
=
q^g_0 
+
\int_0^t \int_0^\infty
\1_{\{z \leq \lambda^{L}(q^g_{s-})\}}

\pi^{L}(ds,  dz) 
-
\int_0^t \int_0^\infty
\1_{\{z \leq \lambda^{C}(q^g_{s-})\}}

\pi^{C}(ds,  dz) 
-
N_t
-
N^g_t
\end{cases}
\end{equation*}
Then the limit
\begin{equation*}
\lim_{T\to\infty}
\E
\bigg[
\int_0^T \kappa(q_s^f) d(N^f_s + N_s)
-
\int_0^T \kappa(q_s^g) d(N^g_s + N_s)
\bigg]
\end{equation*}
is well defined and finite.
\end{proposition}

\subsection{Proof of Proposition \ref{lemma:complete:finiteness:price:lim}}
\label{sec:proof:lemma:complete:finiteness:price:lim}
We first write
\begin{align*}
\E
\bigg[
\int_0^T \kappa(q_s^f) d(N^f_s + N_s)
-
\int_0^T \kappa(q_s^g) d(N^g_s + N_s)
\bigg]
=
&\E
\bigg[
\int_0^T \kappa(q_s^f) dN^f_s
\bigg]
-
\E
\bigg[\int_0^T \kappa(q_s^g) dN^g_s
\bigg]
\\
&+
\E
\bigg[\int_0^T \big( \kappa(q_s^f) - \kappa(q_s^g) \big) dN_s
\bigg].
\end{align*}
Note that $\int_0^T \kappa(q_s^f) dN^f_s$ is non-decreasing, non-negative and converges towards $\int_0^\infty \kappa(q_s^f) dN^f_s$. Using that $\kappa$ is bounded and non-negative, we also have
\begin{equation*}
\int_0^\infty \kappa(q_s^f) dN^f_s
\leq 
\norm{\kappa}_\infty N^f_\infty.
\end{equation*}
We introduce the stopping times $\tau_f$ and $\tau_g$ defined by 
\begin{equation*}
\begin{cases}
\tau^f = \sup \{t \geq 0: N_t^f \neq N_{t-}^f \},
\\
\tau^g = \sup \{t \geq 0: N_t^g \neq N_{t-}^g \}
\end{cases}
\end{equation*}
and we write $\tau = \tau^f \vee \tau^g$. Since $\norm{f}_{L^1} + \norm{q}_{L^1} < \infty$ , we know that $\tau$ is almost surely finite.

Since $N^f$ is a Hawkes process with a time varying baseline, we can compute explicitly its expectation and we can see using Corollary \ref{cor:bound:Ninfty} that 
$\E[N^f_\infty]$ is finite. 
Therefore monotone's convergence theorem yields
\begin{equation*}
\E
\bigg[
\int_0^T \kappa(q_s^f) dN^f_s
\bigg]
\to_{T\to\infty} 
\E
\bigg[
\int_0^{\tau^f} \kappa(q_s^f) dN^f_s
\bigg] < \infty.
\end{equation*}
Similarly, we have
\begin{equation*}
\E
\bigg[
\int_0^T \kappa(q_s^g) dN^g_s
\bigg]
\to_{T\to\infty} 
\E
\bigg[
\int_0^{\tau^f} \kappa(q_s^g) dN^g_s
\bigg] < \infty.
\end{equation*}
Therefore, it remains to prove that 
\begin{equation*}
\E
\bigg[
\int_0^T \big( \kappa(q_s^f) - \kappa(q_s^g) \big) dN_s
\bigg]
\end{equation*}
converges as $T\to\infty$.  To prove this limit, we first claim that that
\begin{equation}
\label{eq:finite:integral}
\E
\bigg[
\int_0^\infty | \kappa(q_s^f) - \kappa(q_s^g) |  dN_s
\bigg]
< \infty.
\end{equation}
Note that this expectation is always well defined because $| \kappa(q_s^f) - \kappa(q_s^g) |$ is nonnegative. Moreover, this limit the limit of the non-decreasing sequence $\E
[
\int_0^T | \kappa(q_s^f) - \kappa(q_s^g) | dN_s
]$.
We now complete the proof of Proposition \ref{lemma:complete:finiteness:price:lim} and we then prove that \eqref{eq:finite:integral} holds. For $T > 0$, we write 
$E_T = \E
[
\int_0^T ( \kappa(q_s^f) - \kappa(q_s^g) ) dN_s
]
$
and we want to prove that $(E_T)_T$ is a Cauchy sequence. Let $\varepsilon > 0$. Then there exists $R > 0$ such that 
\begin{equation*}
\E
\bigg[
\int_0^\infty | \kappa(q_s^f) - \kappa(q_s^g) |  dN_s
\bigg]
-
\E
\bigg[
\int_0^R | \kappa(q_s^f) - \kappa(q_s^g) |  dN_s
\bigg]
=
\E
\bigg[
\int_R^\infty | \kappa(q_s^f) - \kappa(q_s^g) |  dN_s
\bigg]
< \varepsilon.
\end{equation*}
Therefore, for $R \leq T_1 \leq T_2 < \infty$, we have
\begin{align*}
    |E_{T_1} - E_{T_2}| 
    &\leq
    \bigg|
    \E
    \bigg[
    \int_{T_1}^{T_2} \big( \kappa(q_s^f) - \kappa(q_s^g) \big) dN_s
    \bigg]
    \bigg|
\\    
    &\leq
    \E
    \bigg[
    \int_{T_1}^{T_2} | \kappa(q_s^f) - \kappa(q_s^g) | dN_s
    \bigg]
\\    
    &\leq
    \E
    \bigg[
    \int_{R}^{\infty} | \kappa(q_s^f) - \kappa(q_s^g) | dN_s
    \bigg]
\\    
    &\leq
    \varepsilon.
\end{align*}
Thus $(E_T)_T$ is a Cauchy sequence and thus converges as $T \to \infty$ towards a finite value.

It remains to prove that \eqref{eq:finite:integral} holds. We define $(\nu_1, \dots, \nu_{S})$ the jump times of $N^f + N^g$, where $S = N^f_\tau + N^g_\tau$. We write $\nu_0 = 0$ and $\nu_{S+1} = \infty$ for conciseness. We then claim the following.

\begin{lemma}
\label{lem:control:diff:q}
$|q^f_t - q^g_t|$ is non-increasing on all intervals $(\nu_i, \nu_{i+1})$. Moreover, for all $t \geq 0$, we have
\begin{equation*}
|q^f_t - q^g_t|
\leq 
|q^f_0 - q^g_0 | + N^f_t + N^g_t.
\end{equation*}

\end{lemma}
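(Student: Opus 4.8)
The plan is to track the scalar difference $D_t = q^f_t - q^g_t$ and to argue that the only mechanism that can enlarge $|D_t|$ is a jump of one of the decoupled baselines $N^f$ or $N^g$, each contributing at most one unit. The structural fact driving everything is that $q^f$ and $q^g$ are built from the \emph{same} Poisson point measures $\pi^{L}$, $\pi^{C}$ and the \emph{same} common Hawkes process $N$; they differ only through the baselines $N^f$ and $N^g$. Since all six driving objects are independent, almost surely no two of them jump at the same instant, so I may examine the effect of one jump at a time. Because the queues are integer valued, $D$ is integer valued and all its jumps have size at most one; this integrality is what will make ``move toward zero'' equivalent to ``does not increase $|D|$''.

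First I would dispose of the jumps that are harmless or that account for the growth term. A jump of $N$ lowers both $q^f$ and $q^g$ by one and hence leaves $D$ unchanged. A jump of $N^f$ lowers $q^f$ by one, changing $D$ by $-1$, and a jump of $N^g$ changes $D$ by $+1$; these are precisely the jumps occurring at the times $\nu_i$, and each of them can raise $|D|$ by at most one.

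The key step is to show that an atom of $\pi^{L}$ or of $\pi^{C}$ can never increase $|D|$, which is exactly the assertion that $|D|$ is non-increasing on each interval $(\nu_i,\nu_{i+1})$. Take an atom $(s,z)$ of $\pi^{L}$ and suppose $D_{s-} > 0$, i.e.\ $q^f_{s-} > q^g_{s-}$, so that $D_{s-} \geq 1$ by integrality. Since $\lambda^{L}$ is decreasing, $\lambda^{L}(q^f_{s-}) \leq \lambda^{L}(q^g_{s-})$, whence $\1_{z \leq \lambda^{L}(q^f_{s-})} - \1_{z \leq \lambda^{L}(q^g_{s-})} \in \{-1,0\}$ and the jump of $D$ is non-positive; combined with $D_{s-}\geq 1$ this gives $0 \leq D_s \leq D_{s-}$, hence $|D_s| \leq |D_{s-}|$. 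The sign-reversed case $D_{s-}<0$ is symmetric, and for $\pi^{C}$ one uses instead that $\lambda^{C}$ is increasing, which makes the resulting jump of $D$ point back toward zero; when $D_{s-}=0$ the two intensities coincide and $D$ does not move. Thus every $\pi^{L}$- and $\pi^{C}$-driven jump contracts (or fixes) $|D|$, giving the first claim.

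Finally I would assemble the two observations. On each interval $(\nu_i,\nu_{i+1})$ the quantity $|D_t|$ does not increase, while at each of the times $\nu_1 < \nu_2 < \cdots$ it grows by at most one; since the number of such times in $[0,t]$ equals $N^f_t + N^g_t$, an induction over the successive jumps yields
\begin{equation*}
|D_t| \leq |D_0| + N^f_t + N^g_t = |q^f_0 - q^g_0| + N^f_t + N^g_t,
\end{equation*}
which is the desired bound. The main (indeed the only) delicate point is the contraction step for the shared limit and cancel dynamics: it is precisely the monotonicity hypotheses $\lambda^{L}$ decreasing and $\lambda^{C}$ increasing, together with the integrality of the queues, that force these jumps to oppose the current sign of $D$ and thereby prevent $|D|$ from ever growing except through the two decoupled baselines.
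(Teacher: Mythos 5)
Your proof is correct and follows essentially the same route as the paper's: a case analysis on the sign of $q^f_{t-}-q^g_{t-}$, using that $\lambda^L$ is decreasing and $\lambda^C$ is increasing to force every jump coming from the shared measures $\pi^L,\pi^C$ to move the difference toward zero, while the decoupled jumps of $N^f$ and $N^g$ at the times $\nu_i$ change it by at most one, and then an induction over the $\nu_i$ gives the bound. Your explicit appeal to integrality to pass from ``the jump is non-positive'' to ``$|D|$ does not increase'' is a slightly cleaner articulation of a step the paper leaves implicit, but the argument is the same.
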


Intuitively, these results hold because of the properties of $\lambda^L$, $\lambda^C$ and because of the coupling between $q^f$ and $q^g$. A detailed proof can be found in Section \ref{sec:proof:remaininglemmas}. In particular, we see that $|q^f_t - q^g_t|$ is bounded by $M = S + |q^f_0 - q^g_0 |$ which is independent of $N$. 

We now define for each $k \geq 0$ and $0 \leq i \leq S$
\begin{equation*}
\tau_{i, k} = \inf \{t \geq \nu_i:  |q^f_t - q^g_t| \leq k \}.
\end{equation*}
so that we have for all $T \geq 0$
\begin{align*}
\int_0^\infty | \kappa(q_s^f) - \kappa(q_s^g) |  dN_s
&\leq \sum_{i=0}^S 
\bigg|
\int_{\nu_i}^{\nu_{i+1}} | \kappa(q_s^f) - \kappa(q_s^g) |  dN_s
\bigg|
\\
&\leq \sum_{i=0}^S 
\sum_{k\geq 1}
\int_{\tau_{i,k}}^{\tau_{i,k-1} \wedge \nu_{i+1}} | \kappa(q_s^f) - \kappa(q_s^g) | dN_s
\\
&\leq \sum_{i=0}^S 
\sum_{k\geq 1}
2 \norm{\kappa}_\infty
(N_{\tau_{i,k-1} \wedge \nu_{i+1}} - N_{\tau_{i,k}}).
\end{align*}
Note that since $|q^f_t - q^g_t|$ is bounded by $M$, we know that $\tau_{i,k} = \tau_{i,M}$ for all $k \geq M$. Thus, in the last line, the sum over $k \geq 1$ can be replaced by a sum over $1 \leq k \leq M$. The idea is then to use the fact that $N$ is a Hawkes process to have
\begin{equation*}
\E [ N_{\tau_{i,k-1} \wedge \nu_{i+1}} - N_{\tau_{i,k}} ] \leq C \big((\tau_{i,k-1} \wedge \nu_{i+1}) - \tau_{i,k}\big)
\end{equation*}
and then conclude by studying the difference $\tau_{i,k-1} \wedge \nu_{i+1} - \tau_{i,k}$. However, this does not work directly because $\tau_{i,k-1} \wedge \nu_{i+1}$ and $\tau_{i,k}$ are not independent of $N$. However, note that in the definition of $\tau_{i,k}$, we have
\begin{align*}
|q^f_t - q^g_t|
=
\bigg|
q^f_0 
&-
q^g_0 
+
\int_0^t \int_0^\infty
(
\1_{\{z \leq \lambda^{L}(q^f_{s-})\}}
-
\1_{\{z \leq \lambda^{L}(q^g_{s-})\}}
)
\pi^{L}(ds,  dz) 
\\
&-
\int_0^t \int_0^\infty
(
\1_{\{z \leq \lambda^{C}(q^f_{s-})\}}
-
\1_{\{z \leq \lambda^{C}(q^g_{s-})\}}
)
\pi^{C}(ds,  dz) 
-
N^f_t
+
N^g_t
\bigg|
\end{align*}
and therefore, $N$ only intervenes via the intensity within the jump integrals. Therefore, using crucially the assumption of Theorem \ref{thm:finiteness:price:lim}, we have the following result, proved in Section \ref{sec:proof:remaininglemmas}
\begin{lemma}
\label{lem:expvar}
There exists a family of independent random variables $(\nu_{i,k})$, independent from $N$, such that for all $0 \leq i \leq S$ and all $k \geq M$, $\nu_{i,k}$ follows an exponential distribution with parameter $\mathfrak{m}_k$, and we have
\begin{equation*}
(\tau_{i,k-1} \wedge \nu_{i+1}) - \tau_{i,k} \leq \nu_{i,k}.
\end{equation*}
\end{lemma}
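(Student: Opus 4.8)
The plan is to fix the pair $(i,k)$ and to reduce the statement to a single unit level-crossing of the coupled difference process $D_s := q^f_s - q^g_s$ on the interval $(\nu_i, \nu_{i+1})$. By Lemma \ref{lem:control:diff:q}, $|D|$ is non-increasing there; since $D$ is integer-valued and, on this interval, moves only by unit jumps, its sign cannot change (it would have to cross $0$, where $|D|$ then stays null). Hence we may assume $D_s = q^f_s - q^g_s > 0$ on the window of interest, the case $D_s < 0$ being symmetric after exchanging the roles of $\lambda^L$ and $\lambda^C$. Writing $\tau^* = \tau_{i,k-1} \wedge \nu_{i+1}$, the bound is trivial when $\tau_{i,k} \ge \tau^*$, so I assume $\tau_{i,k} < \tau^*$, in which case $D_s = k$ identically on $[\tau_{i,k}, \tau^*)$.

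First I would pin down how $D$ can move on this window. This is where the coupling is used: $q^f$ and $q^g$ share $\pi^L$, $\pi^C$ and the common Hawkes process $N$, so every jump of $N$ decreases both queues by one and leaves $D$ unchanged, while $N^f$ and $N^g$ do not jump inside $(\nu_i,\nu_{i+1})$ by definition of the $\nu_j$. Therefore on the window $D$ changes only through $\pi^L$ and $\pi^C$, and $N$ enters solely through the intensity levels delimiting the relevant strips. Since $\lambda^L$ is decreasing, $\lambda^C$ is increasing, and $q^f_{s-} > q^g_{s-}$, the process $D$ decreases by one exactly when $\pi^L$ charges the strip $(\lambda^L(q^f_{s-}),\lambda^L(q^g_{s-})]$ or $\pi^C$ charges $(\lambda^C(q^g_{s-}),\lambda^C(q^f_{s-})]$, the combined width of which satisfies
\begin{equation*}
\big(\lambda^{L}(q^g_{s-}) - \lambda^{L}(q^f_{s-})\big) + \big(\lambda^{C}(q^f_{s-}) - \lambda^{C}(q^g_{s-})\big) \ge \mathfrak{m}_k,
\end{equation*}
by the definition of $\mathfrak{m}_k$ with base point $q = q^g_{s-}$ and gap $k = q^f_{s-}-q^g_{s-}$; here the hypothesis $\ul{\mathfrak{m}} = \inf_{k>0}\mathfrak{m}_k > 0$ of Theorem \ref{thm:finiteness:price:lim} guarantees this lower bound is a genuine positive constant. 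Thus the downward-jump intensity of $D$ is bounded below by $\mathfrak{m}_k$ throughout the window.

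The crux is to convert this intensity lower bound into domination by an exponential variable that is \emph{independent of $N$}. A naive thinning of $\pi^L,\pi^C$ fails, because the strips above depend on $q^f,q^g$, hence on $N$. To force independence I would enlarge the probability space with an auxiliary Poisson random measure $\bar\pi$ on $[0,\infty)\times[0,\infty)$ with Lebesgue intensity, independent of $(N,N^f,N^g,\pi^L,\pi^C)$, and reallocate the downward-jump mechanism: route a fixed amount $\mathfrak{m}_k$ of the downward intensity to the \emph{fixed} strip $[0,\mathfrak{m}_k]$ of $\bar\pi$ (with appropriate marking of each $\bar\pi$-point as an $L$-arrival of $q^g$ or a $C$-cancellation of $q^f$), and leave the residual width $r_s-\mathfrak{m}_k\ge 0$ to $\pi^L,\pi^C$ as before. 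Because every queue keeps its exact arrival and cancellation rates, the joint law of $(q^f,q^g,N,N^f,N^g)$ is unchanged, so it suffices to establish the bound on this enlarged space. Now each point of $\bar\pi$ in the fixed strip during the window forces $D$ to drop, whence $\tau_{i,k-1}$ is no later than the first such point after $\tau_{i,k}$; I define $\nu_{i,k}$ as the residual waiting time to that point. By the strong Markov property of $\bar\pi$ at the stopping time $\tau_{i,k}$, $\nu_{i,k}$ is exponential with parameter $\mathfrak{m}_k$; since $\bar\pi \perp N$ and $\nu_{i,k}$ is read off $\bar\pi$ over the fixed strip on a stopping-time-delimited region, $\nu_{i,k}$ is independent of the whole path of $N$; and because the windows indexed by distinct pairs $(i,k)$ are almost surely disjoint in time, the $\nu_{i,k}$ are built from $\bar\pi$ over disjoint regions and the family is independent. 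This gives $(\tau_{i,k-1}\wedge\nu_{i+1}) - \tau_{i,k} \le \nu_{i,k}$.

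The main obstacle is exactly this last step. The rate bound $\ge \mathfrak{m}_k$ is easy once Lemma \ref{lem:control:diff:q} has fixed the sign and monotonicity of $D$, but the natural source of the downward jumps lives in $N$-dependent strips of $\pi^L,\pi^C$, whereas the lemma demands a clock that is independent of $N$. The reallocation to a fixed $\bar\pi$-strip decouples the clock from $N$; the real work is to check that this reallocation preserves the joint law of the processes that actually enter the quantity being estimated, and that the resulting $\nu_{i,k}$ are simultaneously exponential, independent of $N$, and mutually independent. I expect the law-preservation bookkeeping (matching each queue's $L$- and $C$-intensities after the split) and the verification of independence from the entire trajectory of $N$ to be the delicate points.
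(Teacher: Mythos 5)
Your proof is correct and follows essentially the same route as the paper: both lower-bound the downward-jump intensity of the coupled difference $q^f - q^g$ by $\mathfrak{m}_k$ using the monotonicity of $\lambda^L$ and $\lambda^C$, and then extract an exponential clock of parameter $\mathfrak{m}_k$ that is independent of $N$, with mutual independence across the pairs $(i,k)$ coming from the disjointness of the stochastic windows. The only difference is technical: the paper remaps and merges the two $N$-dependent strips of $\pi^L$ and $\pi^C$ into a fresh Poisson measure $\wt\pi$ and reads the clock off its fixed sub-strip $[0,\mathfrak{m}_k]$, whereas you re-route a fixed $\mathfrak{m}_k$ of intensity to an auxiliary independent measure $\bar\pi$ --- which makes independence from $N$ immediate at the cost of a law-preservation check, while the paper's remapping avoids redefining the processes but leaves the independence of $\wt\pi$ from $N$ to a standard conditioning argument.
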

Moreover, if $\nu_1\leq\nu_2$ are two variables $\calC$-measurable and if $\calC$ is independent of $N$, then we have
\begin{equation*}
\E[ N_{\nu_2} - N_{\nu_1}|\calC ] \leq C (\nu_2 - \nu_1) 
\end{equation*}
for some constant $C > 0$ because the expectation of $\lambda_t$ is bounded independently of $t$. Therefore we have, using that $N$ independent of both $M$, $S$ and $(\nu_{i,k})_{i,k}$, we have
\begin{align*}
\E\bigg[
\int_0^\infty | \kappa(q_s^f) - \kappa(q_s^g) |  dN_s
\Big]
&\leq 
2 C \norm{\kappa}_\infty
\E \bigg[
\sum_{i=0}^S 
\sum_{k = 1}^{M}
\nu_{i,k}
\bigg]
\end{align*}
for some constant $C > 0$. Then, using that $\nu_{i,k}$ is independent of $S$ and $M$ and its expectation is $\mathfrak{m}_k^{-1}$ which is bounded uniformly for $k$, we have
\begin{align*}
\E\bigg[
\int_0^\infty | \kappa(q_s^f) - \kappa(q_s^g) |  dN_s
\bigg]
&\leq C^{\prime} 
\E [
SM
]
\end{align*}
for some constant $C^{\prime} > 0$. We conclude using that $\E [SM]$ is finite since $\E[N^f]$, $\E[N^g]$, $\E[(N^f)^2]$ and $\E[(N^g)^2]$ are bounded by Corollary \ref{cor:bound:Ninfty}. This proves that \eqref{eq:finite:integral} holds.

\subsection{Proof of the remaining lemmas}
\label{sec:proof:remaininglemmas}

\subsubsection*{Proof of Lemma \ref{lemma:lift:hawkes}}
The proof is classical and follows the ideas of the branching tree representation of the Hawkes processes. By Theorem~7.4 \cite{ikeda1989stochastic}, there exists a Poisson point process $\pi$ on $[0,\infty) \times [0,\infty)$ with compensator $dsdz$ such that for all $t \geq 0$
\begin{equation*}
  N_t = \int_0^t \int_0^\infty \1_{\{z \leq \lambda_{s-}\}}  \pi(ds,  dz)
\end{equation*}
where $\lambda$ is the intensity of the Hawkes process $N$. Fix $t \geq 0$. Recall that $\wh \mu$ is defined within Lemma \ref{lemma:lift:hawkes}. 
For $s \geq 0$, we define then
\begin{equation*}
  \wh N_s = \int_{0+}^{s} \int_0^\infty \1_{\{z \leq \wh \lambda_{u-}\}}  \pi(du + t,  dz)
\end{equation*}
where $\wh \lambda$ is defined by
\begin{equation*}
  \wh \lambda_s = \wh \mu_s + \int_0^s \varphi(s-u)  d \wh N_u.
\end{equation*}
Note that $\wh \lambda_s \leq \lambda_{t+s}$ for all $s \geq 0$. Therefore we can define $\wt N_s = N_{t+s} - N_t - \wh N_s$ for all $s \geq 0$ so that $N_{t+s} = N_t + \wt N_s + \wh N_s$. Moreover, we have
\begin{align*}
  \wt N_s 
  &= 
  \int_{t+}^{t+s} \int_0^\infty \1_{\{z \leq \lambda_{u-}\}}  \pi(du + t,  dz)
  - 
  \int_{t+}^{t+s} \int_0^\infty \1_{\{z \leq \wh \lambda_{u-}\}}  \pi(du + t,  dz)
  \\
  &=
  \int_{t+}^{t+s} \int_0^\infty \1_{\{\wh \lambda_{u-} < z \leq \lambda_{u-}\}}  \pi(du + t,  dz).
\end{align*}
Since $\pi$ is a Poisson random measure, we can build $\wt \pi$ and $\wh \pi$ two independent Poisson point measures such that $\wt \pi(du,  dz) = \pi(du + t ,  dz + \wh \lambda_{u-})$ and $\wh \pi(du,  dz) = \pi(du + t,  dz) \1_{\{z \leq \wh \lambda_{u-}\}} + \pi'(du,  dz) \1_{\{z > \wh \lambda_{u-}\}}$ for some Poisson point measure $\pi'$ independent of $\pi$. With these notations, we have 
\begin{equation*}
  \wh N_s = \int_{0+}^{s} \int_0^\infty \1_{\{z \leq \wh \lambda_{u-}\}}  \wh \pi(du,  dz)
\end{equation*}
and
\begin{align*}
  \wt N_s 
  &=
  \int_{0}^{s} \int_0^\infty \1_{\{z \leq \lambda_{u-} - \wh \lambda_{u-}\}}  \wt \pi(du,  dz).
\end{align*}
Defining also $\wt \lambda_s = \lambda_{s} - \wh \lambda_{s} $, we have
\begin{align*}
  \wt \lambda_s 
  &= \mu + \int_0^{t+s} \varphi(t+s-u)  dN_u - \wh \mu_s - \int_0^s \varphi(s-u)  d \wh N_u
  \\ 
  &= \mu + \int_{0}^{s} \varphi(s-u)  dN_{t+u} - \int_0^s \varphi(s-u)  d \wh N_u
  \\ 
  &= \mu + \int_{0}^{s} \varphi(s-u)  d \wt N_{u}.
\end{align*}
This ensures that $\wt N$ is a Hawkes process with the same dynamic as $N$, and $\wh N$ is a Hawkes process with time-varying baseline $\wh \mu$. Independence properties follow from the properties of the Poisson point processes.

\subsubsection*{Proof of Lemma \ref{lem:control:diff:q}}

Recall that we have
\begin{align*}
|q^f_t - q^g_t|
=
\bigg|
q^f_0 
&-
q^g_0 
+
\int_0^t \int_0^\infty
(
\1_{\{z \leq \lambda^{L}(q^f_{s-})\}}
-
\1_{\{z \leq \lambda^{L}(q^g_{s-})\}}
)
\pi^{L}(ds,  dz) 
\\
&-
\int_0^t \int_0^\infty
(
\1_{\{z \leq \lambda^{C}(q^f_{s-})\}}
-
\1_{\{z \leq \lambda^{C}(q^g_{s-})\}}
)
\pi^{C}(ds,  dz) 
-
N^f_t
+
N^g_t
\bigg|.
\end{align*}
Moreover, by definition $N^f$ and $N^g$ are constant on $(\nu_i, \nu_{i+1})$. Therefore, any change in $|q^f_t - q^g_t|$ must come from one of the two jump integrals. Let $t$ be one of the jumping times of $|q^f_t - q^g_t|$. We can distinguish three cases depending on the sign of $q^f_t - q^g_t$.
\begin{itemize}
\item Suppose first that $q^f_{t-} > q^g_{t-}$. Then, since $\lambda^{L}$ is decreasing and $\lambda^{C}$ in increasing, we have
\begin{equation*}
\lambda^{L}(q^f_{t-}) \leq \lambda^{L}(q^g_{t-})
\quad \text{ and } \quad
\lambda^{C}(q^f_{t-}) \geq \lambda^{C}(q^g_{t-}).
\end{equation*}
Thus we must have $ q^f_t - q^g_t = q^f_{t-} - q^g_{t-} - 1$ and therefore $|q^f_t - q^g_t| = |q^f_{t-} - q^g_{t-}| - 1$.
\item Suppose now that $q^f_{t-} = q^g_{t-}$. Then it is clear that neither of the two jump integrals can jump and therefore $t$ cannot be a jumping time of $|q^f_t - q^g_t|$.
\item Suppose now that $q^f_{t-} < q^g_{t-}$. Then we can apply the same argument as in the case $q^f_{t-} > q^g_{t-}$ and we see that $ q^f_t - q^g_t = q^f_{t-} - q^g_{t-} + 1$ so that we still have $|q^f_t - q^g_t| = |q^f_{t-} - q^g_{t-}| - 1$.
\end{itemize}

Therefore, $|q^f_t - q^g_t|$ is decreasing on all intervals of the form $(\nu_i, \nu_{i+1})$. Moreover, for each $1 \leq i \leq S$, we have
\begin{equation*}
q^f_{\nu_i} - q^g_{\nu_i} = q^f_{\nu_i -} - q^g_{\nu_i - } \pm 1
\end{equation*}
depending on whether $N^f$ or $N^g$ jump and therefore $|q^f_{\nu_i} - q^g_{\nu_i} | \leq |q^f_{\nu_i -} - q^g_{\nu_i - }| + 1$. By induction, we obtain
\begin{equation*}
|q^f_t - q^g_t|
\leq 
|q^f_0 - q^g_0 | + N^f_t + N^g_t.
\end{equation*}

\subsubsection*{Proof of Lemma \ref{lem:expvar}}
Without loss of generality, suppose that $q^f_t \geq q^g_t$ on $(\nu_i, \nu_{i+1})$. Then, we have for $\tau_{i,k} \leq t \leq \nu_{i+1}$
\begin{align*}
q^f_t - q^g_t
=
k
&+
\int_{\tau_{i,k}}^t \int_0^\infty
(
\1_{\{z \leq \lambda^{L}(q^g_{s-} + k)\}}
-
\1_{\{z \leq \lambda^{L}(q^g_{s-})\}}
)
\pi^{L}(ds,  dz) 
\\
&-
\int_{\tau_{i,k}}^t \int_0^\infty
(
\1_{\{z \leq \lambda^{C}(q^g_{s-} + k)\}}
-
\1_{\{z \leq \lambda^{C}(q^g_{s-})\}}
)
\pi^{C}(ds,  dz).
\end{align*}
Recall that we have
\begin{equation*}
\lambda^{L}(q^g_{s-}) \geq \lambda^{L}(q^g_{s-} + k)
\quad \text{ and } \quad
\lambda^{C}(q^g_{s-} + k) \geq \lambda^{C}(q^g_{s-})
\end{equation*}
because $\lambda^{L}$ is decreasing and $\lambda^{C}$ is increasing. Therefore, there exists a Poisson point measures $\wt \pi^L$ and $\wt \pi^C$ such that 
\begin{equation*}
\begin{split}
\int_{\tau_{i,k}}^t \int_0^\infty
(
\1_{\{z \leq \lambda^{L}(q^g_{s-} + k)\}}
-
\1_{\{z \leq \lambda^{L}(q^g_{s-})\}}
)
\pi^{L}(ds,  dz) 
&=
-
\int_{\tau_{i,k}}^t \int_0^\infty
\1_{\{z \leq \lambda^{L}(q^g_{s-}) - \lambda^{L}(q^g_{s-} + k)\}}
\wt \pi^{L}(ds,  dz) 
\\
\int_{\tau_{i,k}}^t \int_0^\infty
(
\1_{\{z \leq \lambda^{C}(q^g_{s-} + k)\}}
-
\1_{\{z \leq \lambda^{C}(q^g_{s-})\}}
)
\pi^{C}(ds,  dz)
&=
\int_{\tau_{i,k}}^t \int_0^\infty
\1_{\{z \leq \lambda^{C}(q^g_{s-} + k) - \lambda^{C}(q^g_{s-})\}}
\wt \pi^{C}(ds,  dz).
\end{split}
\end{equation*}
We then merge these two jump integrals: there exists a Poisson point measure $\wt \pi$ such that
\begin{equation*}
q^f_t - q^g_t
=
k
-
\int_{\tau_{i,k}}^t \int_0^\infty
\1_{\{z \leq \lambda^{L}(q^g_{s-}) - \lambda^{L}(q^g_{s-} + k) + \lambda^{C}(q^g_{s-} + k) - \lambda^{C}(q^g_{s-})\}}
\wt \pi(ds,  dz).
\end{equation*}

We can then conclude using that $\lambda^{L}(q^g_{s-}) - \lambda^{L}(q^g_{s-} + k) + \lambda^{C}(q^g_{s-} + k) - \lambda^{C}(q^g_{s-})$ is bounded below by $ \mathfrak{m}_k$ and using that the first jumping time of a Poisson point process with intensity $\mathfrak{m}_k$ follows an exponential distribution with parameter $\mathfrak{m}_k$. The independence between the different exponential variable obtained is due to the independence properties of Poisson point measures.

\section{Proof of Proposition \ref{prop:convergence:LOB}}
\label{sec:proof:prop:convergence:LOB}

Without loss of generality, we prove Proposition \ref{prop:convergence:LOB} on the time interval $[0,K]$. Note that since $q^{a,T}$ and $q^{a,T}$ are independent, we can study the convergence of each sequence separately. In the following, we study one of this sequence, and we drop the exponent $x \in \{a,b\}$ to ease the notations. the proof is split into four parts:
\begin{itemize}
\item Step 1: We show that $(\wh q_t^T)_t$ is bounded above in probability.
\item Step 2: We show that $(\wh C_t^T)_t$ is tight for the Skorokhod topology on $[0,K]$.
\item Step 3: We show that $(\wh L_t^T)_t$ is tight for the Skorokhod topology on $[0,K]$.
\item Step 4: We identify uniquely the limit of distribution of a limit of $(\wh q_t^T)_t$ and we conclude with the convergence in distribution of $(\wh q_t^T)_t$.
\end{itemize}

\textbf{Step 1. } The upwards jumps from $q^T$ only come from the $L^T$ and therefore we have
\begin{equation*}
q_t^T - q^T_0 \leq  \int_0^t \int_0^\infty \1_{\{z \leq \lambda^{L,T}_{s}\}} \1_{q_{s-}^T \geq 0}  \pi^{L,T}(ds,  dz).
\end{equation*}
But $\lambda^{L}$ is non-increasing and therefore $\lambda^{L,T}_{s} \leq \beta_T \lambda^{L}(0)$ whenever $q_{s-}^T \geq 0$. Thus we have
\begin{equation}
\label{eq:upperboundq}
q_t^T - q^T_0 \leq \int_0^t \int_0^\infty \1_{\{z \leq \beta_T \lambda^{L}(0)\}}  \pi^{L,T}(ds,  dz).
\end{equation}
The process on the right-hand side is a Poisson process with deterministic constant intensity $\beta_T \lambda^{L}(0)$ and therefore for all $\varepsilon > 0$, there exists $L > 0$ such that 
\begin{equation*}
\mathbb{P} \bigg(\int_0^{TK} \int_0^\infty \1_{\{z \leq \beta_T \lambda^{L}(0)\}}  \pi^{L,T}(ds,  dz) \geq LT\beta_T \bigg) \leq \varepsilon
\end{equation*}
%
where $K > 0$. Recall that $\wh q^T_t = (T\beta_T)^{-1} q^T_{tT}$ and that $\wh q^T_0$ converges in distribution. Using the same notations as before, we deduce that 
\begin{equation*}
{\mathbb{P}} \Big(\sup_{t \leq K} \wh q_t^T \geq L\Big) \leq \varepsilon.
\end{equation*}

\textbf{Step 2. } We now show that $\overline{C}^T$ is tight using Aldous' criteria stated in Theorem VI.3.26 in \cite{jacod1987limit}. More precisely, we need to prove the following two conditions
\begin{enumerate}[label={(\roman*)}]
\item For all $\varepsilon > 0$, there exists $T_0 > 0$ and $A > 0$ such that for all $T > T_0$, 
\begin{equation*}
\mathbb{P}
\Big[
\sup_{t\leq K}
|
\wb C^T_t
|
> A
\Big] \leq \varepsilon.
\end{equation*}
\item We write $\mathcal{T}_K^T$ for the set of $(\mathcal{F}_{tT})_t$-stopping times $S$ bounded by $K$. For all $\varepsilon > 0$, we have
\begin{equation*}
\lim\limits_{\delta \to 0}
\limsup\limits_{T \to \infty}
\sup_{R,S \in \mathcal{T}_K, R \leq S \leq R + \delta}
\mathbb{P}
 [
 |
\wb C^T_S - \wb C^T_R
 |
\geq \varepsilon
]
=
0.
\end{equation*}
\end{enumerate}

We first prove (i). Take $\varepsilon > 0$. By \eqref{eq:upperboundq}, we know that for all $\eta > 0$, there exists $L > 0$ such that 
\begin{equation*}
{\mathbb{P}}\Big( \sup_{t \leq TL} q_t^T \geq LT\beta_T\Big) \leq \eta.
\end{equation*}
Therefore, for all $A > 0$, we have
\begin{equation*}
{\mathbb{P}} \Big[
\sup_{t\leq K}
|
\wh C^T_t
|
> A
\Big]
\leq
{\mathbb{P}} \Big[
C^T_{TK}
> T\beta_T A,\;
\sup_{t \leq TL} q_t^T \leq LT\beta_T
\Big]
+ \eta.
\end{equation*}
But since $\lambda^C$ is increasing, we know that when $\sup_{t \leq TL} q_t^T \leq LT\beta_T$, we have by Assumption \ref{assumption:scaling:lambda}
\begin{equation*}
\lambda_t^{C, T} = \lambda^{C,T}(q_{t-}^{T}) \leq \lambda^{C,T}(LT\beta_T) = \beta_T \lambda^{C}(L)
\end{equation*}
and therefore
\begin{equation*}
C^T_{TK} \leq \int_0^{TK} \int_0^\infty \1_{\{z \leq \beta_T \lambda^{C}(L)\}} \pi^{C,T}(ds dz).
\end{equation*}
The variable on the right-hand side follows a Poisson distribution with parameter $T \beta_T K \lambda^{C}(L)$ and therefore we obtain by Markov's inequality
\begin{equation*}
{\mathbb{P}} \Big[
\sup_{t\leq K}
|
\wh C^T_t
|
> A
\Big]
\leq
\frac{K \lambda^C(L)}{A}
+ \eta
\end{equation*}
which ensures that (i) holds by taking $\eta = \varepsilon /2$ and $A$ large enough.

We now focus on (ii). Proceeding similarly, we have
\begin{equation*}
{\mathbb{P}}\Big [
 |
\wh C^T_S - \wb C^T_R
 |
\geq \varepsilon
\Big]
\leq
\mathbb{P}
\bigg [
\int_{ST}^{(S+\delta)T} \int_0^\infty \1_{\{z \leq \beta_T \lambda^{C}(L)\}} \pi^{C,T}(ds dz)
\geq \varepsilon T\beta_T
\bigg] + \eta.
\end{equation*}
Since $S$ is a stopping time, $\int_{ST}^{(S+\delta)T} \int_0^\infty \1_{\{z \leq \beta_T \lambda^{C}(L)\}} \pi^{C,T}(ds dz)$ follows a Poisson distribution with parameter $T\beta_T \delta \lambda^C(L)$ and therefore using Markov's inequality
\begin{equation*}
{\mathbb{P}}\Big [
\big |
\wh C^T_S - \wb C^T_R
\big |
\geq \varepsilon
\Big]
\leq
\frac{\delta \lambda^C(L)}{\varepsilon} + \eta
\end{equation*}
and we conclude by taking $\delta \to 0$ and $\eta \to 0$.

%

\textbf{Step 3.} We already know from Step 2 that $\wh C^T$ is tight. Moreover, $\wh N^T$ is also tight by Proposition \ref{prop:convergence:jaisson} so the sum $\wh C^T + \wh N^T$ is also tight. We deduce that it is bounded in probability and therefore $\wb q^T$ is bounded below in probability. We can then repeat the same proof as in Step 2, utilizing this time the fact that $\lambda^L$ is decreasing and that $\wb q^T$ is bounded below. We conclude that $\wb L^T$ is also tight.

\textbf{Step 4.} We already know from steps 2 and 3 and from Proposition \ref{prop:convergence:jaisson} that $\wh C^T$, $\wh L^T$ and $\wh N^T$ are marginally tight. This implies the joint tightness of $(\wb C^T, \wb L^T, \wb N^T)$ and it only remains to identify uniquely the distribution of a limit. 

We consider $(C, L, X)$ a limit in distribution of a sub-sequence of $(\wh C^T, \wh L^T, \wh N^T)$. In the following, we do not write the subsequence indexes to ease the expressions. Without loss of generality, we can always suppose this convergence holds almost surely in the Skorokhod topology. Moreover, the jump size of $\wh C^T$, $\wh L^T$ and $\wh N^T$ is bounded by $(T\beta_T)^{-1} \to 0$ which ensure by Theorem VI.3.26 of Jacod and Shiryaev \cite{jacod1987limit} that $\wh C^T$, $\wh L^T$ and $\wh N^T$ are C-tight, meaning that the limit must be continuous almost surely. Therefore, the convergence $(\wh C^T, \wh L^T, \wh N^T) \to (C, L, X)$ holds almost surely for the sup-norm on $[0,K]$. In particular, we have
\begin{equation*}
\sup_{0 \leq t \leq K} | \wh q^T_t - q_t | \to 0 \text{ a.s.}\quad \text{with} \quad q_t = L_t - C_t - X_t.
\end{equation*}
Using that $\lambda^{L}$ and $\lambda^C$ are continuous, we know that
\begin{equation*}
\wh \lambda_t^{L, T} = \lambda^{L} (\wh q_t^T) \to \lambda^{L} (q_t)
\end{equation*}
almost surely for the sup-norm on $[0,K]$, which implies that the same convergence holds for
\begin{equation*}
\int_0^t \wh \lambda_s^{L, T}  ds \to \int_0^t \lambda^{L} (q_s)  ds.
\end{equation*}
We then use Doob's inequality on the martingales
$\wh L^{T}_t - \int_0^t \wh \lambda_s^{L, T}  ds$ and 
$\wh C^{T}_t - \int_0^t \wh \lambda_s^{C, T}  ds$ and the fact that their quadratic variation is respectively $(T\beta_T)^{-1} \wh L^{T}_t$ and $(T\beta_T)^{-1} \wh C^{T}_t$. This yields
\begin{equation*}
\mathbb{E}\bigg[{\sup_{t \leq T} \bigg|\wh L^{T}_t - \int_0^t \wh \lambda_s^{L, T}  ds \bigg|^2}\bigg]
\leq
(T\beta_T)^{-2} \mathbb{E}[( \wh L^{T}_K ) ^2]
\to 0
\end{equation*}
and
\begin{equation*}
\mathbb{E}\bigg[{\sup_{t \leq T} \bigg|\wh C^{T}_t - \int_0^t \wh \lambda_s^{C, T}  ds \bigg|^2}\bigg]
\leq
(T\beta_T)^{-2} \mathbb{E}[( \wh C^{T}_K ) ^2]
\to 0.
\end{equation*}
Consequently, using also Proposition \ref{prop:convergence:jaisson} to identify the distribution of $X$, we must have
\begin{equation*}
q_t = \int_0^t \lambda^{L}(q_s) - \lambda^C(q_s) - Y_s ds.
\end{equation*}
This imply that $q_t$ is continuously differentiable and its derivative is given by
\begin{equation*}
q_t' = \lambda^{L}(q_t) - \lambda^C(q_t) - Y_t.
\end{equation*}
Since $\lambda^{L} - \lambda^C$ is Lipschitz continuous, the solution must be unique so the distribution of $q_t$ is known. This also identifies uniquely $L$ and $C$, which conclude the proof.

\section{Proof of Theorem \ref{thm:scalingmi}}
\label{sec:proof:thm:scalingmi}

\subsection{Outline of the proof}

We consider three processes $X^a$, $\wb q^{a,t}$ and $q^{a}$ defined respectively by $X^{a}_s = \int_0^s Y^{a}_u du$ where $Y^{a}$ is given by \eqref{eq:def:Y},
 \begin{equation*}
q^{a,t}_s = q^{a}_0 + \int_0^s \lambda^{L}(q^{a}_u) du - \int_0^s \lambda^{C}(q^{a}_u) du - X^{a}_s
\end{equation*}
and 
\begin{equation*}
\wb q^{a,t}_s = q^{a}_0 + \int_0^s \lambda^{L}(\wb q^{a,t}_u) du - \int_0^s \lambda^{C}(\wb q^{a,t}_u) du - X^{a}_s + F^t(s)
\end{equation*}
where $F^t(s) = \int_0^{s\wedge t} f(u)  du$.

In this section, we plan to prove the convergence \begin{equation*}
\E [\wh{\mathrm{MI}}_t^T] = \E\bigg[
	\int_0^\infty \big( \kappa(\check q^{a,T,t}_s) - \kappa(\wh q^{a,T}_s) \big)  d \wh N^{a,T}_s 
\bigg] \to 
\E[\mathrm{MI}_t] = \E\bigg[
	\int_0^\infty \big( \kappa(\wb q^{a,t}_s) - \kappa( q^{a}_s) \big)  d X^{a}_s 
\bigg]
\end{equation*}
for $t>0$ fixed. For $A > 0$, we first decompose
\begin{equation*}
\E [\wh{\mathrm{MI}}_t^T] = \E\bigg[
	\int_0^A \big( \kappa(\check q^{a,T,t}_s) - \kappa(\wh q^{a,T}_s) \big)  d \wh N^{a,T}_s 
+
	\int_A^\infty \big( \kappa(\check q^{a,T,t}_s) - \kappa(\wh q^{a,T}_s) \big)  d \wh N^{a,T}_s 
\bigg]
\end{equation*}
and
\begin{equation*}
\E[\mathrm{MI}_t] = \E\bigg[
	\int_0^A \big( \kappa(\wb q^{a,t}_s) - \kappa( q^{a}_s) \big)  d X^{a}_s 
+
	\int_A^\infty \big( \kappa(\wb q^{a,t}_s) - \kappa( q^{a}_s) \big)  d X^{a}_s 
\bigg].
\end{equation*}

Each of these terms are studied controlled in the following lemmas, proved in Sections \ref{sec:proof:lem:completion:1}, \ref{sec:proof:lem:completion:2} and \ref{sec:proof:lem:completion:3} respectively.

\begin{lemma}
\label{lem:completion:1}
For all $\varepsilon > 0$, there exists $A > t$ such that for all $T$, we have
\begin{equation*}
\bigg|
\E\bigg[
	\int_A^\infty \big( \kappa(\check q^{a,T,t}_s) - \kappa(\wh q^{a,T}_s) \big)  d \wh N^{a,T}_s 
\bigg]
\bigg|
\leq \varepsilon.
\end{equation*}
\end{lemma}

\begin{lemma}
\label{lem:completion:2}

For all $\varepsilon > 0$, there exists $A > t$ such that
\begin{equation*}
\bigg|
\E\bigg[
	\int_A^\infty \big( \kappa(\wb q^{a,t}_s) - \kappa( q^{a}_s) \big)  d X^{a}_s 
\bigg]
\bigg|
\leq \varepsilon.
\end{equation*}
\end{lemma}

\begin{lemma}
\label{lem:completion:3}

For all $A > t$ we have
\begin{equation*}
\E\bigg[
	\int_0^A \big( \kappa(\check q^{a,T,t}_s) - \kappa(\wh q^{a,T}_s) \big)  d \wh N^{a,T}_s 
\bigg]
\to
 \E\bigg[
	\int_0^A \big( \kappa(\wb q^{a,t}_s) - \kappa( q^{a}_s) \big)  d X^{a}_s 
	\bigg]
\end{equation*}
when $T \to \infty$.
\end{lemma}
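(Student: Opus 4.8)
The plan is to fix $A>t$, prove the convergence in distribution of the two truncated integrals on the compact window $[0,A]$, and then upgrade this to the convergence of their expectations via a uniform integrability argument. Write $\Phi^T_A=\int_0^A\big(\kappa(\widecheck q^{a,T,t}_s)-\kappa(\wh q^{a,T}_s)\big)\,d\wh N^{a,T}_s$ and $\Phi_A=\int_0^A\big(\kappa(\wb q^{a,t}_s)-\kappa(q^a_s)\big)\,dX^a_s$, so that the claim is $\E[\Phi^T_A]\to\E[\Phi_A]$. The first thing I would do is establish the \emph{joint} convergence in distribution, for the Skorohod topology on $[0,A]$, of the triple $(\widecheck q^{a,T,t},\wh q^{a,T},\wh N^{a,T})$ towards $(\wb q^{a,t},q^a,X^a)$. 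The marginal convergences are exactly Propositions \ref{prop:convergence:jaisson}, \ref{prop:convergence:LOB} and \ref{prop:convergence:LOBMO}. Joint convergence then comes from the fact, visible in the proof of Proposition \ref{prop:convergence:LOB}, that conditionally on the limiting market activity $X^a$ both $q^a$ and $\wb q^{a,t}$ are the \emph{unique} solutions of the ODEs \eqref{eq:ODE:q} and \eqref{eq:ODE:qmi}, hence continuous functionals of $X^a$; the whole triple is thus the image of $\wh N^{a,T}$ under a map that is continuous at the (continuous) limit point. Crucially, the limiting integrator $X^a$ is continuous.

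The second step is to pass the integral through a continuous mapping argument. The integrand $g^T_s=\kappa(\widecheck q^{a,T,t}_s)-\kappa(\wh q^{a,T}_s)$ converges in the Skorohod topology to $g_s=\kappa(\wb q^{a,t}_s)-\kappa(q^a_s)$ since $\kappa$ is Lipschitz (Assumption \ref{assumption:kappa}), and the integration functional $(h,\chi)\mapsto\int_0^A h_{s-}\,d\chi_s$ is continuous at pairs whose integrator limit $\chi=X^a$ is continuous. Applying the continuous mapping theorem gives $\int_0^A g^T_{s-}\,d\wh N^{a,T}_s\to\Phi_A$ in distribution. It then remains to check that replacing $g^T_{s-}$ by $g^T_s$ is asymptotically harmless: the only common jump times of the integrand and of $\wh N^{a,T}$ are the jumps of $N^{a,T}$ (at which both $\widecheck q^{a,T,t}$ and $\wh q^{a,T}$ move by $-(T\beta^T)^{-1}$), so by the Lipschitz bound each correction $g^T_s-g^T_{s-}$ is $O((T\beta^T)^{-1})$, and the accumulated correction $\sum_{s\le A}(g^T_s-g^T_{s-})\,\Delta\wh N^{a,T}_s$ is $O(A(T\beta^T)^{-1})\to0$. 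Hence $\Phi^T_A\to\Phi_A$ in distribution.

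The third step upgrades this to convergence of expectations through uniform integrability of $\{\Phi^T_A\}_T$. Here I would use the coupling of $\widecheck q^{a,T,t}$ and $\wh q^{a,T}$ through common Poisson measures together with the monotonicity of $\lambda^L$ and $\lambda^C$: exactly as in Lemma \ref{lem:control:diff:q}, $\widecheck q^{a,T,t}_s-\wh q^{a,T}_s$ is nonnegative, decreases between metaorder arrivals and increases by one at each arrival, so $0\le\widecheck q^{a,T,t}_s-\wh q^{a,T}_s\le(T\beta^T)^{-1}N^{o,T}_{tT}$ uniformly in $s$. By the Lipschitz property, $|g^T_s|\le L_\kappa(T\beta^T)^{-1}N^{o,T}_{tT}$, whence $|\Phi^T_A|\le L_\kappa(T\beta^T)^{-1}N^{o,T}_{tT}\,\wh N^{a,T}_A$. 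Since $N^{o,T}$ is independent of $N^{a,T}$, Assumption \ref{assumption:f} gives $\sup_T\E[((T\beta^T)^{-1}N^{o,T}_{tT})^2]<\infty$, and the second moment bounds for (subcritical) Hawkes processes recalled in the Appendix give $\sup_T\E[(\wh N^{a,T}_A)^2]<\infty$; together these yield $\sup_T\E[|\Phi^T_A|^2]<\infty$, hence uniform integrability. Combined with the convergence in distribution, this gives $\E[\Phi^T_A]\to\E[\Phi_A]$, which is the statement of Lemma \ref{lem:completion:3}.

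I expect the main obstacle to be the continuity of the integration functional at the limit, i.e. controlling the interplay between the integrand and the integrator: the queues jump precisely at the jumps of $N^{a,T}$, so the integrand and $d\wh N^{a,T}_s$ are not independent, and one must argue carefully that using left limits makes the functional continuous and that the resulting $s$-versus-$s{-}$ discrepancy vanishes. A secondary technical point is securing the joint (rather than merely marginal) Skorohod convergence of the triple and the \emph{uniform-in-$T$} second-moment bound on $\wh N^{a,T}_A$, both of which lean on the machinery of \cite{jaisson2016rough}.
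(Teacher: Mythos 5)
Your proposal is correct and shares the paper's overall architecture --- establish convergence in distribution of the truncated integrals on the compact window $[0,A]$ (via Skorokhod representation / continuous mapping, using that the limiting processes are continuous), then upgrade to convergence of expectations through a uniform-in-$T$ second-moment bound --- but the key estimate in the uniform-integrability step is genuinely different. The paper bounds the integrand crudely by $2\norm{\kappa}_\infty$ and controls $\E\big[\big(\int_0^A(\cdots)\,d\wh N^{a,T}_s\big)^2\big]$ by splitting $d\wh N^{a,T}$ into its martingale part $dM^{a,T}$ and its compensator $\lambda^{a,T}_s\,ds$, then invoking the Hawkes moment formulas (Lemma \ref{lemma:m2}) to show $\E[N^{a,T}_{TA}]=O(T\beta^T)$ and $\sup_{s\le AT}\E[(\lambda^{a,T}_s)^2]=O((\beta^T)^2)$. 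You instead exploit the coupling: $0\le\widecheck q^{a,T,t}_s-\wh q^{a,T}_s\le (T\beta^T)^{-1}N^{o,T}_{tT}$, so the integrand is pointwise dominated by $|\kappa|_{lip}(T\beta^T)^{-1}N^{o,T}_{tT}$, and the independence of the Poisson metaorder flow $N^{o,T}$ from $N^{a,T}$ factorizes the second moment. Both routes close; yours has the small advantage of using only the Lipschitz property of $\kappa$ stated in Assumption \ref{assumption:kappa} (the paper's $L^2$ bound quietly also invokes boundedness of $\kappa$, inherited from Theorem \ref{thm:finiteness:price:lim}), and it reuses the queue-difference control that the paper itself needs for Lemma \ref{lem:completion:1}; the paper's route avoids any reference to $N^{o,T}$ in the tail bound. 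You are also more careful than the paper on two points it passes over silently: the \emph{joint} (not merely marginal) Skorokhod convergence of the triple --- which, as you say, follows from tightness plus unique identification of the limit as ODE solutions driven by $X^a$, echoing Step 4 of the proof of Proposition \ref{prop:convergence:LOB} --- and the continuity of the integration functional at a continuous, increasing limit integrator, including the harmless $s$-versus-$s{-}$ discrepancy. No gap.
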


The proof of Theorem \ref{thm:scalingmi} is a consequence of these lemmas. In fact, fix $\varepsilon > 0$. Then combining lemmas \ref{lem:completion:1} and \ref{lem:completion:2}, for $\varepsilon > 0$, there exists $A > t$ such that for all $T$, we have
\begin{equation*}
    \bigg|
\E\bigg[
	\int_A^\infty \big( \kappa(\check q^{a,T,t}_s) - \kappa(\wh q^{a,T}_s) \big)  d \wh N^{a,T}_s 
\bigg]
\bigg|
+
    \bigg|
\E\bigg[
	\int_A^\infty \big( \kappa(\wb q^{a,t}_s) - \kappa( q^{a}_s) \big)  d X^{a}_s 
\bigg]
\bigg|
\leq \frac{1}{2} \varepsilon.
\end{equation*}
By Lemma \ref{lem:completion:3}, we can choose $T_0$ large enough so that for all $T \geq T_0$, we have
\begin{equation*}
\bigg|
    \E\bigg[
	\int_0^A \big( \kappa(\check q^{a,T,t}_s) - \kappa(\wh q^{a,T}_s) \big)  d \wh N^{a,T}_s 
\bigg]
-
 \E\bigg[
	\int_0^A \big( \kappa(\wb q^{a,t}_s) - \kappa( q^{a}_s) \big)  d X^{a}_s 
	\bigg]
\bigg|
\leq \frac{1}{2} \epsilon.
\end{equation*}
Combining these inequalities, we see that
\begin{equation*}
    \forall \epsilon > 0,  \exists T_0 > 0,  \forall T \geq T_0,  
    |
    \E [\wh{\mathrm{MI}}_t^T]
-
 \E [\mathrm{MI}_t]
|
\leq \epsilon,
\end{equation*}
which proves the desired convergence.

\subsection{Proof of Lemma \ref{lem:completion:1}}
\label{sec:proof:lem:completion:1}

By definition, and using that the compensator of $N^{a,T}$ is $\int_0^\cdot \lambda^{a,T}_s ds$, we have
\begin{align*}
\E\bigg[
	\int_A^\infty \big( \kappa(\check q^{a,T,t}_s) - \kappa(\wh q^{a,T}_s) \big)  d \wh N^{a,T}_s 
	 \bigg]
&=
\frac{1}{T\beta_T}
\E\bigg[
	\int_{AT}^\infty \big( \kappa^T (\wb q^{a,T,t}_s) - \kappa^T( q^{a,T}_s) \big)  d N^{a,T}_s 
	 \bigg]
\\
&=
\frac{1}{T\beta_T}
\E\bigg[
	\int_{AT}^\infty \big( \kappa^T (\wb q^{a,T,t}_s) - \kappa^T( q^{a,T}_s) \big) \lambda^{a,T}_s  ds 
	 \bigg].
\end{align*}
Moreover, $\kappa^T$ is Lipschitz continuous and $|\kappa^T|_{lip} = (T\beta_T)^{-1}|\kappa|_{lip}$ by Assumption \ref{assumption:kappa}. Thus, we get
\begin{equation*}
\bigg|
\E\bigg[
	\int_A^\infty \big( \kappa(\check q^{a,T,t}_s) - \kappa(\wh q^{a,T}_s) \big)  d \wh N^{a,T}_s 
	 \bigg]
	 \bigg|
\leq
\frac{|\kappa|_{lip}
}{(T\beta_T)^2}
\E\bigg[
	\int_{AT}^\infty |\wb q^{a,T,t}_s - q^{a,T}_s| \lambda^{a,T}_s  ds 
	 \bigg].
\end{equation*}

Denote by $\wt \lambda^T$ the function defined by 
\begin{equation*}
\wt \lambda^T(k) = \inf_{q \in \bbZ} \big\{ \lambda^{L,T}(q) - \lambda^{L,T}(q + k) + \lambda^{C,T}(q + k) - \lambda^{C,T}(q) \big\}.
\end{equation*}
Following the proof of Lemma \ref{lem:expvar}, we see that there exists a Poisson point measure $\wt \pi$ independent of $\lambda^{a,T}$ such that the process $Y^T$ defined for $s \geq t$ by 
\begin{equation*}
Y^T_s = N^o_t - \int_0^t \int_0^\infty \1_{z \leq \wt \lambda^T(Y^T_{u-})} \wt \pi(du, dz)
\end{equation*}
satisfies $|\wb q^{a,T,t}_s - q^{a,T}_s| \leq Y^T_s$ for all $s \geq t$. Therefore, we have
\begin{align*}
\bigg|
\E\bigg[
	\int_A^\infty \big( \kappa(\check q^{a,T,t}_s) - \kappa(\wh q^{a,T}_s) \big)  d \wh N^{a,T}_s 
	 \bigg]
	 \bigg|
&\leq
\frac{|\kappa|_{lip}
}{(T\beta_T)^2}
\E\bigg[
	\int_{AT}^\infty Y^T_s \lambda^{a,T}_s  ds 
\bigg]
\\
&\leq
\frac{|\kappa|_{lip}
}{(T\beta_T)^2}
\E\bigg[
	\int_{AT}^\infty
	\E[Y^T_s]
	\E[\lambda^{a,T}_s] ds
\bigg].
\end{align*}

Using Assumptions \ref{assumption:scaling:lambda} and \ref{assumption:mean_rev}, we know that for $k \geq 0$, we have $\wt \lambda^T(k) \geq c k T^{-1}$ and therefore we have
\begin{equation*}
\E[Y^T_s |  \calF_{tT}^T] \leq 
N^o_t - c T^{-1} \int_{tT}^s \E[Y^T_u |  \calF_{tT}^T] du
\end{equation*}
which implies by Grönwall that 
\begin{equation*}
\E[Y^T_s |  \calF_{tT}^T] \leq N^o_t \exp\big( -cT^{-1} (s - tT)\big).
\end{equation*}
Moreover, it is shown in \cite{jaisson2016rough} that $(\beta_T)^{-1} 	\E[\lambda^{a,T}_s]$ is bounded uniformly in $s$ and $T$. Combining these bounds and using also that $\E[N^{o,T}_{tT}] = T\beta_T \int_0^t f(s) ds$, we see that there exists $C > 0$ such that
\begin{align*}
\bigg|
\E\bigg[
	\int_A^\infty \big( \kappa(\check q^{a,T,t}_s) - \kappa(\wh q^{a,T}_s) \big)  d \wh N^{a,T}_s 
	 \bigg]
	 \bigg|
&\leq
\frac{C}{T^2\beta_T}
\E\bigg[
\int_{AT}^\infty
N^o_t \exp\big( -cT^{-1} (s - tT)\big)
 ds
\bigg]
\\
&\leq
\frac{C}{T\beta_T}
\E\bigg[
\int_{A-t}^\infty
N^{o,T}_{tT} \exp( -cs)
 ds
\bigg]
\\
&\leq
\frac{C}{c} \exp\big(-c(A-t)\big) \int_0^t f(s) ds.
\end{align*}
Clearly, $\exp(-c(A-t)) \to 0$ as $A \to \infty$, which concludes the proof of Lemma \ref{lem:completion:1}.

\subsection{Proof of Lemma \ref{lem:completion:2}}
\label{sec:proof:lem:completion:2}

Following the same approach as for Lemma \ref{lem:completion:2}, we have
\begin{align*}
\bigg|
\E\bigg[
	\int_A^\infty \big( \kappa(\wb q^{a,t}_s) - \kappa(q^{a}_s) \big)  d X_s^{a} 
	 \bigg]
	 \bigg|
&\leq
|\kappa|_{lip}
\E\bigg[
	\int_{A}^\infty |\wb q^{a,t}_s - q^{a}_s| Y^{a}_s  ds 
\bigg].
\end{align*}
First recall that $q^{a}$ and $\wt q^{a,t}$. and we have
\begin{equation*}
\begin{cases}
q^{a\prime}_s
=
\lambda^L(\wb q^{a,t}_s) - \lambda^C(\wb q^{a,t}_s)
- Y^{a}_s,
\\
\wb q^{a,t\prime}_s
=
\lambda^L(q^{a}_s) - \lambda^C(q^{a}_s) 
- Y^{a}_s
+ f(s) \1_{s \leq t}.
\end{cases}
\end{equation*}
Comparison of ordinary differential equations ensure that $\wb q^{a,t\prime}_s \geq q^{a\prime}_s$ for all $s \geq 0$. Moreover, using Assumption \ref{assumption:mean_rev}, we have for all $s \geq 0$
\begin{equation*}
\wb q^{a,t\prime}_s - q^{a\prime}_s \leq -c(\wb q^{a,t}_s - q^{a}_s) + f(s) \1_{s \leq t}.
\end{equation*}
This implies in particular that
\begin{equation*}
|\wb q^{a,t}_t - q^{a}_t| \leq \int_0^t f(s)  ds
\end{equation*}
and by Grönwall's inequality, we have
\begin{equation*}
\wb q^{a,t\prime}_s - q^{a\prime}_s \leq \int_0^t f(s)  ds \exp \big( -c (s-t)\big)
\end{equation*}
for all $s \geq t$.

On the other hand, we have
\begin{equation*}
\E[Y^a_s] = F^{\alpha, \lambda}(s) \leq \int_0^\infty E_{\alpha, \alpha}(-u) du
\end{equation*}
which is finite by definition of the Mittag-Leffler function.

We conclude by combining these inequalities, with the same arguments as for Lemma \ref{lem:completion:1}.

\subsection{Proof of Lemma \ref{lem:completion:3}}
\label{sec:proof:lem:completion:3}

Using Propositions \ref{prop:convergence:jaisson},  \ref{prop:convergence:LOB} and \ref{prop:convergence:LOBMO}, we know that on for all $A \geq 0$,
\begin{equation*}
\check q^{a,T,t} \to \wb q^{a,t},
\quad\quad
\wh q^{a,T} \to q^{a}
\quad\text{ and }\quad
 N^{a,T} \to X^{a}.
\end{equation*}
in distribution, for the Skorokhod topology on $[0,A]$. Without loss of generality, we can assume that this convergence holds almost surely. Since all the limiting processes are continuous, the convergence holds for the sup-norm on $[0,A]$. Using the continuity of the limiting process, we deduce that 
\begin{equation*}
	\int_0^A \big( \kappa(\check q^{a,T,t}_s) - \kappa(\wh q^{a,T}_s) \big)  d \wh N^{a,T}_s 
\to
	\int_0^A \big( \kappa(\wb q^{a,t}_s) - \kappa( q^{a}_s) \big)  d X^{a}_s
\end{equation*}
almost surely. It remains to prove that this also holds in $L^1$. To do so, we plan to prove that $\int_0^A ( \kappa(\check q^{a,T,t}_s) - \kappa(\wh q^{a,T}_s) )  d \wh N^{a,T}_s $ is uniformly integrable by proving that it is bounded in $L^2$. By definition, we have
\begin{align*}
	\int_0^A \big( \kappa(\check q^{a,T,t}_s) - \kappa(\wh q^{a,T}_s) \big)  d \wh N^{a,T}_s 
	=
	&\frac{1}{T\beta_T}
	\int_0^{TA} \big( \kappa^T(\wb q^{a,T,t}_s) - \kappa^T(q^{a,T}_s) \big)  d M^{a,T}_s 
\\&+
	\frac{1}{T\beta_T}
	\int_0^{TA} \big( \kappa^T(\wb q^{a,T,t}_s) - \kappa^T(q^{a,T}_s) \big) \lambda^{a,T}_s  ds
\end{align*}
where $M^{a,T}_s = N^{a,T}_s - \int_0^s \lambda^{a,T}_u  du$. Note that $M^{a,T}$ is a martingale whose quadratic variation is $N^{a, T}$. Therefore, using also that $\kappa^T$ is positive and bounded uniformly in $T$ by $\norm{\kappa}_\infty$, we have
\begin{equation*}
	\E\bigg[\bigg(\int_0^A \big( \kappa(\check q^{a,T,t}_s) - \kappa(\wh q^{a,T}_s) \big)  d \wh N^{a,T}_s \bigg)^2 \bigg]
	\leq
	\frac{2 \norm{\kappa}_\infty }{(T\beta_T)^2}
\bigg(	
	\E[N^{a,T}_{TA}] 
	+ 
	\E\bigg[\bigg(\int_0^{TA} \lambda^{a,T}_s  ds \bigg)^2 \bigg]
\bigg).
\end{equation*}
We already know that $\E[N^{a,T}_{TA}]$ is of order $T\beta_T$ and we have
\begin{equation*}
\E\bigg[\bigg(\int_0^{TA} \lambda^{a,T}_s  ds \bigg)^2 \bigg]
\leq T^2 A^2 \sup_{0 \leq s \leq AT} \E[(\lambda^{a,T}_s)^2].
\end{equation*}
Using Lemma \ref{lemma:m2}, we see that $ \sup_{0 \leq s \leq AT} \E[(\lambda^{a,T}_s)^2]$ is of order $(\beta_T)^2$, which concludes the proof.

\end{document}